\newtheorem{theorem}{Theorem}[section]
\newtheorem{proposition}[theorem]{Proposition}
\theoremstyle{definition}
\newtheorem{definition}[theorem]{Definition}
\newtheorem{claim}{Claim}
\definecolor{myblue}{RGB}{0,82,147}
\definecolor{myorange}{RGB}{227,114,34}
\definecolor{mygreen}{RGB}{162,173,0}
\definecolor{mygray}{HTML}{909090}
\tikzset{ 
protovertex/.style={
  draw,
  circle,
  inner sep=0,
  minimum size=.2cm}
}
\tikzset{ 
bprotovertex/.style={
  draw = myblue,
  fill = myblue,
  circle,
  inner sep=0,
  minimum size=.2cm}
}
\tikzset{ 
gprotovertex/.style={
  draw = mygreen,
  fill = mygreen,
  circle,
  inner sep=0,
  minimum size=.2cm}
}
\tikzset{>={Latex[width=1.8mm,length=1.8mm]}}
\newcommand{\tdg}{TDG\xspace}
\newcommand{\asgnm}{\lambda\xspace}
\newcommand{\partasgnm}{\nu\xspace}
\title{Topological Distance Games\footnote{A preliminary version of this paper appears in Proceedings of the 37th AAAI Conference on Artificial Intelligence \citep{BuSu23a}.
Most of this research was done when the first author was a PhD student at Technical University of Munich.}}
\author{
Martin Bullinger\\University of Oxford
\and
Warut Suksompong\\National University of Singapore
}
\date{\vspace{-5ex}}
\begin{document}

\maketitle

\begin{abstract}
We introduce a class of strategic games in which agents are assigned to nodes of a topology graph and the utility of an agent depends on both the agent's inherent utilities for other agents as well as her distance from these agents on the topology graph.
This model of \emph{topological distance games (TDGs)} offers an appealing combination of important aspects of several prominent settings in coalition formation, including (additively separable) hedonic games, social distance games, and Schelling games.
We study the existence and complexity of stable outcomes in TDGs---for instance, while a jump stable assignment may not exist in general, we show that the existence is guaranteed in several special cases.
We also investigate the dynamics induced by performing beneficial jumps.
\newline

\end{abstract}

\section{Introduction}

You arrive at a hotel for your organization's annual banquet, and some of the seats at the tables have already been taken.
You would like to sit close to your friends who work in the same team or share similar hobbies.
On the other hand, you want to stay away from colleagues whom you had unpleasant interactions with lately.
Which seat should you take?
Once everyone has picked a seat, would you regret not having chosen a different seat?
Similar issues arise when assigning faculty members to offices in the department building, students to desks in a classroom, or employees to cottages at a company retreat.

Recently, \citet{BMM22a} introduced the model of \emph{hedonic games with fixed-size coalitions}, wherein the agents are to be partitioned into coalitions whose sizes have been determined in advance, for example, by the sizes of the tables at the banquet.
They assumed \emph{additively separable} utilities, meaning that the utility of an agent for a coalition is the sum of her utilities for the individual agents in her coalition.
While their model partially captures some of the aforementioned scenarios---for instance, each table at the banquet can be considered as one coalition---it neglects an important aspect common in such scenarios: the agents are typically assigned to specific locations, and agents prefer to be located close to their friends and far from their enemies.
In our banquet scenario, a person sitting next to you has a higher influence on your utility than someone at the opposite end of the table, as you are much more likely to engage in a conversation with the former person than the latter.
Similarly, you will in all likelihood run into your office neighbor more frequently than you encounter your colleague at the other end of the corridor.

With these motivating examples in mind, we introduce a class of games that we call \emph{topological distance games (TDGs)}.
An instance of TDG contains a topology graph, which is an undirected graph that specifies the locations to which the agents can be assigned.
The influence that an agent~$i$ has on another agent~$j$ is $j$'s inherent utility for~$i$ scaled by a factor depending on the distance between the two agents on the topology graph; if the two agents are not connected on the graph, they have no influence on each other.
TDGs combine important aspects of several well-studied coalition formation settings, including hedonic games, social distance games, and Schelling games---we discuss these connections in detail in \Cref{sec:related}.
We sometimes assume that the scaling factor is the reciprocal of the distance between the two agents, but most of our results also hold for arbitrary (strictly decreasing) distance factor functions.
Following additively separable hedonic games (ASHGs), we then take the utility of an agent for an assignment to be the sum of all other agents' influences on the agent in question.
Our formal model is described in \Cref{sec:prelim}.

\subsection{Our Results}

We study a fundamental notion of stability in our setting---\emph{jump stability}---which requires that no agent would rather jump to some empty node than stay at her current node.
In \Cref{sec:symmetric}, we warm up by considering the case where agents' utilities are symmetric, that is, for any pair of agents $i$ and $j$, $i$'s inherent utility for $j$ is the same as $j$'s inherent utility for $i$.
We show that for any distance factor function, there exists a jump stable assignment; on the other hand, finding such an assignment is a \PLS-complete problem.

In \Cref{sec:asymmetric}, we investigate the more general setting where the utilities are not necessarily symmetric.
We observe that a jump stable assignment may no longer exist, even when there are only two agents.
On the other hand, if utilities are non-negative and the friendship graph\footnote{That is, the directed graph indicating (ordered) pairs of agents $i,j$ such that $i$'s inherent utility for $j$ is positive.} is acyclic, then existence is guaranteed.
We then focus on the case where the topology graph is a cycle and
every vertex in the friendship graph has out-degree at most $1$.
In this case, we characterize the friendship graphs for which the resulting instance admits a jump stable assignment, and present an efficient algorithm for computing such an assignment for those graphs.
We also provide existence and non-existence results when the topology graph is a path or an (extended) star, and show that deciding the existence is \NP-hard for the \emph{reciprocal distance factor function}, whereby the scaling factor is the reciprocal of the distance between the two agents.

Lastly, in \Cref{sec:dynamics}, we explore dynamical aspects of TDGs.
If utilities are non-negative and the
friendship graph is acyclic,
we show that the jump dynamics is guaranteed to converge; however, even under these restrictions, the dynamics may run for an exponential number of steps.
In addition, we establish the \NP-hardness of deciding if the dynamics can possibly converge, or if it necessarily converges.

\subsection{Related Work}
\label{sec:related}

The model of TDGs shares certain similarities with a number of existing models, and therefore offers an appealing combination of important aspects of several prominent settings in coalition formation.
\Cref{fig:enter-label} illustrates some relations between TDGs and other settings.

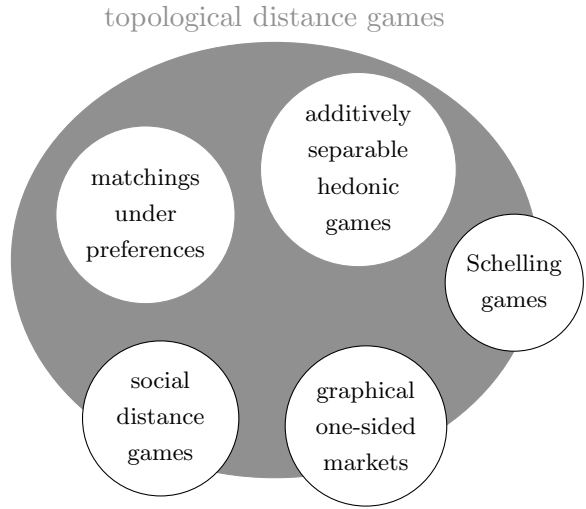
\begin{figure}
    \centering
\begin{tikzpicture}
	\draw[fill, mygray, draw = mygray] (0,0) ellipse (9em and 7.5em);
	\node[mygray] at (0,3.2) {{topological distance games}};
    \node[align = center,draw = mygray,circle, fill = white] at (-1.7,.6) {\footnotesize{matchings}\\ \footnotesize{under}\\ \footnotesize{preferences}};
    \node[align = center,draw = mygray,circle, fill = white] at (1.1,1.2) {\footnotesize{additively}\\ \footnotesize{separable}\\\footnotesize{hedonic}\\ \footnotesize{games}};
    \node[align = center,draw,circle, fill = white] at (1.2,-2.2) {\footnotesize{graphical}\\ \footnotesize{one-sided}\\ \footnotesize{markets}};
	\node[align = center,draw,circle, fill = white] at (-1.5,-2.1) {\footnotesize{social}\\ \footnotesize{distance}\\ \footnotesize{games}};
	\node[align = center,draw,circle, fill = white] at (3.15,-.3) {\footnotesize{Schelling}\\ \footnotesize{games}};
\end{tikzpicture}
    \caption{Topological distance games in relation to other cooperative games. Additively separable hedonic games and matchings under preferences are subclasses. Social distance games share the idea of a distance-based utility decay, and graphical one-sided markets and Schelling games also assume the existence of a topology graph.}
    \label{fig:enter-label}
\end{figure}

Firstly, TDGs are similar to the aforementioned \emph{hedonic games with fixed-size coalitions} \citep{BMM22a} in that one could view each connected component of the topology graph as a coalition of fixed size.
In formal terms, hedonic games with fixed-size coalitions form a subclass of TDGs where every connected component is a clique.
The main difference between TDGs and hedonic games in general \citep{AzSa16a} is that hedonic games do not come with a topology graph, so only the partition of the agents into coalitions matters, whereas in TDGs the distances resulting from the assignment of agents to the topology graph can affect the level of influence that the agents have on one another.
Note that additively separable utilities are commonly studied in hedonic games \citep{BoJa02a,ABS13a}.
In fact, ASHGs can be viewed as a special case of TDGs where the topology graph consists of $n$ cliques of size $n$ each ($n$ denotes the number of agents).

Secondly, \citet{BrLa11a} proposed the class of \emph{social distance games}, wherein there is a social network that captures the connections among agents.
The agents are again partitioned into coalitions, but now the utility of an agent for a coalition is the average, over all agents in the coalition, of the reciprocals of the distance to each agent in the coalition.
Here, the distance is taken with respect to the subgraph of the social network induced by the coalition, and the distance of an agent to herself is not considered in this calculation.
Like in hedonic games, there is no topology graph in social distance games.
\citet{FKOV21a} introduced \emph{distance hedonic games}, which generalize social distance games by allowing the distance function to be arbitrary rather than specifically the reciprocal function.
\citet{ReRe22a} considered a distance-based approach for extending the agents' preferences over neighbors to preferences over coalitions in a subclass of hedonic games.

Thirdly, \citet{MaSi19a} studied \emph{graphical one-sided markets}, which assume the existence of both an agent graph and a topology graph.
Agents are placed on the topology graph, and the utility of an agent for the placement is the sum of the agent's utility for her assigned node and her utilities for her neighbors on the topology graph, where the latter utilities are taken from the agent graph.
While our TDG model is more restrictive from the point of view that it does not allow agents to derive utilities from nodes in the topology graph, it is more general in that an agent's utility depends not only on neighboring agents on the topology graph, but also on agents further away.
\citet{EPTZ20a} investigated a similar model as \citet{MaSi19a} from the truthfulness perspective, while \citet{BHJO20a} considered the special case where an agent's utility depends on her neighbors but not on her assigned node.

Fourthly, a recent stream of work on \emph{Schelling games} also deals with a topology graph, and an agent's utility depends on the neighboring agents on the graph \citep{CLM18a,EFLM19a,AEGI21a,BSV21a}.
However, in that model, the agents have predetermined types and the utility of an agent is defined as the \emph{fraction} of neighboring agents of the same type.
Note that jump stability is commonly studied in Schelling games as well.

Finally, \emph{matchings under preferences} \citep{GaSh62a} can also be modelled in the framework of TDGs if we take a topology graph that consists of disjoint edges (and therefore has many connected components). However, the classical notion of stability for matching instances would correspond to group deviations by two agents. 

\section{Preliminaries}\label{sec:prelim}

Let $N = [n]$ be the set of \emph{agents}, where $[k] := \{1,2,\dots,k\}$ for each positive integer~$k$.
Each agent~$i\in N$ is endowed with an \emph{(inherent) utility function} $u_i\colon N\to\mathbb{R}$, which specifies the inherent utility that $i$ has for every other agent; we assume that $u_i(i) = 0$ for all $i$.
A utility function $u_i$ is
\begin{itemize}
\item \emph{symmetric} if $u_i(j) = u_j(i)$ for all $i,j\in N$, and
\item \emph{binary} if $u_i(j)\in\{0,1\}$ for all $i,j\in N$.
\end{itemize}
We say that agent~$j$ is a \emph{friend} of agent~$i$ if $u_i(j) > 0$; the \emph{friendship graph}\footnote{Friendship graphs have been studied in several papers on hedonic games \citep{IOSY19a,KLRR20a,BuKo21a}.} is a directed graph with the set of nodes $N$ such that there is an edge from $i$ to $j$ if and only if $u_i(j) > 0$.

There is a \emph{topology graph} $G = (V, E)$, which is a simple (not necessarily connected) undirected graph with at least $n$~nodes.
An \emph{assignment} $\asgnm\colon N\to V$ is an injective mapping that assigns each agent to a node in $V$, i.e., each node can be occupied by at most one agent.
For $N'\subseteq N$, let $\asgnm(N') := \left\{\asgnm(i)\mid i\in N'\right\}$.
A node $v\in V$ is called \emph{empty} with respect to an assignment $\asgnm$ if $v\notin \asgnm(N)$.

The \emph{distance factor function} $f\colon \mathbb{Z}_{\ge 1}\to \mathbb{R}_{> 0}$ is a strictly decreasing function which determines the level of influence that an agent has on another agent depending on the distance between them, where this distance is the length of the shortest path between their assigned nodes in $G$.
If the two agents are assigned to different connected components of $G$, then the distance factor between them is taken to be $0$, meaning that they have no influence on each other's utilities.
The \emph{reciprocal distance factor function} refers to the function\footnote{As discussed in \Cref{sec:related}, a similar idea involving reciprocals of the distance has been used in social distance games \citep{BrLa11a}.} $f(k) = 1/k$.
We abuse notation slightly by extending utility functions~$u_i$ to assignments.
In particular, the utility of agent~$i$ for assignment~$\asgnm$ is
\[
u_i(\asgnm) := \sum_{j\in N\setminus\{i\}} f(d_G(\asgnm(i), \asgnm(j)))\cdot u_i(j),
\]
where $d_G(v,v')$ denotes the length of a shortest path between $v$ and $v'$ in~$G$.
A \emph{topological distance game (TDG)} consists of the agents and their utility functions, the topology graph, and the distance factor function.

Given an assignment $\asgnm$, an empty node $v\in V$, and an agent $i\in N$, denote by $\asgnm^{i\to v}$ the assignment that results when $i$ jumps from her assigned node $\asgnm(i)$ to $v$.
We now define the main stability notion that we study in this paper.
\begin{definition}
Given an instance of TDG and an assignment~$\asgnm$, a jump by agent~$i$ to an empty node~$v$ is a \emph{beneficial jump} in~$\asgnm$ if $u_i(\asgnm) < u_i(\asgnm^{i\to v})$.

The assignment~$\asgnm$ is said to be \emph{jump stable} if no agent has a beneficial jump, that is, for each agent $i\in N$ and each empty node $v\in V$, it holds that $u_i(\asgnm) \ge u_i(\asgnm^{i\to v})$.

\end{definition}
Since we consider jump stability, we assume without loss of generality that $|V| > n$, as every assignment is trivially jump stable if $|V| = n$.

\section{Warm-Up: Symmetric Utilities}
\label{sec:symmetric}

We begin by deriving preliminary results for the case of symmetric utilities.
First, by a potential function argument, we can show the existence of a jump stable assignment in this case. 
This idea is common in the literature on (additively separable) hedonic games---it was first used by \citet{BoJa02a}, and applied again by
\citet{Suks15a} and \citet{BMM22a}.

\begin{theorem}
\label{thm:symmetric-jump-existence}
For any distance factor function and symmetric utilities, there exists a jump stable assignment.
\end{theorem}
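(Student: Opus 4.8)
The plan is to construct a \emph{potential function} whose value strictly increases with every beneficial jump, and then argue that a maximizer of this function is jump stable. Since the set of assignments is finite (there are at most $|V|!/(|V|-n)!$ injective mappings), any potential that strictly increases along beneficial jumps cannot increase forever, so a maximizer exists and must be a local optimum with no beneficial jump available.

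The natural candidate is the \emph{social welfare} (up to a factor of $1/2$), namely
\[
\Phi(\asgnm) := \frac{1}{2}\sum_{i\in N} u_i(\asgnm) = \frac{1}{2}\sum_{i\in N}\sum_{j\in N\setminus\{i\}} f(d_G(\asgnm(i),\asgnm(j)))\cdot u_i(j).
\]
The key observation is that symmetry, $u_i(j) = u_j(i)$, together with the fact that $f(d_G(\asgnm(i),\asgnm(j)))$ is itself symmetric in $i$ and $j$, means each unordered pair $\{i,j\}$ contributes the same amount $f(d_G(\asgnm(i),\asgnm(j)))\cdot u_i(j)$ to both $u_i(\asgnm)$ and $u_j(\asgnm)$. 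Hence $\Phi(\asgnm)$ is exactly the sum over unordered pairs of these symmetric contributions.

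The main step is to verify the \textbf{potential property}: if agent~$i$ performs a beneficial jump from $\asgnm(i)$ to an empty node $v$, yielding $\asgnm^{i\to v}$, then $\Phi(\asgnm^{i\to v}) > \Phi(\asgnm)$. When $i$ jumps, the only pairwise contributions that change are those involving $i$, since the positions of all other agents (and hence all distances among them) are unaffected. For each other agent $j$, the pair $\{i,j\}$ contributes $f(d_G(\asgnm(i),\asgnm(j)))\cdot u_i(j)$ before and $f(d_G(v,\asgnm(j)))\cdot u_i(j)$ after the jump. Summing these changes over all $j\in N\setminus\{i\}$ gives exactly the difference $u_i(\asgnm^{i\to v}) - u_i(\asgnm)$, which is positive precisely because the jump is beneficial. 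Here the symmetry is essential: it ensures the change in the $i$--$j$ contribution to $\Phi$ equals the full change in the $i$--$j$ term of $u_i$, rather than only half of it, so that $\Phi(\asgnm^{i\to v}) - \Phi(\asgnm) = u_i(\asgnm^{i\to v}) - u_i(\asgnm) > 0$.

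The part requiring the most care is this bookkeeping of which terms change and confirming the factor of $\tfrac12$ cancels correctly. Concretely, I would split $\Phi$ into the terms involving $i$ and those not involving $i$; the latter are unchanged by the jump, and the former telescope to $u_i$ evaluated at the two assignments. Once the potential property is established, the conclusion is immediate: take $\asgnm^*$ to be any assignment maximizing $\Phi$ over the finite set of injective assignments. If $\asgnm^*$ admitted a beneficial jump, the potential property would produce an assignment with strictly larger $\Phi$, contradicting maximality. Therefore $\asgnm^*$ is jump stable, which completes the argument. Note that no assumption on $f$ beyond its being well defined is needed, so the result holds for every distance factor function.
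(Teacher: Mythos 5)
Your proposal is correct and takes essentially the same approach as the paper: the paper uses the potential $\Phi(\asgnm)=\sum_{i\in N}u_i(\asgnm)$ and shows a beneficial jump by $i$ increases it by $2\bigl(u_i(\asgnm^{i\to v})-u_i(\asgnm)\bigr)$, which is identical to your argument up to the cosmetic factor of $\tfrac12$. Your pairwise-contribution bookkeeping matches the paper's symmetry computation exactly.
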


\begin{proof}
Consider an assignment maximizing the potential function $\Phi(\asgnm) := \sum_{i\in N}u_i(\asgnm)$; such an assignment must exist because the number of possible assignments is finite.
Assume for contradiction that some agent~$i^*$ prefers to jump from her current node~$v$ to an empty node~$w$.
We have 
\begin{align}
u_{i^*}(\asgnm) < u_{i^*}(\asgnm^{i^*\to w}), 
\label{eq:potential-jump}
\end{align}
where we know that
\[
u_{i^*}(\asgnm) = \sum_{j\in N\setminus\{i^*\}} f(d_G(v, \asgnm(j)))\cdot u_{i^*}(j)
\]
and
\[
u_{i^*}(\asgnm^{i^*\to w}) = \sum_{j\in N\setminus\{i^*\}} f(d_G(w, \asgnm(j)))\cdot u_{i^*}(j).
\]
Now, if $i^*$ jumps to~$w$, the potential function changes by 
\begin{align*}
 \Phi(\asgnm^{i^*\to w}) - \Phi(\asgnm) &= \sum_{i\in N}u_i(\asgnm^{i^*\to w}) - \sum_{i\in N}u_i(\asgnm) \\
&= \left[u_{i^*}(\asgnm^{i^*\to w}) + \sum_{j\in N\setminus\{i^*\}} \left( f(d_G(\asgnm(j), w))\cdot u_j(i^*) \right) \right] \\
&\qquad - \left[u_{i^*}(\asgnm) + \sum_{j\in N\setminus\{i^*\}} \left( f(d_G(\asgnm(j), v))\cdot u_j(i^*) \right) \right] \\
&= \left[u_{i^*}(\asgnm^{i^*\to w}) + \sum_{j\in N\setminus\{i^*\}} \left( f(d_G(w,\asgnm(j)))\cdot u_{i^*}(j) \right) \right] \\
&\qquad - \left[u_{i^*}(\asgnm) + \sum_{j\in N\setminus\{i^*\}} \left( f(d_G(v,\asgnm(j)))\cdot u_{i^*}(j) \right) \right] \\ 
&= 2\cdot \left( u_{i^*}(\asgnm^{i^*\to w}) - u_{i^*}(\asgnm) \right) > 0,
\end{align*}
where we use the symmetry of the utilities for the second equality and \eqref{eq:potential-jump} for the inequality. 
This contradicts the assumption that $\asgnm$ maximizes the potential function $\Phi$.
\end{proof}

Despite its guaranteed existence, a jump stable assignment can be difficult to compute.
Our reduction is from a local variant of \textsc{Max Cut}.

\begin{restatable}{theorem}{symmetricjumpPLS}
\label{thm:symmetric-jump-PLS}
For any distance factor function and symmetric utilities, finding a jump stable assignment is \PLS-complete.
\end{restatable}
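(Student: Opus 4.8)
The plan is to prove both membership in \PLS\ and \PLS-hardness, with hardness via a reduction from the flip-neighborhood version of weighted \textsc{Max Cut}, which is known to be \PLS-complete.

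For membership, I would reuse the potential function $\Phi(\asgnm) = \sum_{i\in N} u_i(\asgnm)$ from \Cref{thm:symmetric-jump-existence}. The computation in that proof shows that a jump by an agent is beneficial precisely when it increases $\Phi$ (by exactly twice the agent's own gain). Hence, taking assignments as the feasible solutions, single-agent jumps to empty nodes as the neighborhood, and $\Phi$ as the objective, the local optima coincide exactly with the jump stable assignments. An initial assignment, the value $\Phi(\asgnm)$, and a beneficial jump (or a certificate that none exists) can all be computed in polynomial time, placing the problem in \PLS.

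For hardness, I would take a \textsc{Max Cut} instance $H = (V_H, E_H)$ with positive edge weights $w$ and construct a TDG with one agent per vertex of $H$, symmetric inherent utilities $u_u(v) = u_v(u) = -w(uv)$ for each edge $uv\in E_H$ (and $0$ for non-edges), and a topology graph consisting of two disjoint cliques, each on $|V_H|$ nodes. The two cliques represent the two sides of the cut: an assignment places some agents in the first clique and the rest in the second, and I read off the cut $(S_1,S_2)$ accordingly. Since agents in different cliques lie in different connected components, their distance factor is $0$, while two agents in the same clique are always at distance $1$; thus an agent's utility equals $-f(1)$ times the total weight of its edges to agents on its own side. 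Moving an agent to the other clique is therefore beneficial exactly when its same-side edge weight exceeds its opposite-side edge weight, which is precisely the condition under which flipping that vertex increases the cut.

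The step that requires the most care is verifying that the two neighborhood structures match, so that local optima correspond. Two points need attention. First, a jump within a single clique never changes utility, as all intra-clique distances equal $1$; such jumps are never beneficial and can be ignored, leaving the inter-clique moves, which are exactly the vertex flips. Second, every flip must be realizable as a jump: since each clique has $|V_H|$ nodes and the jumping agent occupies one of them, the target clique always contains an empty node, so any flip can be performed. With these in place, an assignment is jump stable if and only if the corresponding cut admits no improving flip, yielding a \PLS-reduction from local \textsc{Max Cut}. Finally, note that the construction only ever uses the single value $f(1) > 0$ together with the zero factor across components, so the argument is uniform over all distance factor functions.
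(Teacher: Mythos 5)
Your proposal is correct and follows essentially the same route as the paper: membership via the potential $\Phi(\asgnm)=\sum_{i\in N}u_i(\asgnm)$, and hardness via a reduction from \textsc{Max Cut} under the \textsc{Flip}-neighborhood using one agent per vertex, negated edge weights as symmetric utilities, and a topology graph consisting of two disjoint cliques of size $|V_H|$ each. The only cosmetic difference is that you argue the flip/jump correspondence directly through each agent's own utility ($-f(1)$ times its same-side edge weight) rather than through the change in $\Phi$, but this yields the same conclusion.
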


\begin{proof}
Membership in \PLS\ is clear, as we can consider the potential $\Phi(\asgnm) = \sum_{i\in N}u_i(\asgnm)$ from the proof of \Cref{thm:symmetric-jump-existence} as the measure to be maximized. Checking for a local improvement only requires the inspection of a polynomial number of jumps.

For \PLS-hardness, we provide a reduction from the \PLS-complete problem \textsc{Max Cut} under the \textsc{Flip}-neighborhood \citep{ScYa91a}. 
An instance of \textsc{Max Cut} consists of a complete and undirected weighted graph.
A solution is a partition of the vertices into two subsets, where the \emph{cut} between the two subsets, i.e., the total weight of the edges between vertices of the two subsets, is to be (locally) maximized. 
Given a solution of an instance of \textsc{Max Cut}, its \textsc{Flip}-neighborhood contains all partitions in which exactly one vertex changes the subset to which it belongs.

Consider an instance $H = (V_P,E_P,w)$ of \textsc{Max Cut} where $V_P$, $E_P$, and $w\colon E_P\to \mathbb R$ are the vertex set, the edge set, and the weight function, respectively. 
Let $t := |V_P|$.
We define the reduced \tdg as follows. 
Let $N = \{\alpha_v\colon v\in V_P\}$ be the set of agents. 
The utility functions are given by $u_{\alpha_x}(\alpha_y) = - w(\{x,y\})$, where $x,y\in V_P$; clearly, this defines a symmetric utility function. 
The topology graph $G = (V,E)$ is given by $V = A\cup B$ where $A = \{a_i\colon i\in [t]\}$, $B = \{b_i\colon i\in [t]\}$, and $E = \{\{a_i,a_j\},\{b_i,b_j\}\colon 1\le i < j\le t\}$. 
In other words, the topology graph consists of two cliques of size $t$.

Now, every assignment $\asgnm$ induces the $2$-partition $P_{\asgnm} = (A_P, B_P)$ with $A_P = \{x\in V_P\colon \alpha_x\in \asgnm^{-1}(A)\}$ and $B_P = \{x\in V_P\colon \alpha_x \in \asgnm^{-1}(B)\}$.

Consider a vertex $x\in A_P$ and an empty node $v\in B$ with respect to $\asgnm$.
The change in the weight of the cut when $x$ changes its partition class, that is, the change in weight when going from the cut induced by $P_{\asgnm}$ to the cut induced by $P_{\asgnm^{\alpha_x\to v}}$, is exactly 
\begin{align*}
    \sum_{y\in A_P\setminus \{x\}} w(x,y) &- \sum_{y\in B_P} w(x,y)
    = -u_{\alpha_x}(\asgnm) - (- u_{\alpha_x}(\asgnm^{\alpha_x\to v}))\text.
\end{align*}
A similar computation holds if $x\in B_P$.

Hence, by the computations in \Cref{thm:symmetric-jump-existence}, the change in the value of a cut after $x\in V_P$ switches its subset in $P_{\asgnm}$ is exactly half of the change of the potential $\Phi$ when going from $\asgnm$ to $\asgnm^{\alpha_x\rightarrow v}$, where $v$ is a node in the component different from $\asgnm(\alpha_x)$'s component of $G$.
Moreover, $\Phi$ does not change when $\alpha_x$ jumps to another node in the same component.
Therefore, $P_{\asgnm}$ is locally optimal if and only if $\asgnm$ is.
\end{proof}

\section{Asymmetric Utilities}
\label{sec:asymmetric}

We now consider the more general setting where the agents' inherent utilities for each other are not necessarily symmetric.
First, we observe that a jump stable assignment may no longer exist, even when there are only two agents.

\begin{proposition}
\label{prop:cat-and-mouse}
Let $G$ be a connected graph of diameter at least $3$.
For any distance factor function, there exists an instance with topology graph $G$ and two agents such that no jump stable allocation exists.
\end{proposition}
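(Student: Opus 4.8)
The plan is to realize a ``cat and mouse'' dynamic with two agents whose utilities are asymmetric. Label the agents $1$ and $2$, and set $u_1(2) > 0$ while $u_2(1) < 0$ (for concreteness, $u_1(2) = 1$ and $u_2(1) = -1$); all remaining inherent utilities are forced to be zero. Because the distance factor function $f$ is strictly decreasing and positive, agent~$1$ (the ``cat'') strictly prefers to be placed closer to agent~$2$, whereas agent~$2$ (the ``mouse'') strictly prefers to be placed farther from agent~$1$. Since $G$ is connected, the two agents always lie in the same component, so the distance factor is never forced to $0$ and neither preference becomes trivial. The goal is to show that no assignment can satisfy both agents simultaneously, so that a beneficial jump always exists.

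A convenient simplification is that with only two agents, every node other than the two occupied ones is empty, which makes the available jump targets easy to control. Writing $d$ for the distance between the two occupied nodes, I would first dispose of the case $d \ge 2$, where the cat can improve. Indeed, since $G$ has at least three nodes and is connected, the mouse's node has a neighbor; because $d \ge 2$ the cat is not one of these neighbors, so some neighbor of the mouse is empty. Jumping there brings the cat to distance~$1$, strictly increasing her utility. Hence in any jump stable assignment we must have $d = 1$.

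It then remains to rule out $d = 1$, and this is where the diameter hypothesis does the real work. When $d = 1$ the cat already sits at the minimum possible distance and cannot improve, so stability would require that the mouse be unable to improve as well, i.e.\ that no empty node lie at distance at least~$2$ from the cat. Equivalently, the cat would have to occupy a vertex of eccentricity~$1$, that is, a universal vertex adjacent to all others. The key point is that a graph possessing a universal vertex has diameter at most~$2$; since the diameter of $G$ is at least~$3$, no universal vertex exists, so the cat's vertex has eccentricity at least~$2$. Thus some vertex $v$ lies at distance at least~$2$ from the cat; this $v$ equals neither the cat's node nor the mouse's node (the latter being at distance~$1$), hence is empty, and the mouse strictly benefits from jumping to it. Combining the two cases shows that every assignment admits a beneficial jump.

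The main obstacle---and the only place where assuming diameter at least~$3$ rather than merely~$2$ is essential---is the $d = 1$ case: one must guarantee that the mouse can always escape to distance at least~$2$, which fails precisely when the cat sits on a universal vertex (as in a star). Ruling this out via the diameter bound is the crux; the remainder is a routine verification that the proposed jumps land on empty nodes and strictly improve the jumping agent's utility.
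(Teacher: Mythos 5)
Your proof is correct and uses the same instance and the same case split on the distance $d$ between the two agents; the difference lies in how the case $d=1$ is closed. The paper picks two specific nodes $v,w$ with $d_G(v,w)\ge 3$ and uses the triangle inequality to argue that at least one of them is at distance at least $2$ from agent~$1$ (and hence is empty and a beneficial escape for agent~$2$). You instead observe that the mouse is trapped only if the cat sits on a vertex of eccentricity~$1$, i.e., a universal vertex, and that a universal vertex forces diameter at most~$2$. Both arguments are sound, but yours is slightly more general: it establishes non-existence for every connected topology graph with no universal vertex, which includes some diameter-$2$ graphs such as the $4$-cycle and $5$-cycle. This dovetails nicely with the paper's remark that the star (diameter~$2$, with a universal vertex) always admits a jump stable assignment: your argument makes clear that the obstruction is really the absence of a universal vertex rather than the diameter bound itself, whereas the paper's proof, as written, only applies under the diameter-$3$ hypothesis.
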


\begin{proof}
Let $n = 2$, $u_1(2) = 1$, and $u_2(1) = -1$, and consider any assignment~$\asgnm$. 
If $\mathit{d_G}(\asgnm(1),\asgnm(2))\ge 2$, then agent~$1$ would jump to a neighboring node of agent~$2$. 
Else, $\mathit{d_G}(\asgnm(1),\asgnm(2)) = 1$.
In this case, consider two nodes $v,w\in V$ with $\mathit{d_G}(v,w)\ge 3$. 
It must be that $\mathit{d_G}(\asgnm(1),v)\ge 2$ or $\mathit{d_G}(\asgnm(1),w)\ge 2$.
If the former holds, then $v\ne \asgnm(2)$, and agent~$2$ has a beneficial jump to $v$; an analogous argument holds in the latter case with $w$ instead of $v$.
\end{proof}

\Cref{prop:cat-and-mouse} does not hold if we lower the diameter threshold to $2$: for a star graph and any number of agents, one can check that there is always a jump stable assignment.

As we will see later, a jump stable assignment may not exist even if utilities are non-negative and the friendship graph is a cycle.
We show next that the existence of such an assignment is guaranteed under non-negative utilities when the friendship graph is acyclic; this result will also be useful for our characterization when the friendship graph is a cycle or, more generally, every vertex in the friendship graph has out-degree at most $1$ (\Cref{thm:cycle-in-cycle}).
We remark that acyclic friendship graphs can model situations in which there is a hierarchy among the agents: for instance, at research conferences, younger researchers may be keen to interact with certain older ones who might help their career, but not vice versa.

\begin{theorem}\label{thm:DAGassignment}
    For any distance factor function, if the friendship graph is acyclic and utilities are non-negative, then a jump stable assignment exists and can be computed in polynomial time.
\end{theorem}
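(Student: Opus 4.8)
The plan is to build a jump stable assignment greedily, placing agents one at a time, using the acyclicity of the friendship graph to order the placements and non-negativity to ensure that each agent's utility depends only on agents placed before her.

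First I would observe that, because utilities are non-negative, the only agents that influence $u_i(\asgnm)$ are the out-neighbors of $i$ in the friendship graph (the friends $j$ with $u_i(j) > 0$); any agent $j$ with $u_i(j) = 0$ contributes nothing to $u_i$ regardless of her position. Since the friendship graph is acyclic, it admits a reverse topological ordering $i_1, i_2, \dots, i_n$ of the agents, meaning that whenever $u_i(j) > 0$, the friend $j$ precedes $i$ in the ordering. Hence when an agent is processed, all of her friends have already been processed.

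Next I would place the agents in this order: having placed $i_1, \dots, i_{k-1}$, I put $i_k$ on an empty node maximizing $u_{i_k}$ given the current placement (such a node exists since $|V| > n$, and it is found by evaluating $u_{i_k}$ at each empty node, each evaluation requiring shortest-path distances in $G$). Because all of $i_k$'s friends lie among $i_1, \dots, i_{k-1}$, and once an agent is placed she is never moved, the positions of $i_k$'s friends are already fixed at the moment $i_k$ is placed and remain unchanged in the final assignment $\asgnm$. Consequently the utility $i_k$ would derive at any fixed node is the same at placement time as in $\asgnm$.

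The crux of the stability argument is a monotonicity observation: the set of empty nodes only shrinks as the procedure proceeds, so the nodes empty in the final assignment form a subset of the nodes that were empty when $i_k$ was placed. Since $i_k$ was placed at a node maximizing her (now unchanging) utility over the larger set of then-empty nodes, she cannot strictly improve by jumping to any node empty in $\asgnm$; thus no agent has a beneficial jump and $\asgnm$ is jump stable. The whole procedure runs in polynomial time, comprising a topological sort plus, for each of the $n$ agents, a scan over at most $|V|$ empty nodes with one shortest-path computation each. I expect the main point requiring care—rather than a genuine obstacle—to be the bookkeeping in this final step: arguing that an agent's utility at each candidate node is frozen once her friends are placed, and that the final empty set is contained in the empty set available at her placement step. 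Non-negativity is precisely what makes this work, as it guarantees that no agent placed later can ever affect an earlier agent's utility.
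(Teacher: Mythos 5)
Your proposal is correct and follows essentially the same approach as the paper: place agents greedily in a topological order of the acyclic friendship graph (so that each agent's friends are already placed), choose for each agent an empty node maximizing her utility, and observe that her utility is frozen thereafter while the set of empty nodes only shrinks, so no beneficial jump is available in the final assignment. This matches the paper's Algorithm and its correctness argument step for step.
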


\begin{proof}
    Assume that the friendship graph is acyclic and utilities are non-negative. 
    There exists a \emph{topological order} $\pi$ of the agents in $N$ \citep{Kahn62a}, i.e., a function $\pi\colon N\to [n]$ such that $\pi(i) > \pi(j)$ whenever $u_i(j) > 0$. Based on this order, we provide a polynomial-time algorithm that computes a jump stable assignment. 
    Given a subset of agents $M\subseteq N$, a partial assignment $\asgnm\colon M \to V$, an agent $i\in N\setminus M$, and a node $v\in V\setminus \asgnm(M)$, let $\asgnm^{[i\to v]}\colon M\cup \{i\}\to V$ be the partial assignment extended from $\asgnm$ by assigning agent~$i$ to $v$.
    Also, let $\asgnm_{\emptyset}\colon \emptyset \to V$ be the empty partial assignment.

\begin{algorithm}

  \caption{Jump stable assignment for acyclic friendship graph and non-negative utilities.}
  \label{alg:jumpacyclic}
  \textbf{Input:} Topology graph $G = (V,E)$, topological order $\pi\colon N \to [n]$\\
  \textbf{Output:} Jump stable assignment $\asgnm\colon N\to V$

  \begin{algorithmic}[]
\STATE $V^e\leftarrow V$
\STATE $\asgnm_0 \leftarrow \asgnm_\emptyset$

  \FOR{$k = 1,\dots, n$}
    \STATE $i\leftarrow \pi^{-1}(k)$
    \STATE Select $v^* \in \arg\max_{v\in V^e}\{u_i(\asgnm_{k-1}^{[i\to v]})\}$
    \STATE $\asgnm_k \leftarrow \asgnm_{k-1}^{[i\to v^*]}$
    \STATE $V^e \leftarrow V^e \setminus \{v^*\}$
  \ENDFOR
  \RETURN $\asgnm \leftarrow \asgnm_n$
 \end{algorithmic}
\end{algorithm}

    \Cref{alg:jumpacyclic} describes our procedure for computing a jump stable assignment. 
    In each iteration of the for-loop, we determine the position of some agent. 
    We place agents according to the topological order $\pi$---this ensures that whenever an agent is placed, all of her friends have already been placed, so her utility is not influenced by later agents.
    
    Clearly, the algorithm runs in polynomial time. 
    It remains to show that the returned assignment~$\asgnm$ is jump stable. 
    Consider an arbitrary agent $i\in N$; it suffices to show that $i$ cannot perform a beneficial jump.
    Notice that all nodes that $i$ could potentially jump to were available at the moment when $i$ was assigned during the execution of \Cref{alg:jumpacyclic}. 
    Moreover, since $u_i(j) = 0$ for all agents~$j$ with $\pi(j) > \pi(i)$, we have $u_i(\asgnm^{i\to v}) = u_i(\asgnm_{\pi(i)}^{i\to v}) \le u_i(\asgnm_{\pi(i)}) = u_i(\asgnm)$ for any node $v\in V\setminus \asgnm(N)$; here, the inequality follows from the maximization in the algorithm. 
    Hence, agent~$i$ cannot perform a beneficial jump.
\end{proof}

If the topology graph is a cycle (e.g., a party table), utilities are non-negative, and every vertex in the friendship graph has out-degree at most~$1$, we completely characterize the friendship graphs for which the resulting instance admits a jump stable assignment.

\begin{restatable}{theorem}{cycleincycle}
\label{thm:cycle-in-cycle}
    Suppose that the topology graph~$G$ is a cycle, and each agent has at most one friend and utility $0$ for the remaining agents.
    For any distance factor function, a jump stable assignment exists if and only if neither of the following cases occurs:
    \begin{itemize}
        \item The friendship graph is a $3$-cycle;
        \item The friendship graph is a $5$-cycle.
    \end{itemize}
    If a jump stable assignment exists, it can be computed in polynomial time.
\end{restatable}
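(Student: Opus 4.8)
The plan is to prove both directions after one simplifying observation: since each agent has at most one friend, for whom her utility is some fixed positive number, and $0$ for everyone else, an agent's preferences depend \emph{only} on minimizing the topological distance to her unique friend (as $f$ is strictly decreasing). Hence the actual utility values are irrelevant and the friendship graph is a functional graph, i.e.\ a vertex-disjoint union of directed cycles (each of length $\ge 2$, because $u_i(i)=0$), with in-trees feeding into them and possibly purely acyclic components. I would phrase the whole argument in these terms.

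For the \textbf{impossibility} direction I would show non-existence exactly for a lone $3$-cycle and a lone $5$-cycle. The $3$-cycle is immediate: as $G=C_m$ with $m\ge 4$ has no triangle, some agent $i$ is at distance $\ge 2$ from her friend $j$, and stability forces \emph{both} cycle-neighbours of $j$'s node to be occupied; but the only other agents are $i$ (which is not adjacent to $j$) and the third agent, so one such neighbour is empty and $i$ jumps there. The $5$-cycle is the crux. First I would prove that in any stable assignment every agent is within distance $2$ of her friend: the agent realizing the maximum friend-distance $D$ forces the radius-$(D-1)$ ball around her friend to be fully occupied, and since this ball has $2D-1$ nodes while only $5$ are occupied we get $D\le 3$; the case $D=3$ is excluded because it would place all five agents in a radius-$2$ ball, contradicting that the distance-$3$ agent lies outside it. A displacement-parity count (the signed steps around the friendship cycle sum to $0\bmod m$, and since the agents cluster without wrapping this sum is $0$) shows the number of distance-$2$ links is odd; five is easily ruled out, leaving one or three. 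Three distance-$2$ links need three nodes with both neighbours occupied, forcing the agents onto five consecutive nodes $\{0,\dots,4\}$ with the middle three as the distance-$2$ targets; tracing the unique distance-$2$ predecessors of nodes $1$ and $3$ then forces the edges $\text{(node }3)\to(\text{node }1)$ and $(\text{node }1)\to(\text{node }3)$, a $2$-cycle contradicting a single $5$-cycle. A single distance-$2$ link makes the four distance-$1$ links a self-avoiding path on $C_m$, again forcing five consecutive nodes, whose closing friendship edge has distance $\min(4,m-4)\ge 3$ for $m\ge 7$ (contradiction) and, for $m=6$, distance $2$ but with the friend's far neighbour empty, enabling a jump.

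For the \textbf{existence} direction I would give a constructive polynomial-time algorithm. If the friendship graph is acyclic, \Cref{thm:DAGassignment} applies directly, so assume a directed cycle is present. The key gadget is a \emph{folded block}: an explicit placement of a directed cycle's $\ell$ agents on $\ell$ consecutive nodes in which every friendship link targets a node all of whose strictly-closer nodes are occupied. For $\ell=2$ the two agents are placed adjacently; for $\ell=4$ and even $\ell\ge 6$ a two-run ``zigzag'' (ascending even nodes, descending odd nodes) turns every long link into an interior distance-$2$ link; for odd $\ell\ge 7$ the same works after routing one link as a distance-$3$ step into the central node, whose radius-$2$ ball is packed. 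These blocks are self-sufficient. I would then concatenate the blocks of all components into one solid block of $n$ consecutive nodes and verify stability directly; the guaranteed empty node(s) offer no beneficial jump because every agent's friend sits within distance $2$ inside her own sub-block, so moving toward an empty node only increases her distance.

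The remaining difficulty is cycles of length $3$ and $5$, whose folded block unavoidably leaves one distance-$2$ link whose target has a free neighbour, requiring one extra occupied node beside a designated cycle vertex. Such a node is available precisely when the friendship graph is \emph{not} that lone cycle: if the cycle has an in-neighbour, that agent is seated adjacent to the designated vertex, simultaneously realizing her own distance-$1$ link and supplying the stabilizer; otherwise another component exists and a boundary agent of its block is abutted to play the same role (e.g.\ two $3$-cycles interlock into one stable block of six, a $3$-cycle and a $2$-cycle into a block of five). Finally I would place all remaining acyclic agents greedily in friends-first order, each at the closest free node to her already-placed friend; by the reverse-topological argument of \Cref{thm:DAGassignment} these agents are stable, and since they only occupy further nodes they cannot create a beneficial jump for any cycle agent. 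I expect the $5$-cycle impossibility to be the main obstacle, with the bookkeeping that guarantees a compatible stabilizer for near-lone $3$- and $5$-cycle components a close second.
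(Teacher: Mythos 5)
Your overall architecture coincides with the paper's: impossibility is shown only for a lone $3$-cycle and a lone $5$-cycle; existence is obtained by packing each friendship cycle onto consecutive nodes of $G$ so that every distance-$2$ target has both neighbours occupied (with one distance-$3$ link into a deeply interior node for odd lengths at least $7$), reserving one extra occupied node beside a designated vertex for $3$- and $5$-cycles, and finishing the acyclic part with the greedy topological-order algorithm of \Cref{thm:DAGassignment}. Your $5$-cycle impossibility argument (bounding the maximum friend-distance by $2$, a parity count on the number of distance-$2$ links, and a case split on one versus three such links) is a genuinely different, more elaborate but valid alternative to the paper's shorter argument, which instead starts from an empty node and inspects the two nearest occupied nodes on either side of it.

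Two concrete points need repair. First, the even-length ``zigzag'' as literally stated (even positions ascending, then odd positions descending, i.e.\ $c_2,c_4,\dots,c_\ell,c_{\ell-1},c_{\ell-3},\dots,c_1$) makes \emph{both} endpoints $c_1$ and $c_\ell$ the targets of distance-$2$ links; for the first or last block of your concatenation the outside neighbour of such an endpoint is empty, the corresponding agent jumps there, and the block is not self-sufficient --- contrary to your claim. The fix is the mirror ordering $c_1,c_3,\dots,c_{\ell-1},c_\ell,c_{\ell-2},\dots,c_2$ (or the paper's two-monotone-run fold), under which every distance-$2$ target is interior, as your own stated criterion requires. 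Second, in the $5$-cycle impossibility proof, the assertion that the signed displacements around the friendship cycle sum to exactly $0$ rather than $\pm m$ is justified only by ``the agents cluster without wrapping''; for $6\le m\le 10$ a wrap-around is arithmetically possible and must be excluded, e.g.\ by noting that stability forces every node on the geodesic of each link of length at most $2$ to be occupied, so a wrap-around would force all $m>5$ nodes of the topology cycle to be occupied. With these two repairs the proposal goes through.
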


\begin{proof}
Consider a \tdg such that $G$ is a cycle and each agent has at most one friend and utility $0$ for the remaining agents.

First, suppose that the friendship graph is either a $3$-cycle or a $5$-cycle.
In particular, $n \in \{3,5\}$.
Without loss of generality, let $u_i(i+1) > 0$ for $i\in [n]$, where we take agent indices modulo $n$.
Recall from \Cref{sec:prelim} that $|V| > |N|$.
Assume for contradiction that $\asgnm \colon N\to V$ is a jump stable assignment, and consider an empty node $v\in V\setminus \asgnm(N)$. 
By following the cycle $G$ starting from $v$ in both directions, at some point, we hit the first occupied node in each direction. 
Suppose that agents $j,k\in [n]$ are assigned to these nodes. 
Note that $d_G(\asgnm(j),\asgnm(k))\ge 2$, because there is either an empty node or a node occupied by another agent on each of the two paths connecting $\asgnm(j)$ and $\asgnm(k)$. 
If $n = 3$, we can assume without loss of generality that $k = j+1$, in which case $j$ has a beneficial jump to an empty neighboring node of $k$, a contradiction. 
Consider now the case $n = 5$. 
Then, because each of $j$ and $k$ has an empty neighboring node, the other neighboring node must be occupied by $j-1$ and $k-1$, respectively. 
In particular, the set $\{j,j-1,k,k-1\}$ contains four pairwise distinct agents. 
Assume without loss of generality that $j+1 = k-1$. 
Observe that $d_G(\asgnm(j),\asgnm(k-1))\ge 3$---indeed, one of the paths connecting $\asgnm(j)$ and $\asgnm(k-1)$ has $v$ and $\asgnm(k)$ as intermediate nodes, while the other path has $\asgnm(j-1)$ and $\asgnm(j-2)$ as intermediate nodes. 
Hence, $j$ has a beneficial jump to the empty neighboring node of $k$, a contradiction.

From now on, assume that the friendship graph is neither a $3$-cycle nor a $5$-cycle. 
The existence of a jump stable assignment follows by combining a few observations which we capture in claims.

\begin{claim}\label{cl:longcycles}
Let $M\subseteq N$ be a set of agents that form a cycle of length $m\ge 2$ with $m\not\in\{3,5\}$ in the friendship graph, and let $\{c_1,\dots, c_m\}\subseteq V$ be a set of $m$ consecutive nodes in the cycle $G$. 
Then, there exists a partial assignment $\partasgnm\colon M\to \{c_1,\dots, c_m\}$ such that, regardless of how the agents in $N\setminus M$ are assigned, no agent in~$M$ can perform a beneficial jump.
\end{claim}
\renewcommand\qedsymbol{$\vartriangleleft$} 
\begin{proof}
Let $M$ and $\{c_1,\dots, c_m\}$ be as in the statement of the claim. 
Suppose without loss of generality that $M = [m]$ and $u_i(i+1) > 0$ for $i\in [m]$, where we take agent indices modulo $m$.

Assume first that $m$ is even. Consider the assignment
\begin{align*}
    \partasgnm(i) = \begin{cases}
       c_{m/2 - i + 1} & i\le m/2;\\
       c_i & i\ge m/2 + 1.
    \end{cases}
\end{align*}
See \Cref{fig:exlongcycle} for an illustration.
Then, for $i\notin\{m/2,m\}$, $\partasgnm(i)$ is adjacent to $\partasgnm(i+1)$, so no beneficial jump is possible. 
For $i\in \{m/2,m\}$, it holds that $d_G(\partasgnm(i),\partasgnm(i+1)) = m/2$, and any jump to an empty node cannot decrease this distance.

\begin{figure}
    \centering
    \begin{tikzpicture}[
	element/.style={shape=circle,draw, fill=white,scale=0.8}
	]
        \node (c1) at (0,0) {};
        \node[element, label = {[xshift = 0.05cm, yshift = 0.1cm]180:$c_1$}] (v11) at ($(c1)+(90:2)$) {\color{white}0};
        \node[element, label = {[xshift = 0.05cm, yshift = -0.1cm]180:$c_6$}] (v16) at ($(c1)+(270:2)$) {\color{white}0};
        \foreach \i in {2,3,4,5}{
        \pgfmathsetmacro{\rot}{54 + \i*36}
        \node[element, label = {[xshift = 0.05cm]180:$c_\i$}] (v1\i) at ($(c1)+(\rot:2)$) {\color{white}0};}
        \node (w11) at (v11) {3};
        \node (w12) at (v12) {2};
        \node (w13) at (v13) {1};
        \node (w14) at (v14) {4};
        \node (w15) at (v15) {5};
        \node (w16) at (v16) {6};
        \foreach[count = \j] \i in {2,3,4,5,6}
        \draw (v1\j) edge (v1\i);
        \draw (v11) edge ($(v11) + (0.5,0)$);
        \draw (v16) edge ($(v16) + (0.5,0)$);
        \node at ($(v11) + (0.8,0)$) {\dots};
        \node at ($(v16) + (0.8,0)$) {\dots};
        
        \node (c2) at (4.5,0) {};
        \node[element, label = {[xshift = 0.05cm, yshift = 0.1cm]180:$c_1$}] (v11) at ($(c2)+(90:2.3)$) {\color{white}0};
        \node[element, label = {[xshift = 0.05cm, yshift = -0.1cm]180:$c_7$}] (v17) at ($(c2)+(270:2.3)$) {\color{white}0};
        \foreach \i in {2,3,4,5,6}{
        \pgfmathsetmacro{\rot}{60 + \i*30}
        \node[element, label = {[xshift = 0.05cm]180:$c_\i$}] (v1\i) at ($(c2)+(\rot:2.3)$) {\color{white}0};}
        \node (w11) at (v11) {1};
        \node (w12) at (v12) {7};
        \node (w13) at (v13) {2};
        \node (w14) at (v14) {6};
        \node (w15) at (v15) {3};
        \node (w16) at (v16) {4};
        \node (w17) at (v17) {5};
        \foreach[count = \j] \i in {2,3,4,5,6,7}
        \draw (v1\j) edge (v1\i);
        \draw (v11) edge ($(v11) + (0.5,0)$);
        \draw (v17) edge ($(v17) + (0.5,0)$);
        \node at ($(v11) + (0.8,0)$) {\dots};
        \node at ($(v17) + (0.8,0)$) {\dots};
    \end{tikzpicture}
    \caption{Partial assignments in the proof of \Cref{cl:longcycles} for $m=6$ (left) and $m = 7$ (right).}
    \label{fig:exlongcycle}
\end{figure}
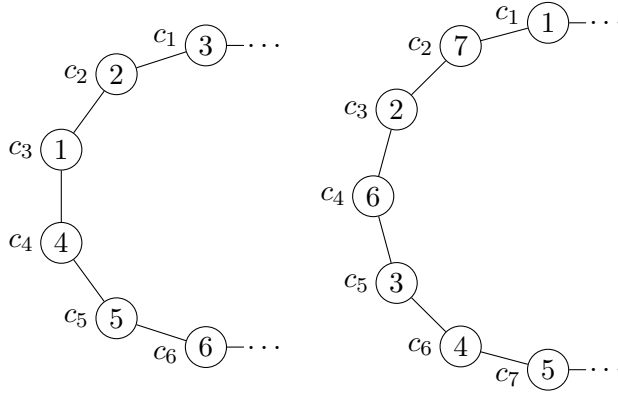

Assume now that $m$ is odd. Consider the assignment
\begin{align*}
    \partasgnm(i) = \begin{cases}
       c_{2i - 1} & i\le \lfloor m/2\rfloor;\\
       c_{m-1} & i = \lfloor m/2\rfloor + 1;\\
       c_m & i = \lfloor m/2\rfloor + 2;\\
       c_{2(m+1-i)} & i \ge \lfloor m/2\rfloor + 3.
    \end{cases}
\end{align*}
Again, see \Cref{fig:exlongcycle} for an illustration.
The claim follows from a case distinction:
\begin{itemize}
    \item For $i\le \lfloor m/2\rfloor - 1$, it holds that $d_G(\partasgnm(i),\partasgnm(i+1)) = 2$, and no jump can lower this distance.
    \item For $i\in \{\lfloor m/2\rfloor, \lfloor m/2\rfloor + 1 , m\}$,  it holds that $d_G(\partasgnm(i),\partasgnm(i+1)) = 1$, and no jump can lower this distance.
    \item For $i = \lfloor m/2\rfloor + 2$, it holds that $\partasgnm(i) = c_m$ and $\partasgnm(i+1) = c_{m - 3}$. The closest that $i$ can get to $i+1$ is by jumping to the possibly empty node next to $c_1$. However, as $m\ge 7$, this jump leads to a distance of at least $4$ from $c_{m-3}$, which is not beneficial.
    \item For $\lfloor m/2\rfloor + 3\le i \le m - 1$, it holds that $d_G(\partasgnm(i),\partasgnm(i+1)) = 2$, and no jump can lower this distance.
\end{itemize}
This establishes the claim.
\end{proof}

The previous claim helps us place agents with a cyclic friendship structure unless the cycle is of length $3$ or $5$. 
For these short cycles, we need to assign some other agent in addition to those on the cycle at the same time.

\begin{claim}\label{cl:shortcycles}
Let $M\subseteq N$ be a set of agents that form a cycle of length $m\in \{3,5\}$ in the friendship graph, and let $\{c_1,\dots, c_{m+1}\}\subseteq V$ be a set of $m+1$ consecutive nodes in the cycle $G$. 
Let $a\in N\setminus M$. 
Then, there exists a partial assignment $\partasgnm\colon M\cup\{a\}\to \{c_1,\dots, c_{m+1}\}$ such that, regardless of how the agents in $N\setminus (M\cup\{a\})$ are assigned, no agent in $M$ can perform a beneficial jump.
\end{claim}

\begin{proof}
Let $M$, $a$, and $\{c_1,\dots, c_{m+1}\}$ be as in the statement of the claim. 
Suppose without loss of generality that $M = [m]$ and $u_i(i+1) > 0$ for $i\in [m]$, where we take agent indices modulo $m$. Then, for the partial assignment indicated in \Cref{fig:exshortcycle}, no agent in $M$ can perform a beneficial jump.
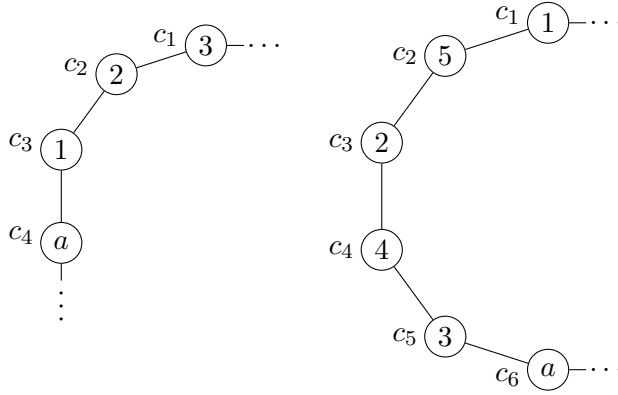
\begin{figure}
    \centering
    \begin{tikzpicture}[
	element/.style={shape=circle,draw, fill=white, scale = 0.8}
	]
        \node (c1) at (0,0) {};
        \foreach \i in {1,2,3,4}{
        \pgfmathsetmacro{\rot}{54 + \i*36}
        \node[element, label = {[xshift = 0.05cm, yshift = 0.1cm]180:$c_\i$}] (v1\i) at ($(c1)+(\rot:2)$) {\color{white}0};}
        \node (w11) at (v11) {3};
        \node (w12) at (v12) {2};
        \node (w13) at (v13) {1};
        \node (w14) at (v14) {$a$};
        \foreach[count = \j] \i in {2,3,4}
        \draw (v1\j) edge (v1\i);
        \draw (v11) edge ($(v11) + (0.5,0)$);
        \draw (v14) edge ($(v14) + (0,-0.5)$);
        \node at ($(v11) + (0.8,0)$) {\dots};
        \node[rotate = 90] at ($(v14) + (0,-0.8)$) {\dots};
        
        \node (c2) at (4.5,0) {};
        \node[element, label = {[xshift = 0.05cm, yshift = 0.1cm]180:$c_1$}] (v11) at ($(c2)+(90:2.3)$) {\color{white}0};
        \node[element, label = {[xshift = 0.05cm, yshift = -0.1cm]180:$c_6$}] (v16) at ($(c2)+(270:2.3)$) {\color{white}0};
        \foreach \i in {2,3,4,5}{
        \pgfmathsetmacro{\rot}{54 + \i*36}
        \node[element, label = {[xshift = 0.05cm]180:$c_\i$}] (v1\i) at ($(c2)+(\rot:2.3)$) {\color{white}0};}
        \node (w11) at (v11) {1};
        \node (w12) at (v12) {5};
        \node (w13) at (v13) {2};
        \node (w14) at (v14) {4};
        \node (w15) at (v15) {3};
        \node (w16) at (v16) {$a$};
        \foreach[count = \j] \i in {2,3,4,5,6}
        \draw (v1\j) edge (v1\i);
        \draw (v11) edge ($(v11) + (0.5,0)$);
        \draw (v16) edge ($(v16) + (0.5,0)$);
        \node at ($(v11) + (0.8,0)$) {\dots};
        \node at ($(v16) + (0.8,0)$) {\dots};
    \end{tikzpicture}
    \caption{Partial assignments in the proof of \Cref{cl:shortcycles} for cycles of length $3$ (left) and $5$ (right).}
    \label{fig:exshortcycle}
\end{figure}
\end{proof}

With the two claims in hand, we are ready to find a desired jump stable assignment. 
The agent set $N$ can be partitioned into the set $C$ of agents that lie on some directed cycle in the friendship graph and the set $A$ of remaining agents; note that $A$ induces a directed acyclic subgraph of the friendship graph. 

First, assume that $C$ contains more than one cycle. 
If one of the cycles is of length $3$ or $5$---call that $C_1$---then assign the corresponding agents as in \Cref{cl:shortcycles} to an arbitrary set of consecutive nodes in $G$. 
Then, place the next cycle, $C_2$, on an adjacent set of consecutive nodes starting with the node occupied by agent $a$ for $C_1$ in \Cref{fig:exshortcycle}. 
If $C_2$ is also of length $3$ or $5$, we place it in such a way that the node corresponding to agent $a$ for~$C_2$ in \Cref{fig:exshortcycle} is already occupied by an agent in $C_1$. 
If $C_2$ is of any other length, we use the assignment of \Cref{cl:longcycles}. 
Similarly, we assign the remaining agents in $C$, cycle by cycle, to sets of consecutive nodes adjacent to already assigned nodes, according to the assignments in  Claims~\ref{cl:longcycles} and \ref{cl:shortcycles}, respectively, taking care that the critical neighboring node of a cycle of length $3$ and $5$ is already occupied. 
By Claims~\ref{cl:longcycles} and \ref{cl:shortcycles}, no agent in $C$ can perform a beneficial jump.
We then assign the agents in $A$ as in \Cref{thm:DAGassignment} by placing them one by one in an optimal position after their friend (if any) has been placed. 
This yields a jump stable assignment.

Next, if there is only one cycle in $C$, we can apply the same procedure unless the cycle is of length $3$ or $5$. 
In this case, since we assume that the friendship graph is neither a $3$-cycle nor a $5$-cycle, $A$ is nonempty.
We first place an agent $a\in A$ that has no friend in $A$; such an agent exists due to the acyclicity of the friendship graph induced by the agents in $A$. 
Then, we place the cycle next to this agent according to the assignment in \Cref{cl:shortcycles}. 
If $a$ has a friend in $C$, then we use a permutation of this cycle such that $a$ is on the node adjacent to her friend. 
Afterwards, we place the remaining agents in $A$ as in the previous paragraph. 
This yields a jump stable assignment.

Finally, if $C$ is empty, we can simply apply \Cref{thm:DAGassignment}.

In all cases, our argument directly yields a polynomial-time algorithm for computing a jump stable assignment.
\renewcommand\qedsymbol{$\square$}\end{proof}

We have seen that, with non-negative utilities, a jump stable assignment always exists if the friendship graph is acyclic, but may not exist if both the friendship graph and the topology graph are cycles.
Is existence guaranteed when the friendship graph is a cycle but the topology graph is acyclic?
If the topology graph is a path, the answer is positive as long as each agent has at most two friends and non-negative utilities.

\begin{theorem}
\label{thm:path-two-friends}
For any distance factor function, if the topology graph~$G$ is a path and each agent has at most two friends and utility $0$ for the remaining agents, then a jump stable assignment exists and can be computed in polynomial time.
\end{theorem}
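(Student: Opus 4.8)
The plan is to exhibit a jump stable assignment explicitly, placing all agents in a single \emph{contiguous block} that starts at one endpoint of the path~$G$, so that the only empty nodes lie to one side (say, the right) of the occupied region while the other side is sealed off by the endpoint. The guiding principle is that a node of a path has at most two neighbors, so an agent with at most two friends can in principle be seated \emph{between} both of her friends and thereby attain her maximum possible utility $f(1)\cdot\sum_{j} u_i(j)$; such an agent never has a beneficial jump. Two features must then be controlled: contiguity of the block, which guarantees that for every agent and every friend all strictly closer nodes are occupied, so no agent can shorten a distance by moving into the interior; and the behaviour at the open right \emph{frontier}, since the first empty node past the block is the only tempting target and it puts an agent at distance $1$ from any friend seated at the last occupied position.

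First I would dispose of the acyclic case: if the friendship graph has no directed cycle, \Cref{thm:DAGassignment} already yields a jump stable assignment (utilities are non-negative here). So assume directed cycles are present and, as in the proof of \Cref{thm:cycle-in-cycle}, split $N$ into the set $C$ of agents lying on some directed cycle and the set $A$ of remaining agents, where $A$ induces a directed acyclic subgraph. I would lay out each weakly connected component of the subgraph induced by $C$ as its own contiguous stretch of nodes, in the spirit of the explicit patterns of \Cref{cl:longcycles} and \Cref{cl:shortcycles}: seat every out-degree-$2$ agent directly between her two friends (so she is maximally satisfied), realize as path-adjacencies all but one edge of each directed cycle, and stack any admirer who cannot be made adjacent—either a cycle's leftover ``back-edge'' agent or one of the surplus admirers created by a high in-degree hub—\emph{close} to the friend she likes, with that friend kept interior and with both of its adjacency slots occupied.

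Next I would concatenate these blocks with no gaps, anchoring the first block at the endpoint of~$G$, and order the components so that heavily admired hubs sit toward the sealed endpoint. The point of this ordering is that a stacked admirer, though possibly several steps from her friend, is still closer to that friend than she would be after teleporting to the far-right frontier, precisely because the friend is deep in the left part of the block. The acyclic agents~$A$ are appended last, in topological order as in \Cref{thm:DAGassignment}, each seated adjacent to an already-placed friend when a slot is free and otherwise stacked near a friend, keeping the region contiguous. Crucially, I would reserve the rightmost occupied node for an agent with no non-adjacent admirer (for instance a sink of the DAG part), so that the empty node just past the frontier gives no agent a distance-$1$ friend and hence no beneficial jump toward the open side.

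The verification splits by agent type: agents adjacent to all their friends are at maximal utility and are trivially stable, while for an agent not adjacent to some friend every candidate beneficial jump must reach an empty node, which contiguity rules out on the sealed side and the frontier-safety choice rules out on the open side. The step I expect to be the main obstacle is the simultaneous handling of out-degree-$2$ agents and the frontier. Because such an agent's utility is a weighted sum over two friends, I must exclude not only jumps shortening the distance to one friend but also trade-off jumps that move closer to one friend at the cost of the other, and I must guarantee that no out-degree-$2$ agent has a friend sitting at the open frontier. Engineering the block layout, the left-to-right ordering of components, and the choice of the frontier agent so that these weighted trade-offs are ruled out for every component of~$C$ and for the attached acyclic part at once is where the construction has to be carried out with care.
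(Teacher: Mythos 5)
Your proposal does not contain a proof: the construction you describe is never actually pinned down, and you yourself flag the decisive step---ruling out trade-off jumps for out-degree-$2$ agents and controlling who sits at the open frontier---as ``where the construction has to be carried out with care.'' That is precisely the content of the theorem, so as written there is a genuine gap. Moreover, the one concrete criterion you do offer for the frontier is unsound: you argue that reserving the rightmost occupied node for an agent with no non-adjacent admirer ensures ``the empty node just past the frontier gives no agent a distance-$1$ friend and hence no beneficial jump toward the open side.'' The ``hence'' does not follow. A jump to the first empty node can be beneficial without creating a distance-$1$ friendship: an agent sitting near the sealed endpoint whose friend sits near (but not at) the right end of the block strictly shortens that distance by teleporting to the frontier, so you would have to prove a quantitative statement about \emph{every} agent--friend pair, not just about the last occupied node. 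Your ordering heuristic (``hubs toward the sealed endpoint'') gestures at this but is not an argument.

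For comparison, the paper's proof avoids all of this machinery. It does not decompose the friendship graph into cycles at all: it greedily fills the path from the sealed end, always placing next to the just-placed agent~$i$ an unassigned friend~$j$ of~$i$ maximizing $u_i(j)$ (or an arbitrary agent if none remains). Contiguity then reduces the whole verification to the single empty node at position $n+1$, and the only nontrivial case---an agent~$i$ with two friends $j,k$ to her right, $u_i(j)\ge u_i(k)$ and $j$ adjacent---is settled by a rearrangement inequality:
\[
f(1)\,u_i(j) + f(d_k-d_i)\,u_i(k) \;\ge\; f(1)\,u_i(k) + f(d_k-d_i)\,u_i(j) \;\ge\; f(n+1-d_k)\,u_i(k) + f(n-d_i)\,u_i(j),
\]
which is exactly the weighted trade-off you identified as the obstacle. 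If you want to salvage your approach, this greedy ordering (more valued friend placed adjacent, so that after any rightward jump the roles of ``near'' and ``far'' friend swap in $i$'s favor) is the missing ingredient; the cycle/DAG decomposition and the component-ordering scheme are unnecessary for a path topology.
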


\begin{proof}
We assign agents to nodes along the path from left to right.
First, assign an arbitrary agent to the leftmost node.
For each subsequent node, among the unassigned friends~$j$ of the agent~$i$ occupying the previous node, assign one with the highest $u_i(j)$; if all of $i$'s friends have been assigned (or if $i$ has no friend), then assign an arbitrary agent.
Clearly, this procedure runs in polynomial time.

We show that the final assignment $\asgnm$ is jump stable.
Consider any agent~$i$.
Since all utilities are non-negative, it suffices to show that the jump to the $(n+1)$th node from the left, denoted by $v$, is not beneficial for $i$.
If all of $i$'s friends are to her left, this is obvious.
Else, if $i$ has one friend~$j$ to her right, then $j$ is directly next to $i$, so by jumping to the $(n+1)$th node, $i$ gets closer to neither $j$ nor $i$'s other friend (if $i$ has a friend other than $j$).
Otherwise, $i$ has both friends $j$ and $k$ to her right.
Assume without loss of generality that $u_i(j) \ge u_i(k)$ and that the algorithm assigns $j$ next to $i$.
Suppose that $i$ and $k$ are on the $d_i$-th and $d_k$-th nodes from the left, respectively (so $j$ occupies the $(d_i + 1)$th node).
Then we have
\begin{align*}
u_i(\asgnm) 
&= f(1)\cdot u_i(j) + f(d_k - d_i)\cdot u_i(k) \\
&\ge f(1)\cdot u_i(k) + f(d_k - d_i)\cdot u_i(j) \\
&\ge f(n+1-d_k)\cdot u_i(k) + f(n - d_i)\cdot u_i(j)= u_i(\asgnm^{i\to v}),
\end{align*}
where the second inequality holds because $d_k\le n$ and $f$~is a decreasing function.
\end{proof}

On the other hand, if the topology graph is a tree, existence is no longer guaranteed even when the friendship graph is a cycle.

\begin{theorem}
\label{thm:cyclicfs:treetop}
For any distance factor function, a jump stable assignment may not exist even if the topology graph is a tree, the agents have binary utilities, and the friendship graph is a cycle.
\end{theorem}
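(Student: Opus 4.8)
The plan is to first strip the problem down to a purely ordinal chase. Since the friendship graph is a cycle and utilities are binary, every agent~$i$ has exactly one friend, say $i+1$ (indices modulo~$n$), so that $u_i(\asgnm)=f(d_G(\asgnm(i),\asgnm(i+1)))$. As $f$ is strictly decreasing, agent~$i$ has a beneficial jump if and only if some empty node is strictly closer to $\asgnm(i+1)$ than $\asgnm(i)$ is. Thus jump stability is equivalent to requiring that, for every~$i$, every node strictly closer to $i$'s friend than $i$ itself is occupied; in particular the whole analysis becomes independent of~$f$, which is exactly what ``for any distance factor function'' demands. I would record two immediate consequences: if $\asgnm(i)$ is not adjacent to $\asgnm(i+1)$, then every neighbour of $\asgnm(i+1)$ must be occupied (a long friendship edge can only point at a fully surrounded node); and if a leaf is occupied while its unique neighbour is empty, its agent jumps inward.

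Next I would pin down the global shape of any stable assignment. The central tool is a crossing argument in the spirit of the chase in \Cref{prop:cat-and-mouse}: if an empty node $z$ separates two occupied nodes in~$G$, then some friendship edge crosses~$z$, and its tail strictly improves by jumping onto~$z$. Consequently, in any stable assignment the occupied set induces a \emph{connected} subtree~$S$ of~$G$, with all empty nodes hanging off~$S$ at its boundary, and no internal cut vertex with occupied nodes on two sides can be empty. This reduces the question to how the single cycle can be laid on a connected $n$-vertex subtree~$S$, where the empty nodes attach at boundary vertices. For each boundary vertex~$o$ carrying an empty neighbour I would extract the local constraint $d_G(\asgnm(i),\asgnm(i+1))\le 1+d_G(o,\asgnm(i+1))$ for all~$i$, i.e.\ that nobody profits from jumping there.

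With these constraints in hand, the remaining work is to design a specific topology tree~$G$ and cycle length~$n$ for which no connected $n$-vertex subtree admits a consistent cyclic placement. The design principles I would follow are: make $G$ bushy and of small diameter, so that no connected $n$-subtree is a long path (forcing at least two ``long'' friendship edges via a path-cover bound); and attach leaves only to vertices of degree at least three, so that a partially used branch always leaves exposed vertices with an empty neighbour, hence unable to receive a long edge. Combined with the counting principle that the number of long edges is at least the minimum path-cover number of~$S$, while every long edge must aim at a distinct fully surrounded vertex, this is meant to force more long edges than admissible targets. For the residual symmetric cases, where $S$ is itself a small bushy piece such as a claw, I would finish with a short parity contradiction analogous to the one excluding the $3$- and $5$-cycles in \Cref{thm:cycle-in-cycle}: the constraint at a vertex together with the constraint at the empty node forming its sole approach would force some agent to be simultaneously its friend's predecessor and successor, i.e.\ $2\equiv 0$ modulo the relevant cycle length.

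The step I expect to be the main obstacle is precisely ruling out the \emph{escape configurations}, which is where naive trees such as stars, paths (already handled positively in \Cref{thm:path-two-friends}), and short spiders stay stable. Two families are dangerous and must be killed by the choice of $G$ and $n$: first, laying the cycle as one run along a path-like subtree and parking the unique wrap-around edge at a leaf whose closer neighbours are all occupied, with every empty node kept at least as far from that leaf; and second, fully including a pendant sub-structure so that its leaves form a ``safe zone'' of targets with no nearby empty node. Defeating both at once, so that every connected occupied subtree leaves an empty node strictly inside the ball around some unavoidable long edge's target, is the delicate heart of the construction, and making the tree and the cycle length cooperate, rather than letting a symmetric, path-like, or fully-included sub-case slip through, is where the real care is needed.
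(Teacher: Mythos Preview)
Your preliminary reductions are correct and match the paper: with a cyclic friendship graph and binary utilities each agent cares only about the distance to her unique successor, so jump stability is purely ordinal and independent of the particular~$f$; and the crossing argument showing that the occupied nodes must induce a connected subtree is exactly the first step the paper takes.

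The gap is that after this point you never actually produce an instance. Everything from ``design principles'' onward is a description of what a good tree \emph{should} look like (bushy, small diameter, leaves attached only to high-degree vertices, path-cover counting of long edges, parity contradictions), together with an explicit acknowledgement that ``ruling out the escape configurations'' is ``the delicate heart of the construction'' and ``where the real care is needed''. That is precisely the content of the theorem, and it is not done. A proof of a non-existence-by-example statement must exhibit a concrete~$G$ and~$n$ and then verify that no assignment is stable; your proposal stops before either step. Moreover, some of the machinery you outline---path-cover lower bounds on the number of long friendship edges, the requirement that each long edge point at a distinct fully surrounded vertex---would need nontrivial justification of its own, and it is not clear that this route ever closes cleanly.

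By contrast, the paper's proof is entirely concrete and short. It takes $n=6$ agents on a $6$-cycle friendship graph and lets $G$ be the spider with centre~$z$ and three branches $a_1a_2a_3$, $b_1b_2b_3$, $c_1c_2c_3$ (so $|V|=10$). Your connectivity observation then forces $z$ to be occupied and each branch to be filled from the centre outwards, so the occupied set is determined by $(r_a,r_b,r_c)$ with $r_a+r_b+r_c=5$; up to symmetry only $(3,2,0)$, $(3,1,1)$, $(2,2,1)$ remain, and each is eliminated by a two- or three-line local argument (e.g.\ in $(3,1,1)$ the predecessors of the agents at $b_1$ and $c_1$ would both have to sit at~$z$). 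No counting of long edges or path covers is needed. I would recommend abandoning the abstract framework and simply exhibiting this instance; your own connectivity lemma already does most of the work once the right~$G$ is in hand.
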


\begin{proof}
Consider an instance with $N = [6]$ and the topology graph $G = (V,E)$ depicted in \Cref{fig:topoex:treetop}.
Assume that the utilities are given by $u_{i}(i+1) = 1$ for $i\in [6]$, where we take agent indices modulo $6$, and $u_i(j) = 0$ for all other pairs $i,j\in N$.
Suppose for contradiction that there exists a jump stable assignment $\asgnm\colon N \to V$. 
Let $W := \asgnm(N)\subseteq V$ be the set of occupied nodes. 

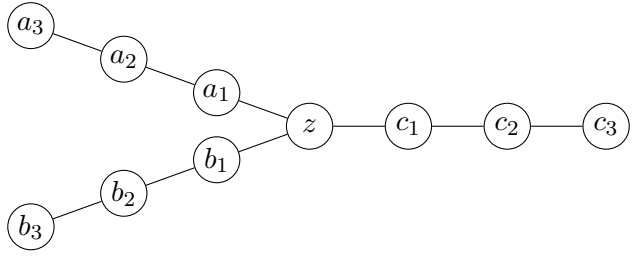
\begin{figure}
    \centering	\begin{tikzpicture}[
	element/.style={shape=circle,draw, fill=white,scale =0.9}
	]
		\node[element] (z) at (0,0) {\color{white}0};
		\node at (z) {$z$};
		\foreach[count = \k] \i in {1.3,2.6,3.9}{
		\node[element] (a1\k) at (160:\i) {\color{white}0};
		\node (a\k) at (a1\k) {$a_{\k}$};
		\node[element] (b1\k) at (200:\i) {\color{white}0};
		\node (b\k) at (b1\k) {$b_{\k}$};
		\node[element] (c1\k) at (0:\i) {\color{white}0};
		\node (c\k) at (c1\k) {$c_{\k}$};
		}
		\foreach \i in {a,b,c}{
		\draw (\i11) edge (\i12);
		\draw (\i12) edge (\i13);
		\draw (z) edge (\i11);}
		
	\end{tikzpicture}
	\caption{Topology graph in the proof of \Cref{thm:cyclicfs:treetop}. \label{fig:topoex:treetop}}
\end{figure}

First, observe that $W$ must be a connected set of nodes. 
Indeed, if there is a proper subset of agents $C\subsetneq N$ that occupy a connected component of the subgraph of $G$ induced by $W$, then there exists an agent $i\in C$ such that $i+1\not\in C$. 
Moreover, there is an empty node on the (unique) path in $G$ connecting $\asgnm(i)$ with $\asgnm(i+1)$, and therefore $i$ has an incentive to jump to this node.

The observation above implies that $z\in W$, and for every $x\in \{a,b,c\}$, there exists $r_x\in \{0,1,2,3\}$ such that $\{x_1,x_2,x_3\}\cap W = \{x_i\colon i\in [r_x]\}$. 
In particular, among the $a$-type nodes in \Cref{fig:topoex:treetop}, the set of occupied nodes is $\emptyset$, $\{a_1\}$, $\{a_1,a_2\}$, or $\{a_1,a_2,a_3\}$; analogous statements hold for the $b$- and $c$-type nodes.
Assume without loss of generality that $r_a\ge r_b\ge r_c$. 
We perform a case analysis based on the vector $(r_a, r_b, r_c)$.
\begin{itemize}
\item $(r_a,r_b,r_c) = (3,2,0)$.
Consider the agent $i = \asgnm^{-1}(b_2)$. 
The agent that has $i$ as a friend must be placed at $b_1$; otherwise, she would benefit by jumping to $b_3$. 
But since $c_1$ is closer to every node except $b_1$ than $b_2$ is,
$i$ can benefit by jumping to~$c_1$, a contradiction.

\item $(r_a,r_b,r_c) = (3,1,1)$. 
The unique agent that has $\asgnm^{-1}(c_1)$ as a friend must be placed at $z$; otherwise, she would benefit by jumping to $c_2$. 
However, the same must hold for the unique agent that has $\asgnm^{-1}(b_1)$ as a friend, which is impossible.

\item $(r_a,r_b,r_c) = (2,2,1)$.
Assume without loss of generality that $\asgnm(1) = b_2$. 
Then, $\asgnm(6) = b_1$, as otherwise agent~$6$ would have a beneficial jump to $b_3$. 
This in turn implies $\asgnm(2) = z$, because otherwise agent~$1$ can jump to either $a_3$ or $c_2$ to improve her utility. 
Let $s\in \{3,4,5\}$ be such that $\asgnm(s) = a_2$. 
The same chain of arguments as for agent~$1$ yields that $\asgnm(s+1) = z$, which is impossible since $\asgnm(2) = z$.
\end{itemize}
We have reached a contradiction in all cases, so $\asgnm$ cannot be jump stable.
\end{proof}

Interestingly, if the topology graph is an ``extended star'' with three branches and the friendship graph is a cycle, as in the instance used in the proof of \Cref{thm:cyclicfs:treetop}, then a jump stable assignment always exists whenever there are at least $16$ agents.
In fact, we show a more general result with an arbitrary number of branches.
To this end, we define an \emph{extended star} as a tree in which only one vertex, called the \emph{center}, has degree at least~$3$. 
A \emph{branch} of an extended star is a path of maximal length such that one endpoint is the center.
The \emph{size} of a branch is the number of nodes on the branch, not counting the center of the star.

\begin{theorem}
\label{thm:extended-star}
Let $k\ge 3$ be an integer, and assume that $G$ is an extended star with $k$ branches.
Suppose that there are $n\ge 5k+1$ agents with non-negative utilities, and the friendship graph is a cycle.
For any distance factor function, there exists a jump stable assignment.
\end{theorem}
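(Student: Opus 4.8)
The plan is to exploit the very special structure forced by a cyclic friendship graph and then to build a stable assignment by hand, occupying a connected neighbourhood of the star's center and folding the cycle onto the branches.

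The first step is a convenient reformulation. Relabel the agents so that $u_i(i+1)>0$ (indices modulo $n$) and $u_i(j)=0$ otherwise, so each agent $i$ has a \emph{unique} friend $i+1$. Since $G$ is a tree, hence connected, every pair of agents is at finite distance, and a jump by $i$ is beneficial exactly when it strictly decreases $d_G(\asgnm(i),\asgnm(i+1))$. Consequently, $\asgnm$ is jump stable if and only if, for every agent $i$, every node strictly closer to the friend $\asgnm(i+1)$ than $\asgnm(i)$ is occupied; equivalently, writing $d_i=d_G(\asgnm(i),\asgnm(i+1))$, the ball of radius $d_i-1$ around $\asgnm(i+1)$ must be fully occupied. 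I would use this criterion throughout: an adjacent ($d_i=1$) friendship pair is automatically safe, and for larger $d_i$ the whole small ball around the friend must contain no empty node.

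Next I would fix the occupied region. Assign the $n$ agents to the center together with an initial prefix of each of the $k$ branches, so that the occupied nodes form a connected subtree and every empty node lies strictly beyond the filled prefix on its branch. The hypothesis $n\ge 5k+1$ is used precisely here: it provides enough agents to make every branch's prefix deep (filling any branch shorter than the target depth entirely, which is harmless because such a branch then ends in a leaf with no empty neighbour), so that all empty nodes are pushed far from the center and, in particular, every branch base is occupied. On each branch I then place a contiguous arc of the friendship cycle by folding it in the spirit of \Cref{cl:longcycles}, but arranged so that the arc both \emph{enters and leaves the branch near the base}, while its deepest node---the prefix tip, the only occupied node with an empty neighbour---has the agent befriending it sitting adjacent to it. Consecutive arcs on different branches are linked base-to-base through the center, and the center node itself is inserted into one of these links, closing everything into the single $n$-cycle.

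It then remains to check the ball criterion for this assignment. Arc-internal pairs lie at distance at most a small constant and their balls stay inside the densely filled prefix, so they are safe by the same reasoning as in \Cref{cl:longcycles}; the prefix tips are safe precisely because their in-friends sit adjacent to them; and a base-to-base link at distance $2$ or $3$ is safe because the small ball around its (near-center) friend consists only of the center, shallow branch nodes, and the other branch bases, all of which are occupied. The genuinely new difficulty, and the step I expect to be the main obstacle, is the high-degree center: unlike in the pure cycle or path settings of \Cref{thm:cycle-in-cycle} and \Cref{thm:path-two-friends}, an agent can try to approach her friend by jumping across the center onto a \emph{different} branch, a shortcut with no analogue there. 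Ruling out all such moves is exactly what forces the empty nodes to be far from the center and every branch to be filled to a uniform depth; verifying that this depth suffices for every fold, tip, and cross-branch link simultaneously is the crux, and $n\ge 5k+1$ is precisely the resource guaranteeing it on all $k$ branches at once. Finally, I would note that short branches (filled to a genuine leaf) and the surplus agents beyond $5k+1$ (used to deepen the long branches) are absorbed by the same arc template, so the construction is explicit and runs in polynomial time.
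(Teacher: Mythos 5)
Your reformulation is correct and is exactly the lens the paper's proof uses implicitly: with a cyclic friendship graph and non-negative utilities each agent has a unique friend, a jump is beneficial if and only if it strictly decreases the distance to that friend, and stability amounts to every node strictly closer to the friend being occupied. Your construction is also essentially the paper's: occupy the center plus a connected prefix of every branch (filling short branches entirely and pushing all empty nodes far out on the long ones), fold the friendship cycle onto each prefix in the spirit of \Cref{cl:longcycles} so that consecutive friends sit at distance at most $2$ with the intermediate node occupied, place the agent who befriends the prefix tip adjacent to that tip, and link consecutive arcs base-to-base through the center, which absorbs the last agent.

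As submitted, however, the proposal does not establish the theorem: you explicitly defer the step you yourself identify as the crux. Two things are left open. First, the prefix depth is never fixed; the correct threshold is $5$ (branches of size at most $4$ are filled entirely, every longer branch receives at least $5$ agents), and this is not arbitrary. With depth $4$ the base-to-base link already fails: the last agent of an even-length arc sits at position $2$ of her branch, her friend sits at position $1$ or $2$ of the next branch at distance up to $4$, and the ball of radius $3$ around that friend reaches position $5$ of the friend's branch, which must therefore be occupied --- this is precisely where $n\ge 5k+1$ is consumed, and with depth $5$ the nearest empty node is at distance exactly $4$, so the bound is tight. Second, the fold itself and the resulting case analysis (arc-internal pairs, the two folding points, the tip, the cross-branch links, and the center agent, whose friend may sit either at the leaf of a fully filled short branch or at depth $1$--$2$ of a long one) are only gestured at. These verifications are the actual content of the paper's proof; your outline is consistent with it and would go through once they are written down, but what you have supplied is the plan rather than the proof.
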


\begin{proof}
Assume that the agents are ordered $1,2,\dots,n$ according to the cycle in the friendship graph.
Call each branch of size at most $4$ a \emph{type-1 branch} and each branch of size at least $5$ a \emph{type-2 branch}.
We assign the agents following their order branch by branch, starting with type-1 branches, then type-2 branches, and finally the center node.
For each type-1 branch, we fill the entire branch, whereas for each type-2 branch, we assign at least $5$ agents to the branch.
Since there are at least $5k+1$ agents and only $k$ branches, we have enough agents for type-2 branches.

For each type-1 branch, we fill in the agents from the leaf towards the center.
For each type-2 branch, let $r$ be the number of agents that we want to assign to the branch; we may choose this number arbitrarily as long as it is at least $5$. 
We divide into two cases depending on the parity of $r$.
\begin{itemize}
\item If $r = 2s$ is even, we assign the first $s$ agents starting from the node closest to the center and leaving one empty node after assigning each agent (except the $s$-th agent).
We then fill in the other $s$ agents starting from the subsequent node and moving back towards the center.
\item If $r = 2s+1$ is odd, we proceed similarly except that we start with the \emph{second} closest node to the center.
\end{itemize}
Finally, we assign the last agent to the center node.
An example with $k = 3$ branches of size $4,6,7$ is shown in \Cref{fig:extended-star}.

\begin{figure*}
    \centering	\begin{tikzpicture}[
	element/.style={shape=circle,draw, fill=white,scale=0.7}
	]
	
		\node[element] (z) at (0,0) {\color{white}$66$};
		\node at (0,0) {$16$};
		\node[element] (a1) at (200:1.3)  {\color{white}$66$};
		\node at (200:1.3) {$4$};
		\foreach[count = \l] \i/\j in {1.3/3,2.6/2,3.9/1}{
        \pgfmathsetmacro{\k}{int(\l + 1)}
		\node[element] (a\k) at ($(a1) + (200:0.8*\i)$) {\color{white}$66$};
		\node at ($(a1) + (200:0.8*\i)$) {$\j$};
		}
		\node[element] (b1) at (160:1.3)  {\color{white}$66$};
		\node at (160:1.3) {$9$};
		\foreach[count = \l] \i/\j in {1.3/5,2.6/8,3.9/6,5.2/7,6.5/{}}{
        \pgfmathsetmacro{\k}{int(\l + 1)}
		\node[element] (b\k) at ($(b1) + (160:0.8*\i)$) {\color{white}$66$};
		\node at ($(b1) + (160:0.8*\i)$) {$\j$};
		}
		\foreach[count = \k] \i/\j in {1.3/10,2.6/15,3.9/11,5.2/14,6.5/12,7.8/13,9.1/{}}{
		\node[element] (c\k) at (0:0.8*\i) {\color{white}$66$};
		\node at (0:0.8*\i) {$\j$};
		}
		\foreach[count = \i] \j in {2,3,4}{
		\draw (a\i) edge (a\j);}
		\foreach[count = \i] \j in {2,3,4,5,6}{
		\draw (b\i) edge (b\j);}
		\foreach[count = \i] \j in {2,3,4,5,6,7}{
		\draw (c\i) edge (c\j);}
		\foreach \i in {a,b,c}
		\draw (z) edge (\i1);
		
	\end{tikzpicture}
	\caption{Topology graph in the proof of \Cref{thm:extended-star} when $n = 16$, $k = 3$, and the three branches have size $4$, $6$, and $7$. \label{fig:extended-star}}
\end{figure*}
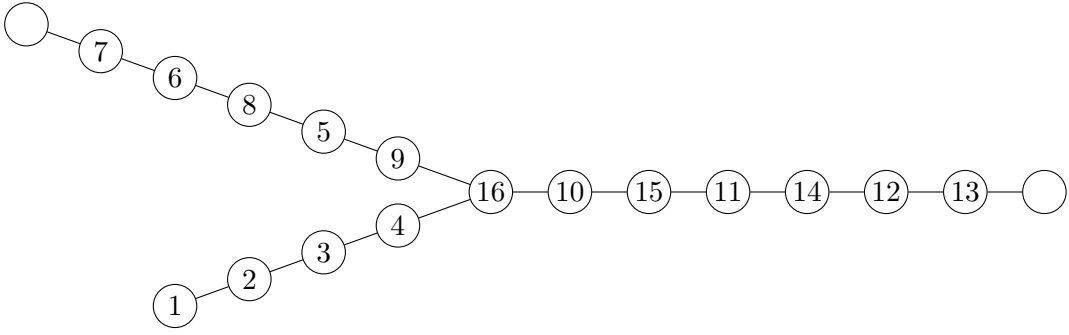

Since $n \ge 5k+1$, there is at least one type-2 branch.
We now show that the resulting assignment is jump stable.
\begin{itemize}
\item Consider agent~$n$ assigned to the center node.
If there is a type-1 branch, agent~$n$ cannot get closer to her friend, agent~$1$, because every type-1 branch is completely filled.
Otherwise, agent~$n$ is at distance at most~$2$ from agent~$1$, and cannot get closer to agent~$1$ because agent~$1$'s branch contains at least five agents.
\item For each type-1 branch, every agent is already next to her friend except the agent next to the center node.
For this latter agent, her friend is in either a type-1 branch---in which case she cannot get closer because the branch is already filled---or a type-2 branch---in which case she is at distance at most~$3$ from her friend and every empty node has distance at least~$4$ from this friend.
\item Consider a type-2 branch with an even number of agents $2s$.
The $s$-th agent (in order of agent number) is already next to her friend.
The $(2s)$-th agent is at distance at most~$4$ from her friend and every empty node has distance at least~$4$ from this friend.
Every other agent on this branch is at distance~$2$ from her friend, and no node adjacent to this friend is empty.
A similar argument applies when the branch contains an odd number of agents.
\end{itemize}
Hence, in all cases, there is no beneficial jump.
\end{proof}

Since jump stable assignments are not guaranteed to exist in TDGs, a natural question is whether we can efficiently decide if such an assignment exists.
As it is known that determining whether a \emph{Nash stable partition} exists in ASHGs is \NP-hard \citep{SuDi10a}, the connection between ASHGs and TDGs that we outlined in \Cref{sec:related} implies that the jump stability question for TDGs is also \NP-hard.\footnote{The non-existence of stable states and computational hardness results regarding stability are also known for jump stability in Schelling games \citep{AEGI21a,KBFN22a}. Since these games have additional features such as agent types and fractional utilities, which are not shared by TDGs, these results do not carry over to our setting.}
However, using reductions between the two classes of games results in TDG instances where the number of connected components is linear in $n$, the number of nodes is quadratic in $n$, and the distance factor function does not play any role (because every connected component is a clique).
We provide a novel reduction to show that the hardness still holds even for instances that better reflect the essence of TDGs.

\begin{restatable}{theorem}{exjump}
\label{thm:exjump}
For the reciprocal distance factor function, it is \NP-complete to decide whether there exists a jump stable assignment in a given \tdg.
This holds even if the topology graph~$G$ consists of a constant number of connected components and $|V| = \Theta(n)$.
\end{restatable}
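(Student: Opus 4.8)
The plan is to prove membership in \NP\ directly and then obtain \NP-hardness by a reduction from a satisfiability problem. Membership is immediate: given an assignment~$\asgnm$, we certify jump stability by checking, for each of the $n$~agents and each of the $O(|V|)$ empty nodes, whether the associated jump is beneficial, and each such check evaluates the utility formula in polynomial time. For hardness I would reduce from a bounded-occurrence variant of $3$-SAT (so that each variable appears in a constant number of clauses), which keeps the produced instance of linear size and keeps the required inter-gadget communication local. The target is the standard template: build a \tdg\ whose jump stable assignments correspond to the satisfying truth assignments of the input formula~$\varphi$, so that a stable assignment exists if and only if $\varphi$ is satisfiable.

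The instance consists of one \emph{variable gadget} per variable and one \emph{clause gadget} per clause, plus a constant number of framing nodes. Each gadget uses only a constant number of agents and nodes and the friendship graph keeps interactions local, so $n=\Theta(|\varphi|)$ and $|V|=\Theta(n)$ hold automatically; the constant-component requirement is met by laying all gadgets out consecutively along a constant number of long paths (or cycles), separated by buffer nodes. The engine of instability is \Cref{prop:cat-and-mouse}: each clause gadget embeds an asymmetric ``cat--mouse'' pair on a local segment of diameter at least~$3$, so that \emph{in isolation} the gadget admits no jump stable configuration, and the design goal is that this instability is neutralised precisely when some literal of the clause is satisfied---e.g.\ by arranging that a satisfied literal causes a designated ``relief'' node to be occupied, giving the fleeing agent a resting place of utility~$0$ that the chasing agent cannot profitably follow to. A variable gadget should have exactly two locally stable placements of its central agent, encoding the truth value, and that placement must consistently govern the relief nodes of every clause gadget in which the variable occurs. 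Here I would use $f(k)=1/k$ in an essential way: its strict decay lets me tune utilities so that an agent's preference among distances~$1$, $2$, and~$3$ to a given friend or enemy flips at exactly the intended threshold, which both forces the two-valued behaviour of variable gadgets and avoids the degeneracy, present in the ASHG-based reduction, where the distance factor plays no role.

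The main obstacle is controlling global behaviour once everything is packed into a constant number of components. Because $f$ is strictly positive at every finite distance, in principle every pair of agents sharing a component interacts and every agent may jump to any empty node of its component, so I must simultaneously (i)~propagate each variable's chosen value to all of its clause gadgets---which, under the constant-component constraint, may lie far away along the same path---and (ii)~rule out unintended beneficial jumps \emph{across} gadgets (spurious ``escape hatches''). I would attack~(ii) by setting inherent utilities to~$0$ between agents of different gadgets, so that cross-gadget distances contribute nothing to any utility and only the \emph{availability} of distant empty nodes can matter, and by keeping the topology densely filled, with only a controlled set of empty nodes each positioned so that jumping to it never exceeds an agent's current utility in any intended configuration. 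The genuinely hard part is (i) together with the layout: routing the bounded-occurrence incidence structure of $\varphi$ into a constant number of paths so that each variable's placement is visible to all its clauses while all cross-gadget interactions remain the controlled utility-$0$ kind---most likely via a careful interleaving of gadgets or a few shared ``bus'' components. Once the layout and non-interference are in place, both directions of correctness reduce to a finite case analysis inside each gadget plus a buffer argument: the forward direction assembles local placements exactly as in \Cref{thm:DAGassignment}, while the reverse direction is driven by the unavoidable cat--mouse instability of any unsatisfied clause gadget.
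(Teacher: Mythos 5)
Your \NP-membership argument is fine and matches the paper. The hardness direction, however, is a design brief rather than a proof: the entire content of the reduction---the concrete variable and clause gadgets, the utilities, the topology, and above all the mechanism by which a variable's truth value reaches the clause gadgets in which it occurs---is left unconstructed. You yourself flag point~(i), propagating each variable's value to its clauses inside a constant number of components, as ``the genuinely hard part,'' and you do not resolve it. The tension is real: you propose to set all cross-gadget inherent utilities to~$0$ to prevent spurious jumps, but then the \emph{only} channel by which a distant variable gadget can influence a clause gadget is the global pattern of which nodes are empty. Turning that occupancy pattern into a reliable, per-clause ``relief node'' signal is precisely the construction that is missing, and it is not clear it can be done in the locally-routed, gadget-by-gadget form you describe. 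Likewise, the claim that a satisfied literal gives the fleeing agent of a cat--mouse pair ``a resting place the chasing agent cannot profitably follow to'' needs an actual utility calculation: if the chaser derives positive utility from proximity to the mouse and shares a component with the relief node, she will follow, so you must argue why the relief position remains weakly optimal for the mouse even after the chaser arrives. Without these pieces, neither direction of the correctness claim can be checked.

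For contrast, the paper avoids the propagation problem entirely by reducing from \textsc{Exact $3$-Cover} rather than SAT. There is a single ``disturber'' agent disliked by all set-agents, and the communication is purely global: the topology has four components whose sizes are tuned (via a padding inequality on $|R|$ and $|S|$) so that the set-agents can all escape the disturber if and only if $|R|/3$ of them can be packed as exact neighbours of their element-triples in the fourth component, i.e., iff an exact cover exists. The non-existence of stability in No-instances comes from a run-and-chase dynamic between the disturber and the leftover set-agents, established by counting how many nodes remain and estimating the utility of any set-agent forced into the large clique. If you want to salvage a SAT-based route, you would need an analogous global counting mechanism (or a planar/bounded-bandwidth layout argument) to replace the local signalling you currently rely on; as written, the proposal has a genuine gap at its core.
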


At a high level, for the hardness, we provide a reduction from the \NP-complete problem \textsc{Exact $3$-Cover} \citep{Karp72a}. 
An instance of \textsc{Exact $3$-Cover} consists of a tuple $(R,S)$, where~$R$ is a ground set whose size is a multiple of $3$ and $S$ is a collection of $3$-element subsets of~$R$. 
An instance is a Yes-instance if there exists a subcollection $S'\subseteq S$ of size $|R|/3$ that exactly covers~$R$.

Given an instance $(R,S)$ of \textsc{Exact $3$-Cover}, we construct a TDG instance as follows. 
There are three types of agents.
The first two types represent the elements of $R$ and the sets in $S$. 
The last type, which consists of only one agent, is a ``disturber'' strictly avoided by agents representing sets in $S$. 
The topology consists of four connected components. 
One component serves to hold all agents representing~$R$ and agents representing an exact $3$-cover from $S$. 
The edges of this component are designed in such a way that it is possible to assign agents to all nodes of this component if and only if the original instance is a Yes-instance. 
The other three components are cliques of carefully chosen sizes which either allow all other agents representing sets in~$S$ to flee the disturber (in case of a Yes-instance), or provide sufficient space to allow for a run-and-chase dynamics that prevents stability (in case of a No-instance).

\begin{proof}[Proof of \Cref{thm:exjump}]

First, note that membership in \NP\ is clear, because given an assignment~$\asgnm$, one can check if it is jump stable in polynomial time by simply checking for every agent~$i$ and every empty node~$v$ whether $u_i(\asgnm^{i\to v}) > u_i(\asgnm)$.

For the hardness, we provide a reduction from the \NP-complete problem \textsc{Exact $3$-Cover}, whose description can be found before this proof. 
Note that \textsc{Exact $3$-Cover} remains \NP-complete even if we assume that for each instance $(R,S)$, it holds that 
\begin{equation}
\label{eq:R-and-S}
|R| > 3 \quad\text{ and }\quad 3|S| > |R| > 3|S| - \frac{|R|}{3} + 10.    
\end{equation}
Indeed, for the first lower bound on $|R|$, instances with $|R| \le 3$ are easy. 
Also, if $3|S| < |R|$, then the instance is trivially a No-instance, whereas if $3|S| = R$, we can simply check whether each element in $R$ occurs exactly once in~$S$.
For the second lower bound on $|R|$, we can pad every instance $(R,S)$ by adding $3k$ new elements $\{z_i^j\colon j\in [3],i\in [k]\}$ to~$R$ and $k$ new sets $\{\{z_i^j\colon j\in [3]\} \colon i\in [k]\}$ to~$S$. 
This leads to a new instance $(R',S')$ with $|R'| = |R| + 3k$ elements and $|S'| = |S| + k$ sets. 
Note that $|R'| > 3|S'| - |R'|/3 + 10$ if and only if $|R| + 3k > 3(|S|+k) - (|R| + 3k)/3 + 10$, which is equivalent to $k > 3|S| -4|R|/3 + 10$. 
Hence, this padding can be done with a polynomial blowup of the original instance, and we obtain a reduced instance which satisfies the desired inequalities and which is a Yes-instance if and only if it originates from a Yes-instance.

Let $(R,S)$ be an instance of \textsc{Exact $3$-Cover} satisfying $|R| > 3$ and $3|S| > |R| > 3|S| - |R|/3 + 10$. 
To construct a \tdg, we let $N = N_R \cup N_S \cup \{c\}$, where $N_R = \{a_r\colon r\in R\}$ and $N_S = \{b_s\colon s\in S\}$, and let $\pi\colon R \to [|R|]$ be an arbitrary bijection. 
Then, we define the utilities as follows.
\begin{itemize}
    \item For all $r, r'\in R$, let 
    \[
    u_{a_r}(a_{r'}) = 
    \begin{cases}
       1 & \text{if } \pi(r') - \pi(r) \equiv 1\;(\bmod\; |R|);\\
       0 & \textnormal{otherwise.}
    \end{cases}
    \]
    \item For all $i\in N_R$ and $j\in N\setminus N_R$, let $u_i(j) = 0$.
    \item For all $s\in S$ and $r\in R$, let 
    \[
    u_{b_s}(a_{r}) = 
    \begin{cases}
       1 & \text{if } r\in s;\\
       -\frac{6}{|R|-3} & \textnormal{otherwise.}
    \end{cases}
    \]
    \item For all $s\in S$, let $u_{b_s}(j) = 0$ for every $j\in N_S$, and let $u_{b_s}(c) = - 10$.
    \item Let
    \[
    u_{c}(j) = 
    \begin{cases}
       1 & \text{if } j\in N_S;\\
       0 & \text{if } j\in N_R.
    \end{cases}
    \]
\end{itemize}
Note that $u_{b_s}(a_r)$ is well-defined because we assume that $|R| > 3$. 

It remains to specify the topology graph. 
This graph consists of four connected components, three of which are cliques of specific sizes, while the final component is a clique with further attached nodes that will mimic exact covers.
The fourth component is illustrated in \Cref{fig:V4}.
Formally, let $G = (V,E)$, where $V = \bigcup_{i\in [4]}V_i$ with $V_i = \{v_i^j\colon j\in [|S| +2]\}$ for $i\in [2]$, $V_3 = \{v_3^j\colon j\in [|S|-|R|/3]\}$, and $V_4 = \{v_4^j\colon j\in [|R| + |R|/3]\}$. 
The edge set is defined as follows.
\begin{itemize}
    \item For $i\in [3]$ and $v,w\in V_i$ with $v\ne w$, $\{v,w\}\in E$.
    \item For $i,j\in [|R|]$ with $i\ne j$, $\{v_4^i, v_4^j\}\in E$.
    \item For $i\in [|R|/3]$, it holds that $\{v_4^{3i-2},v_4^{|R|+i}\}, \{v_4^{3i-1},v_4^{|R|+i}\}, \{v_4^{3i},v_4^{|R|+i}\}\in E$.
    \item No further edges are in $E$.
\end{itemize}
This completes the description of our \tdg.
\begin{figure}
    \centering
    \begin{tikzpicture}[
	element/.style={shape=circle,draw, fill=white, inner sep = 0pt, minimum size = 25pt, scale = 0.9}
	]
		\draw (0,0) circle (3cm);
		\node at (0,0) {\Large $K_{12}$};
		\foreach[count = \k] \i in {5,4,3,2,1,0,11,10,9,8,7,6}
		{\node[element] (v\k) at (15+30*\i:2.5) {$v_4^{\k}$};}
		\node[element] (w1) at (135:4) {$v_4^{13}$};
		\node[element] (w4) at (225:4) {$v_4^{16}$};
		\node[element] (w3) at (315:4) {$v_4^{15}$};
		\node[element] (w2) at (45:4) {$v_4^{14}$};
		
        \foreach \j in {1,2,3,4}{
		\pgfmathparse{int(3*\j)}
		\draw (v\pgfmathresult) edge (w\j);
		\pgfmathparse{int(3*\j-1)}
		\draw (v\pgfmathresult) edge (w\j);
		\pgfmathparse{int(3*\j-2)}
		\draw (v\pgfmathresult) edge (w\j);
        } 
    \end{tikzpicture}
    \caption{Fourth component of the topology graph in the reduced instance in the proof of \Cref{thm:exjump} for the case $|R| = 12$.
	The component consists of a clique $K_{|R|}$ with $|R|$ nodes; in the figure, we omit the edges of the clique.
	Additionally, there are $|R|/3$ nodes that are each connected to a triplet of nodes from the clique.}
    \label{fig:V4}
\end{figure}
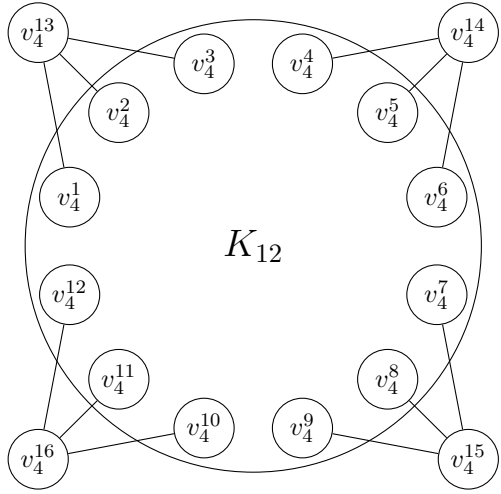

We now show that there exists a jump stable assignment in our \tdg if and only if $(R,S)$ is a Yes-instance.
First, assume that $(R,S)$ is a Yes-instance, and let $S'\subseteq S$ be an exact cover of $R$. 
Let $\sigma\colon R\to [|R|]$ be a bijection such that $\{\sigma^{-1}(3i-2),\sigma^{-1}(3i-1),\sigma^{-1}(3i)\}\in S'$ for all $i\in [|R|/3]$. 
In other words, $\sigma$ labels the elements in $R$ in such a way that the elements belonging to the same subset in $S'$ have consecutive numbers. 
Also, let $\tau\colon S\setminus S'\to [|S|-|R|/3]$ be an arbitrary bijection. 
Now, consider the assignment $\asgnm\colon N\to V$ with
\begin{itemize}
    \item $\asgnm(a_{\sigma^{-1}(i)}) = v_4^i$ for all $i\in [|R|]$,
    \item $\asgnm(b_s) = v_4^{|R|+i}$ for $s\in S'$ such that $s = \{\sigma^{-1}(3i-2),\sigma^{-1}(3i-1),\sigma^{-1}(3i)\}$,
    \item $\asgnm(b_s) = v_3^{\tau(s)}$ for $s\in S\setminus S'$, and
    \item $\asgnm(c) = v_1^1$.
\end{itemize}

We claim that $\asgnm$ is jump stable. Let $v\in V$ be an arbitrary empty node, i.e., $v\in (V_1\cup V_2)\setminus \{v_1^1\}$.
\begin{itemize}
    \item For $r\in R$, it holds that $u_{a_r}(\asgnm^{a_r\to v}) = 0 < u_{a_r}(\asgnm)$.
    \item For $s\in S$, it holds that $u_{b_s}(\asgnm^{b_s\to v}) \le 0,$
    whereas $u_{b_s}(\asgnm) = 3\cdot 1 + (|R|-3)\cdot\frac{1}{2}\cdot\left(-\frac{6}{|R|-3}\right) = 0$; here, the factor $\frac12$ is due to the reciprocal distance factor function. 
    \item It holds that $u_c(\asgnm^{c\to v}) = 0 = u_c(\asgnm)$. 
\end{itemize}
Hence, $\asgnm$ is jump stable.

Conversely, assume that there exists a jump stable assignment $\asgnm$. We make a key observation for the proof in the next claim.

\begin{claim}\label{cl:VRinV4}
In a jump stable assignment $\asgnm$, it holds that $\asgnm(N_R)\subseteq V_4$. 
\end{claim}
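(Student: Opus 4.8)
The plan is to show that the $|R|$ agents of $N_R$ are squeezed into the only component large enough to hold them, which is $V_4$. First I would isolate the sole relevant feature of these agents: in the friendship graph $N_R$ forms a single directed cycle, so each $a_r$ has exactly one friend, namely its cyclic successor $a_{r^+}$ with $u_{a_r}(a_{r^+}) = 1$, and utility $0$ for everyone else. Hence $u_{a_r}(\lambda) = f(d_G(\lambda(a_r),\lambda(a_{r^+})))$ when the two share a component, and $u_{a_r}(\lambda) = 0$ otherwise. The key local observation is that \emph{if $a_r$ and $a_{r^+}$ lie in different components, then $a_{r^+}$'s component must be completely occupied}: otherwise $a_r$ could jump onto any empty node of that component and raise its utility from $0$ to a strictly positive value, contradicting stability. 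Tracing the friendship cycle backwards from any component containing an $N_R$-agent, one must eventually enter it from outside (unless all of $N_R$ is already inside), so in any jump stable $\lambda$ \emph{every component containing an $N_R$-agent is full}.

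Next I would convert \eqref{eq:R-and-S} into the size facts that pin down the components. A short computation gives $|V_1|+|V_2|+|V_3| = 3|S|+4-|R|/3 < |R|$ (using $4|R| > 9|S|+30 > 9|S|+12$), so the three small components together cannot hold $N_R$; hence $V_4$ always contains an $N_R$-agent, and $|V_4| = 4|R|/3 > |R|$ confirms $V_4$ is the unique component that could hold all of them. Now suppose for contradiction that $N_R \not\subseteq V_4$. Then $N_R$ is spread over at least two components, each full by the previous paragraph; in particular $V_4$ is full. Since $|V_4|+|V_1| = 4|R|/3+|S|+2 > |R|+|S|+1 = n$, neither $V_1$ nor $V_2$ can be full alongside $V_4$, so its only possible companion is $V_3$, and indeed $|V_4|+|V_3| = |R|+|S| = n-1$. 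Thus a split forces exactly the configuration in which $V_3$ and $V_4$ are both full, $N_R$ is genuinely split between them, and a single agent occupies $V_1 \cup V_2$.

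To finish I would eliminate this tight configuration using the disturber $c$ and the set-agents $b_s$, each of which must have \emph{nonnegative} utility (else it flees to an empty node of a disturber-free component, gaining utility $0$). Two facts do the work. First, $V_3$ is a clique of diameter $1$ and $V_4$ has diameter $3$, so any $b_s$ sharing a component with $c$ lies within distance $3$ of $c$ and suffers at most $-10\,f(3) = -10/3$ from it, which its at most $+3$ from its three elements cannot offset; hence $c$ cannot share a component with any $b_s$, which (by a count against $|V_3| = |S|-|R|/3 < |S|$) drives all $b_s$-agents into $V_4$ whenever $c \in V_3 \cup V_4$. Second, $V_4$ has only $|R|/3$ apex nodes, whereas the number of $b_s$-agents forced into $V_4$ exceeds $|R|/3$; by pigeonhole at least one $b_s$ sits on a \emph{clique} node of $V_4$, where it sees $\Omega(|R|)$ of the $a_r$-agents of $V_4$ at distance at most $2$, of which at most three are its elements. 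With weights $-6/(|R|-3)$ this makes its utility strictly negative (the constants balance precisely because of the distance-$1$ clique neighbours and \eqref{eq:R-and-S}), so it flees — contradicting stability. Either way the split is impossible, giving $\lambda(N_R) \subseteq V_4$.

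The hard part is precisely this last step. The naive approach of summing all $b_s$-utilities fails, because the small component $V_3$ can \emph{shield} a set-agent from far-away $a_r$-agents (cross-component distances contribute factor $0$), so no global contradiction follows from aggregate utility alone. Instead the contradiction must be localized: one combines the disturber's strong repulsion, which evacuates $c$ away from every $b_s$, with the scarcity of the $|R|/3$ apex ``hiding spots'' in $V_4$ relative to the $|S| > |R|/3$ set-agents. This interplay, together with the reciprocal distance factor and the precise inequalities in \eqref{eq:R-and-S}, is exactly what forces an exposed $b_s$ and thereby collapses every split configuration.
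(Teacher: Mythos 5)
Your proposal is correct and follows essentially the same route as the paper's proof: the full-component argument for any component containing an $N_R$-agent, the size counts from \eqref{eq:R-and-S} forcing $V_3$ and $V_4$ to be full with a single agent left in $V_1\cup V_2$, the non-negativity requirement on every $b_s$, the exclusion of $c$ from any component containing a $b_s$, and the pigeonhole placing some $b_s$ on a clique node of $V_4$ with strictly negative utility. The only substantive difference is that you leave the final arithmetic as an assertion (``the constants balance precisely''), whereas the paper carries it out explicitly: at least $|R|-|S|-1\ge 5|R|/9+8/3$ non-friend $a_r$-agents sit at distance one from that $b_s$, each contributing $-6/(|R|-3)$, which indeed outweighs the $+3$ from its three elements.
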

\renewcommand\qedsymbol{$\vartriangleleft$} 
\begin{proof}
    We first present a roadmap for the proof of this claim.
    We start by proving that if an agent in $N_R$ occupies a node in a component $V_i$, then all nodes in $V_i$ must be occupied.
    We continue by showing that such an agent must exist in $V_4$ and no such agent can exist in $V_1$ or $V_2$.
    Consequently, the components $V_3$ and $V_4$ are fully occupied, and there is only one agent remaining.
    We then show that $c$ cannot be in $V_4$, which leads us to the fact that an agent from $N_S$ occupies a node in the clique of $V_4$.
    Arguments regarding the utility of this agent allow us to establish the claim by contradiction.

    We now prove the claim formally.
    Assume for contradiction that there exists an agent $i\in N_R$ with $\asgnm(i)\notin V_4$. 
    Then, there does not exist any $i\in [4]$ such that $\asgnm(N_R)\subseteq V_i$, because the sets~$V_i$ have size smaller than $|R|$ for $i\in [3]$, due to (\ref{eq:R-and-S}). 
    We will show that for every $i\in [4]$, if $V_i\cap \asgnm(N_R) \neq \emptyset$, then it holds that $V_i\subseteq \asgnm(N)$, i.e., all nodes in $V_i$ are occupied. 
    
    Consider therefore $i\in [4]$ with $V_i\cap \asgnm(N_R) \neq \emptyset$. 
    Assume for contradiction that there exists $v\in V_i\setminus \asgnm(N)$. 
    Because the agents in $N_R$ are assigned to at least two connected components of $G$, there exists $j\in [|R|]$ such that $a_{\pi^{-1}(j)}\in \asgnm^{-1}(V_i)$ and $a_{\pi^{-1}(j-1)}\notin \asgnm^{-1}(V_i)$. 
    (For convenience, if $j = 1$ we consider $j-1$ to be $|R|$.) 
    Hence, $u_{a_{\pi^{-1}(j-1)}}(\asgnm^{a_{\pi^{-1}(j-1)}\to v}) > 0 = u_{a_{\pi^{-1}(j-1)}}(\asgnm)$. 
    This is a contradiction to the jump stability of $\asgnm$.
    It must therefore be the case that $V_i\subseteq \asgnm(N)$.
    
    Next, note that $|V_1| + |V_2| + |V_3| = 3|S|-|R|/3 + 4$. 
    By (\ref{eq:R-and-S}), it holds that $|N_R| > |V_1| + |V_2| + |V_3|$. 
    Hence, there exists an agent $x\in N_R$ with $\asgnm(x)\in V_4$. 
    Our consideration in the previous paragraph implies that $V_4\subseteq \asgnm(N)$.
    Note that $|V_4| + |V_i| > |N|$ for $i\in [2]$, so it cannot be the case that $V_i\subseteq \asgnm(N)$ for some $i\in [2]$.
    Thus, $V_i \cap \asgnm(N_R) = \emptyset$ for $i\in [2]$. 
    This means that $\asgnm(N_R)\cap V_3 \ne\emptyset$, and consequently $V_3\subseteq \asgnm(N)$. 
    Now, $|V_3| + |V_4| = |N| - 1$. 
    Therefore, $|\asgnm^{-1}(V_1\cup V_2)| = 1$, and it must hold for every agent $x\in N_S$ that $u_x(\asgnm) \ge 0$, because $x$ could jump to $V_1$ or $V_2$ (whichever component is empty) to obtain non-negative utility.
    
    We show that there exists $s\in S$ with $\asgnm(b_s) \in \{v_4^i\colon i\in [|R|]\}$, i.e., an agent corresponding to a set in $S$ occupies a position in the large clique of $V_4$.
    First, $|V_3| + 1 < |S|$ because $|R| >3$.
    Therefore, there exists $s'\in S$ with $\asgnm(b_{s'}) \in V_4$.
    This implies that $\asgnm(c)\notin V_4$; indeed, otherwise we would have 
    \[
    u_{b_{s'}}(\asgnm) \le \frac 13 u_{b_{s'}}(c) + \sum_{r\in {s'}}u_{b_{s'}}(a_r) = -\frac{10}{3} + 3 < 0,
    \]
    contradicting the non-negative utility requirement of $b_{s'}$. 
    The existence of a desired $s\in S$ follows because $|\asgnm^{-1}(V_4) \cap N_R|\le |R|-1$ and $V_4\subseteq \asgnm(N)$.
    
    We derive a final contradiction by estimating the utility of an agent~$b_s$ such that $\asgnm(b_s) \in \{v_4^i\colon i\in [|R|]\}$ whose existence is guaranteed by the previous paragraph. 
    To this end, observe that by (\ref{eq:R-and-S}), we have $|R| \ge 3|S| - |R|/3 + 11$, which means that $4|R|/3 \ge 3|S| + 11$.
    Dividing both sides by $3$, we obtain $4|R|/9 \ge |S| + 11/3$, or equivalently, $|R| - |S| - 1 \ge 5|R|/9 + 8/3$. 
    Hence, at least $5|R|/9 + 8/3$ agents in $N_R$ are assigned to a node in $\{v_4^i\colon i\in [|R|]\}$ adjacent to the node occupied by~$b_s$. 
    Recall that $b_s$ has non-positive utility for other agents in $N_S$ as well as for $c$.
    Letting $T = \asgnm^{-1}(\{v_4^i\colon i\in [|R|]\})\setminus \{a_r\colon r\in s\}$, it follows that 
    \begin{align*}
u_{b_s}(\asgnm) 
&\le \sum_{r\in s} u_{b_s}(a_r) + \sum_{t\in T} u_{b_s}(t)\\
& \le 3 - \left(\left(\frac{5|R|}{9} + \frac83\right) - 3\right) \cdot \frac{6}{|R| - 3}
< 3 - \frac {10|R|-6}{3|R| - 9} < 0\text.
    \end{align*}
    This is a contradiction to the required non-negative utility of $b_s$, and our final contradiction for this claim.
\end{proof}

Observe that, for $i\in [2]$, it holds that $|V_i| = |N_S| + 2 > |N_S| + 1 = |N| - |N_R|$.
Combining this with \Cref{cl:VRinV4}, we find that there must be at least one empty node in each of $V_1$ and $V_2$.
If $u_{b_s}(\asgnm) < 0$ for some $s\in S$, then $b_s$ has a beneficial jump to at least one of these empty nodes.
Hence, $u_{b_s}(\asgnm) \ge 0$ for all $s\in S$.
In order for this to be true, $c$ cannot be assigned to the same connected component as any agent in~$N_S$.
This means that $u_c(\asgnm) = 0$.
To avoid a beneficial jump for~$c$, no agent from~$N_S$ can be in either $V_1$ or $V_2$.
Since $|V_3| + |V_4| = |N_R| + |N_S|$, it follows that $V_3\cup V_4$ is filled with exactly the agents from $N_R\cup N_S$.

Define $S' := \{s\in S\colon b_s\in \asgnm^{-1}(V_4)\cap N_S\}$. 
We will show that $S'$ partitions~$R$. 
Note that $|S'| = |V_4| - |N_R| = |R|/3$. 
Therefore, it suffices to show that all sets in $S'$ are pairwise disjoint. For any $s\in S'$, by the consideration in the last paragraph of the proof of \Cref{cl:VRinV4}, $\asgnm(b_s)\notin \{v_4^i\colon i\in [|R|]\}$.  
Hence, $\asgnm(\{b_s\colon s\in S'\}) = \{v_4^i\colon |R| + 1\le i \le 4|R|/3\}$, which means that $\asgnm(N_R) = \{v_4^i\colon 1\le i \le |R|\}$.
Let $s\in S'$, and suppose that $\asgnm(b_s) = v_4^{|R| + i}$ for some $i\in [|R|/3]$. 
We claim that $\asgnm(\{a_r\colon r\in s\}) = \{v_4^{3i-2},v_4^{3i-1},v_4^{3i}\}$. 
Indeed, if this is not the case, i.e., if some of the agents corresponding to elements of $R$ in $s$ are not assigned as a neighbor of $b_s$ in~$G$, then
\begin{align*}
    u_{b_s}(\asgnm) &= \sum_{r\in R} \frac{1}{\mathrm{d}_G(\asgnm(b_s),\asgnm(a_r))}\cdot u_{b_s}(a_r)\\
    &\le 2 + \frac 12 - \frac{6}{|R|-3} - \frac 12 (|R|-4)\cdot \frac{6}{|R|-3}\\
    & <  2 + 1 - \frac{3}{|R|-3}  - \frac 12 (|R|-4)\cdot \frac{6}{|R|-3}
    = 3 - 3\cdot \frac{|R|-3}{|R|-3} = 0\text,
\end{align*}
contradicting the required non-negative utility of agent~$b_s$.
Hence, we can indeed conclude that $\asgnm(\{a_r\colon r\in s\}) = \{v_4^{3i-2},v_4^{3i-1},v_4^{3i}\}$. Since this argument holds for any set $s\in S'$, we have that the sets in $S'$ are all disjoint.
This means that $S'$ partitions $R$, as desired.
\renewcommand\qedsymbol{$\square$}
\end{proof}

\section{Dynamics}
\label{sec:dynamics}

In this section, we investigate the dynamics induced by performing beneficial jumps. 
Dynamics offer an interesting distributed perspective on stability and have been examined, for instance, in hedonic games \citep{FMM21a,BBT22a,BBW21a,BBK23a}. 
We are interested in the following questions:
\begin{itemize}
    \item Given an initial assignment, is it \emph{possible} that the dynamics converges, i.e., does there exist a sequence of beneficial jumps that results in a jump stable assignment?
    \item Given an initial assignment, is it \emph{necessary} that the dynamics converges, i.e., do all sequences of beneficial jumps result in a jump stable assignment?
\end{itemize}

In the hedonic games literature, the complexity of these questions was first studied (in the conference version of the work) by \citet{BBW21a} in several classes of hedonic games \emph{excluding} ASHGs. 
Subsequently, \citet{BBT22a} presented conditions for the necessary convergence of dynamics in ASHGs. 
However, since previous work did not consider computational boundaries for dynamics in ASHGs, we cannot transfer such results to our setting.

We now analyze these questions for TDGs.
First, we observe that dynamics are guaranteed to converge for symmetric utilities, because the potential function in the proof of \Cref{thm:symmetric-jump-existence} increases with every beneficial jump.
In this sense, the proof yields more than merely the existence of a jump stable assignment. 
For asymmetric utilities, since a jump stable assignment may not exist, convergence of the dynamics is also no longer guaranteed. 
Nevertheless, if the friendship graph is  acyclic and utilities are non-negative, the convergence is retained. 

\begin{theorem}
\label{thm:dynamic-acyclic}
    For any distance factor function, if the friendship graph is acyclic and utilities are non-negative, then the jump dynamics is guaranteed to converge.
\end{theorem}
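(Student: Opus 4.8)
The plan is to build on the potential-function idea from the proof of \Cref{thm:symmetric-jump-existence}, but since the utilities are now asymmetric the scalar potential $\Phi=\sum_{i\in N}u_i(\asgnm)$ need not increase along beneficial jumps. Instead I would exploit the hierarchical structure that already underlies \Cref{thm:DAGassignment}. Fix a topological order $\pi\colon N\to[n]$ of the acyclic friendship graph, so that $\pi(i)>\pi(j)$ whenever $u_i(j)>0$. Because utilities are non-negative, the only agents~$j$ with $u_i(j)\ne 0$ are the friends of~$i$, and these all satisfy $\pi(j)<\pi(i)$. This yields the crucial dependency property: the utility $u_i(\asgnm)$ depends only on the location of~$i$ herself and on the locations of agents ranked strictly below her in~$\pi$. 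The non-negativity hypothesis is exactly what makes this work—an agent with a negative utility for some~$k$ would depend on~$k$'s position even if $\pi(k)>\pi(i)$, destroying the hierarchy.

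With this property in hand I would define a lexicographic potential. Ordering the agents by rank, set
\[
\Psi(\asgnm):=\bigl(u_{\pi^{-1}(1)}(\asgnm),\,u_{\pi^{-1}(2)}(\asgnm),\,\dots,\,u_{\pi^{-1}(n)}(\asgnm)\bigr),
\]
compared lexicographically with the first coordinate most significant. The key claim is that every beneficial jump strictly increases~$\Psi$. Suppose agent $\pi^{-1}(k)$ performs a beneficial jump; the jump changes only the location of $\pi^{-1}(k)$. For every $m<k$, the utility $u_{\pi^{-1}(m)}$ depends only on $\pi^{-1}(m)$ and on agents of rank below~$m<k$, none of which is $\pi^{-1}(k)$, so the first $k-1$ coordinates of~$\Psi$ are unchanged. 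The $k$-th coordinate $u_{\pi^{-1}(k)}$ strictly increases by the definition of a beneficial jump. Hence the two tuples agree on coordinates $1,\dots,k-1$ and the new one is strictly larger in coordinate~$k$, so $\Psi$ strictly increases lexicographically—regardless of how the later coordinates (the agents who have $\pi^{-1}(k)$ as a friend, all of rank above~$k$) may change.

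To conclude, I would note that the set of assignments is finite, so $\Psi$ takes only finitely many values; any sequence of beneficial jumps then produces a strictly lexicographically increasing sequence of values of~$\Psi$ in a finite totally ordered set, which must terminate. Therefore no infinite sequence of beneficial jumps exists, i.e., the jump dynamics converges.

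The argument is essentially self-contained, and the only real obstacle is the modeling insight that the asymmetric game still admits a potential—just a vector-valued lexicographic one rather than a scalar sum—together with pinning down precisely why a jump by $\pi^{-1}(k)$ leaves the higher-priority coordinates untouched while the affected coordinates are safely ignored by lexicographic comparison. Once the dependency property is stated, the verification is routine. As an equivalent phrasing, one can run the same idea as an induction on rank: agent $\pi^{-1}(1)$ has constant utility~$0$ and never jumps, and once all agents of rank below~$k$ have made their (finitely many) jumps, agent $\pi^{-1}(k)$ faces a fixed environment and can improve only finitely often; by induction every agent jumps finitely many times, so the dynamics halts.
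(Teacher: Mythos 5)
Your proof is correct and follows essentially the same approach as the paper: the paper also fixes a topological order of the acyclic friendship graph and shows that the vector of utilities ordered by rank strictly increases lexicographically with every beneficial jump, since agents of lower rank have zero utility for the deviator and are therefore unaffected by her move. The details, including the role of non-negativity in establishing the dependency property, match the paper's argument.
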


\begin{proof}
Consider an instance of \tdg such that the friendship graph is acyclic and utilities are non-negative. As in the proof of \Cref{thm:DAGassignment}, there exists a topological order $\pi$ of the agents in $N$, i.e., a function $\pi\colon N\to [n]$ such that $\pi(i) > \pi(j)$ whenever $u_i(j) > 0$. 
We shall define a potential function based on this order.

Consider a sequence of assignments $(\asgnm_k)_{k\ge 1}$, where for each $k\ge 1$, $\asgnm_{k+1} = \asgnm_k^{d_k\to v_k}$ for some agent $d_k$ and node $v_k$ which is empty in $\asgnm_k$. 
Associate any assignment $\asgnm$ with the vector 
\begin{align*}
    \Lambda(\asgnm) = (u_{\pi^{-1}(1)}(\asgnm), u_{\pi^{-1}(2)}(\asgnm),\dots, u_{\pi^{-1}(n)}(\asgnm))\text.
\end{align*}
For two vectors $\mathbf{x} = (x_1,\dots,x_n)$ and $\mathbf{y} = (y_1,\dots,y_n)$, we say that $\mathbf{x}$ \emph{lexicographically dominates} $\mathbf{y}$, denoted by $\mathbf{x} >_{\mathrm{lex}} \mathbf{y}$, if there exists $i\in \{1,\dots,n\}$ such that $x_j = y_j$ for all $j\in \{1,\dots,i-1\}$ and $x_i > y_i$.

We claim that for every $k\ge 1$, it holds that $\Lambda(\asgnm_{k+1}) >_{\mathrm{lex}} \Lambda(\asgnm_k)$.
Let $k\ge 1$ and consider the deviator $d_k$.
By definition of the topological order, for each $j < \pi(d_k)$, we have $u_{\pi^{-1}(j)}(d_k) = 0$.
Hence, for every $j < \pi(d_k)$, it holds that $u_{\pi^{-1}(j)}(\asgnm_{k+1}) = u_{\pi^{-1}(j)}(\asgnm_k)$, because this utility is not affected by $d_k$'s jump. 
It follows that the first $\pi(d_k) - 1$ entries of $\Lambda(\asgnm_k)$ and $\Lambda(\asgnm_{k+1})$ are identical.
Moreover, as $d_k$ improves her utility with the beneficial jump, the $\pi(d_k)$-th entry of $\asgnm_k$ increases, and therefore $\Lambda(\asgnm_{k+1}) >_{\mathrm{lex}} \Lambda(\asgnm_k)$.
\end{proof}

\Cref{thm:dynamic-acyclic} also shows that there exists a potential function with respect to which the jump dynamics is increasing. 
This implies that the problem of finding jump stable outcomes in instances with an acyclic friendship graph and non-negative utilities is contained in the complexity class \PLS. 
However, it is unlikely that this problem is \PLS-complete, because we can compute jump stable assignments for this class of games in polynomial time (cf.~\Cref{thm:DAGassignment}). 
In fact, \PLS-completeness for this problem would imply that $\PLS = \P$. 
Still, similarly to \PLS-complete problems, the jump dynamics, which is the natural local search dynamics for our problem, can run in exponential time.

\begin{theorem}\label{thm:expdirdyn}
    For any distance factor function, the jump dynamics may run for an exponential number of steps, even if the friendship graph is acyclic and utilities are non-negative.
\end{theorem}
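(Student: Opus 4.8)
The goal is to exhibit a TDG instance with an acyclic friendship graph and non-negative utilities, together with an initial assignment and a sequence of beneficial jumps, that runs for exponentially many steps. The natural strategy is to embed a binary counter into the jump dynamics, a standard technique for proving exponential lower bounds on local-search and improvement dynamics. The plan is to construct a gadget for each bit of an $n$-bit counter, so that a single ``increment'' of the counter corresponds to a cascade of beneficial jumps, and the total number of increments is $2^n$, forcing exponentially many jumps overall.

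Concretely, I would design the topology graph as a collection of bit-gadgets arranged so that each bit~$i$ has a ``low'' region and a ``high'' region on the graph, with the acyclic friendship graph providing the hierarchical dependencies among bits. Because the friendship graph must be acyclic, I would orient the dependencies so that higher-order bits are friends of (i.e., influenced by) lower-order bits in a way consistent with a topological order; by \Cref{thm:DAGassignment} and \Cref{thm:dynamic-acyclic}, any such instance is guaranteed to converge, so the only thing that needs engineering is the \emph{length} of the converging run. The key mechanism is that flipping bit~$i$ from $0$ to $1$ should only be beneficial once all lower bits are in their ``carry-ready'' configuration, and that this flip should in turn reset all lower bits, re-enabling their motion. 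Each such reset triggers the lower bits to run through their own cycles again, yielding the doubling that produces $\Omega(2^n)$ total jumps.

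The main obstacle will be enforcing the carry logic using only distance-based utilities with a \emph{strictly decreasing} distance factor function~$f$ and \emph{non-negative} inherent utilities, while keeping the friendship graph acyclic. Unlike classical counter constructions, here an agent's incentive to jump is governed purely by how distances to its friends change on the topology graph; I cannot use negative utilities to forbid configurations, so I must instead use the geometry of~$G$ together with node availability (empty nodes appearing or disappearing as other agents move) to gate each bit's motion. The delicate part is arranging the distances so that (a)~a high-order agent strictly prefers its ``high'' node only after the relevant low-order agents have vacated the intermediate nodes, and (b)~once it moves, the freed nodes make the low-order agents' ``reset'' jumps strictly beneficial again. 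I would verify these strict inequalities for the reciprocal factor function first, since the differences $f(d)-f(d')$ are then explicit, and then note that the argument depends only on monotonicity of~$f$, giving the ``for any distance factor function'' conclusion.

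Finally, I would exhibit the explicit starting assignment (all bits ``zero''), specify one canonical schedule of beneficial jumps realizing successive increments, and count: each increment from value~$t$ to~$t+1$ costs a number of jumps proportional to the number of trailing ones flipped plus one, and summing over $t = 0,\dots,2^n-1$ gives $\Theta(2^n)$ jumps in an instance of size polynomial in~$n$. Since it suffices to demonstrate that \emph{some} sequence of beneficial jumps is this long (the theorem asserts the dynamics \emph{may} run exponentially, not that it always does), I only need one such schedule rather than a worst-case analysis over all schedules.
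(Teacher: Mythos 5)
Your proposal is a plan rather than a proof: the entire technical burden of the theorem lies in actually constructing the bit-gadgets, and you explicitly defer that step (``the main obstacle will be enforcing the carry logic\dots'', ``the delicate part is arranging the distances so that\dots'') without resolving it. With non-negative utilities, an acyclic friendship graph, and an arbitrary strictly decreasing distance factor function, it is far from clear that the gating and reset mechanics you describe can be implemented: you cannot penalize configurations, an agent's incentive depends only on distances to agents it likes (which must all precede it in the topological order), and the construction must work uniformly for every $f$, not just the reciprocal one. Since no concrete topology graph, utilities, or verified sequence of beneficial jumps is given, the argument as written does not establish the statement.

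It is worth noting how the paper sidesteps every geometric difficulty you anticipate. It builds an additively separable hedonic game, which is a TDG whose topology graph is a union of $n$ cliques of size $n$ (cf.\ \Cref{sec:related}), so all relevant distances equal $1$ and the distance factor function is irrelevant. The instance has agents $a_0,\dots,a_k,b_1,\dots,b_k$ with $u_{a_j}(b_j)=1$ and $u_{a_j}(a_{j-1})=2$; by induction, each jump of $a_j$ triggers two jumps of $a_{j+1}$ (first to join $b_{j+1}$ for utility $1$, then to rejoin $a_j$ for utility $2$), yielding $2^k$ jumps by $a_k$ with only $2k+1$ agents. This recursive doubling plays the role of your binary counter, but requires no carry logic, no node-availability gating, and no distance calculations. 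If you want to salvage your approach, the honest fix is either to carry out the gadget construction in full detail (including verifying every strict inequality for arbitrary $f$), or to observe, as the paper does, that restricting to the clique-based ASHG subclass trivializes the geometry and reduces the problem to designing an exponential improvement sequence in a hedonic game.
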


\begin{proof}
    We prove that, for each $k\ge 1$, there exists an additively separable hedonic game~$H_k$ with $n = 2k + 1$ agents, non-negative utilities, and an acyclic friendship graph such that the jump dynamics runs for at least $2^k$ steps.
    Recall from \Cref{sec:related} that ASHGs form a subclass of TDGs.
    
    Let $k\ge 1$ and define an ASHG $(N,u)$, where $N = \{a_j,b_j\colon j\in [k]\}\cup \{a_0\}$ is a set of $2k+1$ agents and the utilities are given as follows:
    \begin{itemize}
        \item For each $j\in [k]$,  $u_{a_{j}}(b_j) = 1$ and $u_{a_{j}}(a_{j-1}) = 2$. 
        \item All other utilities are set to $0$. 
    \end{itemize}
    Clearly, the utilities are non-negative and the friendship graph is acyclic. An illustration of the friendship graph is given in \Cref{fig:expdirdyn}.
    
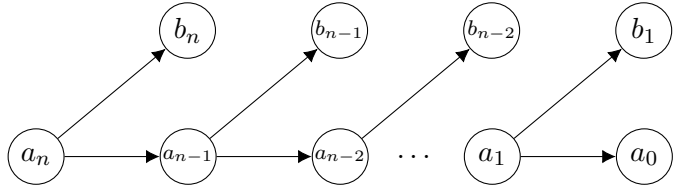
\begin{figure}
    \centering
    \begin{tikzpicture}[
	element/.style={shape=circle,draw, fill=white, scale = 0.9}
	]
        \node[element] (a0) at (0,0) {\color{white}$66$};
        \node[element] (b1) at ($(a0) + (40:2.6)$) {\color{white}$66$};
        \node[element] (a1) at ($(a0) + (0:2)$) {\color{white}$66$};
        \node[element] (b2) at ($(a1) + (40:2.6)$) {\color{white}$66$};
        \node[element] (a2) at ($(a1) + (0:2)$) {\color{white}$66$};
        \node[element] (b3) at ($(a2) + (40:2.6)$) {\color{white}$66$};
        \node (dots) at ($(a2) + (0:1)$) {$\dots$};
        \node[element] (a3) at ($(dots) + (0:1)$) {\color{white}$66$};
        \node[element] (b4) at ($(a3) + (40:2.6)$) {\color{white}$66$};
        \node[element] (a4) at ($(a3) + (0:2)$) {\color{white}$66$};
        
        \node at (a0) {$a_n$};
        \node at (b1) {$b_n$};
        \node at (a1) {\scriptsize $a_{n-1}$};
        \node at (b2) {\scriptsize $b_{n-1}$};
        \node at (a2) {\scriptsize $a_{n-2}$};
        \node at (b3) {\scriptsize $b_{n-2}$};

        \node at (a3) {$a_1$};
        \node at (b4) {$b_1$};
        \node at (a4) {$a_0$};
        
        \foreach \j in {1,2,4}{
		\pgfmathparse{int(\j-1)}
		\draw[->] (a\pgfmathresult) edge (a\j);
		\draw[->] (a\pgfmathresult) edge (b\j);
        } 
		\draw[->] (a2) edge (b3);
    \end{tikzpicture}
    \caption{Friendship graph of the game in the proof of \Cref{thm:expdirdyn}.}
    \label{fig:expdirdyn}
\end{figure}
We show by induction on $k$ that there exists a sequence of beneficial jumps in $H_k$ such that agent $a_k$ performs at least $2^k$ jumps. 
For the base case $k = 1$, start with a partition of the agents into singletons, and let $a_1$ first join $b_1$ and then~$a_0$. 

Assume that we have constructed the dynamics for some $k\ge 1$. 
Observe that $H_{k+1}$ is an extension of $H_k$ with the addition of agents $a_{k+1}$ and $b_{k+1}$. 
Given the dynamics constructed for $H_k$, we extend it to one for $H_{k+1}$ as follows.
Agents $a_{k+1}$ and $b_{k+1}$ are initially in singleton coalitions. 
After every jump by agent $a_k$ in the original dynamics, we insert the following jumps: $a_{k+1}$ joins $b_{k+1}$, then $a_{k+1}$ joins the coalition containing~$a_k$. 
This leads to a total of $2\cdot 2^k = 2^{k+1}$ jumps by $a_{k+1}$. 
Notice that all jumps in this extended sequence (whether by $a_{k+1}$ or not) are beneficial jumps. 
Indeed, original jumps are not affected by the new agents $a_{k+1}$ and $b_{k+1}$, because these two agents do not influence the utilities of original agents. 
Moreover, whenever $a_k$ jumps, she always leaves the coalition containing $a_{k+1}$.
The only exception is $a_k$'s first jump, when $a_{k+1}$ is still in a singleton coalition; however, even for this jump, $a_k$ does not join $a_{k+1}$.
Hence, agent $a_{k+1}$ has a utility of $0$ after each jump by $a_k$, so all jumps by $a_{k+1}$ are beneficial jumps.
\end{proof}

Since we make use of the connection between ASHGs and TDGs in \Cref{thm:expdirdyn}, the topology graph of the constructed game has a non-constant number of connected components. 
We conjecture that exponential running time is possible even when the number of components is constant. 

Next, we show that the questions of whether the jump dynamics starting from an initial assignment possibly or necessarily converges are both computationally hard.

\begin{restatable}{theorem}{dynconv}\label{thm:dynconv}
For any distance factor function, deciding whether the jump dynamics possibly converges is \NP-hard, even if utilities are restricted to be non-negative.
\end{restatable}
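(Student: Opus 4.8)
The plan is to reduce from \textsc{Exact $3$-Cover}, building on the construction behind \Cref{thm:exjump} but redesigned so that all utilities are non-negative and so that an explicit \emph{initial} assignment is part of the instance. I would arrange matters so that the jump dynamics starting from this initial assignment can reach a jump stable assignment if and only if the \textsc{Exact $3$-Cover} instance $(R,S)$ is a Yes-instance. The conceptual anchor is \Cref{thm:dynamic-acyclic}: with non-negative utilities the dynamics always converges when the friendship graph is acyclic, so any source of non-termination must stem from a friendship \emph{cycle} placed on a topology where the agents on the cycle cannot all be simultaneously adjacent to their friends---precisely the instability exploited in \Cref{thm:cyclicfs:treetop} and \Cref{thm:cycle-in-cycle}. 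As in those results I would only aim for \NP-hardness, not membership in \NP, since by \Cref{thm:expdirdyn} a converging sequence may have exponential length.

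For the construction I would keep the non-negative ``cover component'' of \Cref{thm:exjump}: the element-agents form a directed friendship cycle, and the component consists of a clique on $|R|$ nodes together with $|R|/3$ attachment nodes, each joined to a triple of clique nodes, which can be filled completely precisely when the set-agents sitting on the attachment nodes induce an exact cover of $R$. The part that must change is the mechanism that previously relied on the negative utilities of the set-agents and the disturber. In its place I would attach a \emph{trap gadget} driven purely by positive attraction: a short friendship cycle (of length $3$ or $5$) of auxiliary agents whose host topology---a small cycle or the tree of \Cref{thm:cyclicfs:treetop}---admits a jump stable placement only when one additional ``relief'' node is vacant, and otherwise forces a perpetual run-and-chase. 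The selection and trap gadgets would be wired to compete for this relief node, so that it becomes permanently free exactly when the cover component is perfectly packed.

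For the forward direction I would exhibit a concrete stabilizing schedule: starting from the initial assignment, first let the set-agents guess the cover $S'$ and route the element-agents into the clique so that the cover component becomes full, then let the remaining set-agents settle, and finally move the trap agents into their stable placement using the now-vacant relief node, with every jump individually beneficial. For the backward direction I would argue that when $(R,S)$ is a No-instance there is \emph{no} jump stable assignment at all (hence none is reachable): since no subcollection exactly covers $R$, the cover component can never be filled, so either some element-agent always has a beneficial jump toward an empty neighboring node of its friend, or, if the element-agents are packed elsewhere, the relief node is unavailable and the trap gadget retains a beneficial jump by the local analysis of \Cref{thm:cyclicfs:treetop} and \Cref{thm:cycle-in-cycle}.

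I expect the backward direction, together with the non-negativity requirement, to be the main obstacle. Replacing the disturber-based instability of \Cref{thm:exjump} by a purely attraction-based trap is delicate: I must guarantee both that the trap has no stable placement unless the relief node is free and that no ``clever'' rearrangement of the element- and set-agents can pack the cover component without a genuine exact cover. This calls for an invariant, preserved under every beneficial jump, that ties the occupancy of the relief node to the existence of an exact cover, thereby certifying that in a No-instance every reachable assignment still admits a beneficial jump. A secondary, more routine point is to verify reachability in the forward direction, ensuring that the selection substructure is controllable enough (essentially acyclic once the cover is fixed) for the intended packed configuration to be reached by a sequence of individually beneficial jumps rather than merely to exist as a static stable state.
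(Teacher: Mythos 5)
Your high-level skeleton matches the paper's: a reduction from \textsc{Exact $3$-Cover} in which a small cyclic-friendship gadget is the sole source of non-termination and can be ``defused'' exactly when an exact cover exists, with only \NP-hardness claimed. However, the central gadget is left unconstructed and, as specified, its stabilization condition is inverted. You want a trap that admits a stable placement only when a relief node is \emph{vacant} and cycles otherwise; but with non-negative utilities and a friendship cycle confined to a component, it is precisely the presence of an empty node in that component that enables the run-and-chase (this is the negative direction of \Cref{thm:cycle-in-cycle}), while a completely filled component is trivially stable for its inhabitants. The paper's gadget works the other way around: three agents with a $3$-cycle friendship graph sit on a $4$-cycle topology component and chase each other forever until a designated agent $\delta$ jumps \emph{into} the component and occupies its empty node; $\delta$'s utility for $\alpha_1$ is calibrated to $|S|-|R|/3+\tfrac12$ so that this jump becomes beneficial exactly after $|R|/3$ set-agents have vacated $\delta$'s component, which in turn is possible only by locking disjoint triples of element-agents onto the attachment nodes, i.e., only given an exact cover. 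Without an explicit gadget realizing your vacancy-based condition, the reduction does not go through.

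The second gap is the backward direction. You propose to show that in a No-instance \emph{no} jump stable assignment exists at all. That is a much stronger claim than the reachability statement you need, it is not established (you defer it to an unspecified ``invariant preserved under every beneficial jump''), and it is in tension with your own design choices: retaining the element-agents' friendship cycle from \Cref{thm:exjump} reintroduces a second potential source of cycling that you would then have to control. The paper instead argues directly about the dynamics: element-agents are anchored to an immobile agent $\rho_1$ (so each jumps at most once, from distance $2$ to distance $1$), set-agents jump at most once and only onto a node adjacent to all three of their elements (the $\epsilon$-calibration makes any other landing spot non-beneficial), and $\delta$ is the unique agent who can enter the trap component. Hence every agent outside the trap jumps boundedly often, and the dynamics can terminate only if $\delta$ enters the trap, which forces an exact cover. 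You would need to supply comparable per-agent accounting to close your argument.
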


As in the proof of \Cref{thm:exjump}, we reduce from  \textsc{Exact $3$-Cover}. 
Given an instance $(R,S)$ of \textsc{Exact $3$-Cover}, we construct a TDG containing, among other agents, three agents whose friendship graph forms a cycle. 
The jump dynamics will run into a cycle due to these three agents; the only way to prevent this is through a jump by a special agent $\delta$ initially assigned to a connected component containing agents representing the sets in~$S$.
The agent $\delta$ can disrupt the cycle if and only if agents representing an exact $3$-cover of $R$ jump to a connected component containing agents representing the elements of~$R$.

\begin{proof}[Proof of \Cref{thm:dynconv}]
Let an arbitrary distance factor function $f\colon \mathbb{Z}_{\ge 1}\to \mathbb{R}_{> 0}$ be given. We provide a reduction from \textsc{Exact $3$-Cover}.
Let $(R,S)$ be an instance of \textsc{Exact $3$-Cover}. 
We may assume without loss of generality\footnote{This can be assured by a similar padding technique as in the proof of \Cref{thm:exjump}. We add $3k$ new elements to $R$ and every $3$-element subset of these new elements to $S$, for a sufficiently large~$k$. 
This ensures the desired condition with a polynomial-size blowup of the instance.} that $|S|> \frac 23 |R|$.
To construct a \tdg, we let $N = N_A \cup N_R \cup N_S$, where $N_A = \{\alpha_1,\alpha_2,\alpha_3,\delta, \rho_1,\rho_2\}\cup\{\gamma_i\colon i\in [|R|]\}$, $N_R = \{a_r\colon r\in R\}$, and $N_S = \{b_s\colon s\in S\}$. The sets $N_R$ and $N_S$ consist of agents representing elements in the ground set $R$ and subsets in $S$, respectively. The agents in~$N_A$ are auxiliary agents with the following roles:
\begin{itemize}
    \item Agents $\alpha_1,\alpha_2,\alpha_3$ induce a cyclic friendship graph and can therefore potentially generate a cycle in the jump dynamics.
    \item Agent $\delta$ is the only agent that can interrupt this cycle.
    \item Agents in $N_R$ have as their only objective to be as close as possible to $\rho_1$. 
    Agent $\rho_2$ ensures that their distance to $\rho_1$ is initially~$2$.
    \item Agents $\gamma_i$ for $i\in [|R|]$ can block certain positions.
\end{itemize}
Let $\epsilon \in \left(0, \frac {f(1)-f(2)}{2f(1) + f(2)}\right)$; this is possible because $f$ is strictly decreasing. 
We will define utilities based on $\epsilon$. 
Note that the reduced instance is of polynomial size, because $\epsilon$ only depends on the distance factor function. 
We formally define the utilities as follows.
\begin{itemize}
    \item Let $u_{\alpha_1}(\alpha_2) = u_{\alpha_2}(\alpha_3) = u_{\alpha_3}(\alpha_1) = 1$.
    \item Let $u_{\delta}(\alpha_1) =|S| - \frac {|R|}3 + \frac 12$. For $s\in S$, let $u_{\delta}(b_s) = 1$.
    \item For $s\in S$, let $u_{b_s}(\delta) = 1$. Moreover, for $r\in s$, let $u_{b_s}(a_r) = \frac 13 (1 + \epsilon)$.
    \item For $r\in R$, let $u_{a_r}(\rho_1) = 1$.
    \item For $i\in [|R|]$, let $u_{\gamma_i}(\rho_2) = 1$.
    \item For all pairs of agents $i,j\in N$ such that $u_i(j)$ has not been specified, let $u_i(j) = 0$. 
    In particular, this ensures that agents $\rho_1$ and $\rho_2$ are indifferent between all assignments, and never have an incentive to jump.
\end{itemize}

Next, we define the topology graph, which consists of four connected components, and the initial assignment.
The first component of the topology graph is illustrated in \Cref{fig:dyncycle}, the second and fourth components are complete graphs, and the third component is a cycle of length $4$.
The first component contains all agents from $N_R$ as well as $\rho_1$ and $\rho_2$. 
The second component contains all agents from $N_S$ together with $\delta$.
The role of the third component is to ensure cyclic behavior of $\alpha_1,\alpha_2,\alpha_3$ unless agent~$\delta$ disrupts this behavior.
The fourth component is a clique that ``stores'' agents $\{\gamma_i\colon i\in [|R|]\}$ before they jump to the first component.

Formally, let $G = (V,E)$ with $V = \bigcup_{i\in [4]}V_i$, where 
\begin{itemize}
    \item $V_1 = \{z_1,z_2\}\cup \{x_i\colon i\in [|R|]\}\cup \{y_i^j\colon i\in [|R|/3], j\in [4]\}$,
    \item $V_2 = \{w_i\colon i \in [|S|+1]\}$, 
    \item $V_3 = \{t_i\colon i\in [4]\}$, and
    \item $V_4 = \{q_i\colon i\in [|R|]\}$.
\end{itemize}
The edge set is defined as follows.
\begin{itemize}
    \item For $i \in [|R|]$, $\{x_i,z_2\}\in E$.
    \item For $i\in [|R|/3]$ and $j,k\in [4]$ with $j\neq k$, $\{y_i^j,y_i^k\}\in E$.
    \item For $i\in [|R|/3]$ and $j\in [3]$, $\{y_i^j,z_1\}\in E$.
    \item $\{z_1,z_2\}\in E$.
    \item For $i,j \in [|S|+1]$ with $i\neq j$, $\{w_i,w_j\}\in E$.
    \item $\{t_1,t_2\}, \{t_2,t_3\}, \{t_3,t_4\}, \{t_4,t_1\}\in E$.
    \item For $i,j \in [|R|]$ with $i\neq j$, $\{q_i,q_j\}\in E$.
    \item No further edges are in $E$.
\end{itemize}

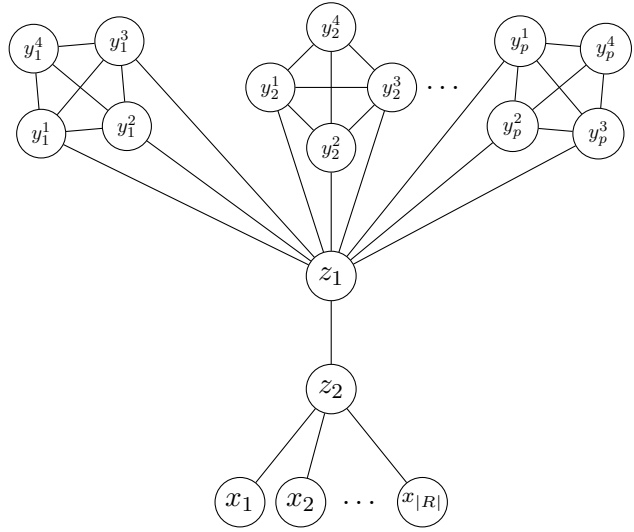
\begin{figure}
    \centering
\begin{tikzpicture}[
	element/.style={shape=circle,draw, fill=white,scale=0.7}
	]
    \node[element] (z1) at (0,0) {\color{white}\scriptsize $x_{|R|}$};
    \node[element] (z2) at (0,-1.5) {\color{white}\scriptsize $x_{|R|}$};
    \node at (0,0) {$z_1$};
    \node at (0,-1.5) {$z_2$};
    \node[element] (r1) at ($(z2) + (-1.2, -1.5)$) {\color{white}\scriptsize $x_{|R|}$};
    \node[element] (r2) at ($(z2) + (-.4, -1.5)$) {\color{white}\scriptsize $x_{|R|}$};
    \node at ($(z2) + (-1.2, -1.5)$) {$x_1$};
    \node at ($(z2) + (-.4, -1.5)$) {$x_2$};
    \node[element] (r3) at ($(z2) + (1.2, -1.5)$) {\color{white}\scriptsize $x_{|R|}$};
    \node at (r3) {\scriptsize $x_{|R|}$};
    
    \node at ($(z2) + (.4, -1.5)$) {$\dots$};
    
    \foreach[count = \k] \i/\j/\s in {-3.3/50/1,0/0/2, 3/310/p}{
    \node (yc\k) at ($(z1) + (\i, 2.5)$) {};
    \foreach[count = \l] \r in {180,270,0,90}{
    \pgfmathsetmacro{\rot}{int(\j + \r)}
    \node[element] (y\k\l) at ($(yc\k) + (\rot:.8)$) {$y_\s^\l$};
    }
    \foreach \s in {1,2,3} {
    \draw (z1) edge (y\k\s);
    }
    \draw (y\k1) edge (y\k2);
    \draw (y\k1) edge (y\k3);
    \draw (y\k1) edge (y\k4);
    \draw (y\k2) edge (y\k3);
    \draw (y\k2) edge (y\k4);
    \draw (y\k3) edge (y\k4);
    
    } 
    \node at ($(z1) + (1.5, 2.5)$) {$\dots$};

    \draw (z1) edge (z2);
    \foreach \i in {1,2,3}{
    \draw (r\i) edge (z2);

    }

\end{tikzpicture}
    \caption{Illustration of the first component of the topology graph of the reduced instance in \Cref{thm:dynconv,thm:dyncycle}, where $p = |R|/3$.}
    \label{fig:dyncycle}
\end{figure}

Let arbitrary permutations $\pi \colon R \to [|R|]$ and $\tau\colon S\to [|S|]$ be given. 
We specify the initial assignment $\asgnm\colon N \to V$ as follows.
\begin{itemize}
    \item For $r\in R$, $\asgnm(a_r) = x_{\pi(r)}$.
    \item For $i\in [2]$, $\asgnm(\rho_i) = z_i$.
    \item For $s\in S$, $\asgnm(b_s) = w_{\tau(s)}$.
    \item $\asgnm(\delta) = w_{|S|+1}$.
    \item For $i\in [3]$, $\asgnm(\alpha_i) = t_i$.
    \item For $i\in [|R|]$, $\asgnm(\gamma_i) = q_i$.
\end{itemize}

We claim that the jump dynamics starting from assignment~$\asgnm$ can converge in the reduced \tdg if and only if the \textsc{Exact $3$-Cover} instance $(R,S)$ is a Yes-instance.

First, assume that $(R,S)$ is a Yes-instance, i.e., there exists a subcollection $S'\subseteq S$ that partitions $R$.
Let $\phi: S' \to [|R|/3]$ be an arbitrary bijection.
The following sequence of jumps leads to a stable assignment:
\begin{itemize}
    \item For each $s\in S'$, let the agents in the set $\{a_r\colon r\in s\}$ jump to $\{y_{\phi(s)}^i\colon i\in [3]\}$. 
    These are beneficial jumps because these agents increase their utility from $f(2)$ to $f(1)$.
    \item For each $s\in S'$, let $b_s$ jump to $y_{\phi(s)}^4$. This is beneficial because $b_s$ improves her utility from $f(1)$ to $3\cdot\frac 13 (1+\epsilon)f(1) = (1+\epsilon)f(1)$.
    \item Let $\delta$ jump to $t_4$. This is an improvement for $\delta$ from a utility of $\left(|S| - \frac{|R|}3\right)\cdot f(1)$ to $\left(|S| - \frac{|R|}3 + \frac 12 \right)\cdot f(1)$.
    \item Let $\{\gamma_i \colon i\in [|R|]\}$ jump to $\{x_1,\dots, x_{|R|}\}$.
\end{itemize}

We claim that we have reached a state where no agent has a beneficial jump. 
To see this, note that all nodes in $V_1\cup V_3$ are occupied, and therefore no jump to either of these components is possible.
Also, all nodes in $V_4$ are empty, so since all agents already receive non-negative utility, they do not have a beneficial jump to $V_4$.
Finally, $V_2$ only contains agents of the type $b_s$, which means that only $\delta$ may have an incentive to jump there, but $\delta$ is better off staying at $t_4$.
Consequently, the jump dynamics can converge if $(R,S)$ is a Yes-instance.

Conversely, assume that the jump dynamics can converge in the reduced \tdg when starting from assignment $\asgnm$.

Note that the instance restricted to agents $\alpha_1,\alpha_2,\alpha_3$ and topology restricted to the subgraph induced by $V_3$ is an instance where no jump stable assignment exists (\Cref{thm:cycle-in-cycle}). 
Since agents $\alpha_1,\alpha_2,\alpha_3$ cannot improve their utility by jumping to a node outside $V_3$, the dynamics will cycle indefinitely unless some agent jumps to a node in~$V_3$, and the only agent who has a beneficial jump of this type is $\delta$. 
By jumping to $V_3$, agent~$\delta$ receives a utility of at most $\left(|S| - \frac{|R|}3 + \frac 12 \right)\cdot f(1)$. 
Apart from a jump to $V_3$, agent~$\delta$ could only jump when following agents of the type $b_s$. 
Since $|S| > \frac 23 |R|$, this would only be possible if more than $|R|/3$ such agents have already jumped. 
As a result,
$\delta$ can only jump once at least $|R|/3$ agents in $N_S$ have left $V_2$. 

Let $S'$ consist of the first $|R|/3$ agents $b_s$ to jump in the dynamics.
Recall that $\rho_1$ and $\rho_2$ are indifferent between all assignments and never have an incentive to jump.
Hence, agents $a_r$ for $r\in R$ would never leave the first component.
This means that the agents in $S'$ would only jump to this component when they jump for the first time.
Each such agent $b_s$ would only jump there if she has the agents $\{a_r\colon r\in s\}$ as direct neighbors, because otherwise she would receive utility at most 
\begin{align*}
    2 \cdot\frac 13 (1 + \epsilon) f(1) + \frac 13 (1 + \epsilon) f(2) 
    &= \frac 13 (1+\epsilon)(2f(1) + f(2))\\
    & < \frac 13 \left(1+\frac {f(1)-f(2)}{2f(1) + f(2)}\right)(2f(1) + f(2)) = f(1)\text,
\end{align*}
which would not be an improvement for her.

Since agents $\rho_1$ and $\rho_2$ never leave $z_1$ and $z_2$, and agents $a_r$ for $r\in R$ do not have an incentive to jump to $y_i^4$ for $i\in [|R|/3]$, it must be that for each $s\in S'$, the three agents in $\{a_r\colon r\in s\}$ jump to the nodes $\{y_i^j\colon j\in [3]\}$ for some $i\in [|R|/3]$, and then $b_s$ jumps to $y_i^4$. 
In particular, this implies that the sets in $S'$ are disjoint, and so $S'$ partitions $R$.
Hence, $(R,S)$ is a Yes-instance, as desired.
\end{proof}

The proof for deciding whether the dynamics necessarily converges is similar. 
In this case, the special agent~$\delta$ initially blocks cycling and has to perform a jump to initiate cycling. 
In contrast to the reduction in the proof of \Cref{thm:dynconv}, we have an additional set of auxiliary agents of type $\sigma_i$, which are liked by $\delta$ and which initially occupy nodes in the neighborhood of $\delta$.
The idea is that $\delta$ can only perform a jump after all agents $\sigma_i$ have left their initial nodes, and then $\delta$ wants to follow them.
To make this possible, a set of agents representing an exact $3$-cover of the original instance has to vacate their nodes to create space for the agents $\sigma_i$. 
Apart from some technical differences, large parts of the proof employ similar arguments as the proof of \Cref{thm:dynconv}.
In particular, the first component of the topology graph is similar, and the desired jump dynamics for the agents representing $R$ and $S$ is the same.

\begin{restatable}{theorem}{dyncycle}\label{thm:dyncycle}
For any distance factor function, deciding whether the jump dynamics necessarily converges is \coNP-hard, even if utilities are restricted to be non-negative.
\end{restatable}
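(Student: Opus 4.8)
The plan is to prove \coNP-hardness by reducing from \textsc{Exact $3$-Cover}, arranging the construction so that an instance $(R,S)$ is a Yes-instance if and only if there exists an infinite sequence of beneficial jumps; equivalently, the jump dynamics necessarily converges exactly on the No-instances. I would reuse the architecture from the proof of \Cref{thm:dynconv}: a first component encoding the cover (element agents $a_r$ sitting on ``$x$-nodes'', a cover gadget of ``$y$-nodes'', and the indifferent anchors $\rho_1,\rho_2$), a clique component holding the set agents $b_s$, and the three-agent cycle gadget $\alpha_1,\alpha_2,\alpha_3$ on a $4$-cycle. The engine driving non-termination is \Cref{thm:cycle-in-cycle}: three agents whose friendships form a $3$-cycle admit no jump stable assignment on a cycle topology, so whenever this gadget has a free node it generates beneficial jumps forever, and since these agents like only each other and gain nothing from other components, they never leave the gadget. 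Moreover, the only directed cycle in the whole friendship graph is this $\alpha$-triangle, so by the acyclic-termination reasoning behind \Cref{thm:dynamic-acyclic} every run in which the gadget stays blocked must terminate. Hence ``non-convergence is possible'' is synonymous with ``the $\alpha$-gadget can be made to run,'' which happens precisely when its fourth node becomes free.

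The crucial reversal relative to \Cref{thm:dynconv} is the role of the special agent $\delta$. Here $\delta$ initially occupies the fourth node of the gadget, so the gadget is full and cannot cycle; to \emph{start} cycling, $\delta$ must be lured away. To control when $\delta$ leaves, I introduce the auxiliary agents $\sigma_i$, which $\delta$ likes and which are initially placed in $\delta$'s neighborhood so that $\delta$ is content and has no beneficial jump. Only once all $\sigma_i$ have relocated does following them become strictly profitable, at which point $\delta$ vacates the gadget and thereby unleashes the $\alpha$-cycle. In turn, the $\sigma_i$ can relocate only after space has been freed for them, and---exactly as in \Cref{thm:dynconv}---this space is created only by set agents $b_s$ moving into the first component. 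Using the $\epsilon$-tuned utilities so that a $b_s$ profits from such a move only when it lands adjacent to all three of its elements $a_r$, this forces the corresponding element agents to have rearranged into a genuine exact cover.

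With the gadget in place, both directions follow the same two templates as \Cref{thm:dynconv}. For the forward direction I would exhibit the infinite run explicitly: for each set of a fixed exact cover, move its three $a_r$ agents onto the cover gadget and let $b_s$ follow, which frees the nodes the $\sigma_i$ need; relocate the $\sigma_i$; let $\delta$ follow them out of the gadget; then let $\alpha_1,\alpha_2,\alpha_3$ chase one another forever. For the converse I would argue that any infinite run must eventually activate the $\alpha$-gadget (by the acyclicity remark above), which requires $\delta$ to have left, which requires all $\sigma_i$ to have moved, which requires the cover-freeing relocation of $|R|/3$ set agents, and---by the $\epsilon$-based adjacency-forcing argument on the $b_s$---this relocation exhibits $|R|/3$ pairwise disjoint sets covering $R$, i.e., an exact cover.

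The main obstacle will be calibrating the incentives and the initial placement so that $\delta$ remains in the gadget under \emph{every} beneficial-jump order until, and only until, all $\sigma_i$ have moved, while simultaneously ruling out any ``spurious'' infinite run in a No-instance that cycles without ever forming a cover. Concretely, I must verify that $\delta$'s baseline utility (from its initial $\sigma_i$ neighbors) strictly exceeds the value of every alternative jump before the $\sigma_i$ depart, yet is strictly beaten once they gather at their new location; that no agent besides the $\alpha$'s can be the source of an infinite run, so that blocking the gadget really does force termination; and that the triggering chain $b_s \to \sigma_i \to \delta$ cannot be completed unless the tight $\epsilon$ threshold inherited from \Cref{thm:dynconv} pins each $b_s$ to its three elements. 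Managing the interaction between the $\sigma_i$ vacating their nodes and the gadget not prematurely opening---so that the gadget becomes active exactly at $\delta$'s departure and not before---is the delicate point that the formal proof must secure.
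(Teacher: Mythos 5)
Your proposal is correct and follows essentially the same route as the paper's proof: a reduction from \textsc{Exact $3$-Cover} in which the $3$-cycle of agents $\alpha_1,\alpha_2,\alpha_3$ on a $4$-cycle topology is initially blocked by $\delta$, who can only be lured away by the $\sigma_i$ agents, who in turn can only relocate after set agents $b_s$ (pinned by the $\epsilon$-tuned utilities to sit adjacent to all three of their elements) vacate enough nodes, which forces an exact cover. The paper implements exactly this triggering chain, including the calibration $u_\delta(\alpha_1) = |R|/3 - 1/2$ versus $u_\delta(\sigma_i)=1$ that you identify as the delicate point.
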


\begin{proof}

It suffices to show that deciding whether the jump dynamics possibly \emph{cycles} is \NP-hard.

Let an arbitrary distance factor function $f\colon \mathbb{Z}_{\ge 1}\to \mathbb{R}_{> 0}$ be given. We provide a reduction from \textsc{Exact $3$-Cover}.
Let $(R,S)$ be an instance of \textsc{Exact $3$-Cover}.
To construct a \tdg, we let $N = N_A \cup N_R \cup N_S \cup \{c\}$, where $N_A = \{\alpha_1,\alpha_2,\alpha_3,\delta, \rho_1,\rho_2, \sigma\}\cup\{\sigma_i\colon i \in[|R|/3]\}$, $N_R = \{a_r\colon r\in R\}$, and $N_S = \{b_s\colon s\in S\}$. The sets $N_R$ and $N_S$ consist of agents representing elements in the ground set $R$ and subsets in $S$, respectively. The agents in $N_A$ are auxiliary agents with the following roles:
\begin{itemize}
    \item Agents $\alpha_1,\alpha_2,\alpha_3$ are the only agents that can generate cycling of the dynamics.
    \item Agent $\delta$ has to jump in order to start the cycling.
    \item Agent $\sigma$ is in the connected component of the agents in $N_S$ and attracts agents $\sigma_i$ for $i \in [|R|/3]$. The only possible deviation of agent $\delta$ is to eventually join this component.
    \item Agents in $N_R$ have as their only objective to be as close as possible to $\rho_1$.
    Agent~$\rho_2$ ensures that their distance to $\rho_1$ is initially~$2$.
\end{itemize}
Let $\epsilon \in \left(0, \frac {f(1)-f(2)}{2f(1) + f(2)}\right)$; this is possible because $f$ is strictly decreasing. 
We will define utilities based on $\epsilon$. 
Note that the reduced instance is of polynomial size, because $\epsilon$ only depends on the distance factor function. 
We formally define the utilities as follows.
\begin{itemize}
    \item Let $u_{\alpha_1}(\alpha_2) = u_{\alpha_2}(\alpha_3) = u_{\alpha_3}(\alpha_1) = 1$.
    \item For $i \in[|R|/3]$, let $u_{\delta}(\sigma_i) = 1$.
    Let $u_{\delta}(\alpha_1) =\frac {|R|}3 - \frac 12$. 
    \item For $i \in[|R|/3]$, let $u_{\sigma_i}(\sigma) = 1$.
    \item For $s\in S$, let $u_{b_s}(\sigma) = 1$. Moreover, for $r\in s$, let $u_{b_s}(a_r) = \frac 13 (1 + \epsilon)$.
    \item For $r\in R$, let $u_{a_r}(\rho_1) = 1$.
    \item For all pairs of agents $i,j\in N$ such that $u_i(j)$ has not been specified, let $u_i(j) = 0$. 
    In particular, this ensures that agents $\rho_1, \rho_2, \sigma$ are indifferent between all assignments, and never have an incentive to jump.
\end{itemize}

Next, we define the topology graph, which consists of three connected components, and the initial assignment.
The first component of the topology graph is identical to the first component in the proof of \Cref{thm:dynconv} and is illustrated in \Cref{fig:dyncycle}, the second component is a complete graph, while the third component is a star with $|R|/3$ leaves whose center is also part of a cycle of length $4$.
The first component contains all agents from $N_R$ as well as $\rho_1$ and $\rho_2$.
The second component contains all agents from $N_S$ together with $\sigma$.
The role of the third component is to initiate cyclic behavior if the TDG is derived from a Yes-instance of \textsc{Exact $3$-Cover}.

Formally, let $G = (V,E)$ with $V = \bigcup_{i\in [3]}V_i$, where 
\begin{itemize}
    \item $V_1 = \{z_1,z_2\}\cup \{x_i\colon i\in [|R|]\}\cup \{y_i^j\colon i\in [|R|/3], j\in [4]\}$,
    \item $V_2 = \{w_i\colon i \in [|S|+2]\}$, and 
    \item $V_3 = \{t_i\colon i\in [4]\}\cup \{v_i\colon i\in [|R|/3]\}$.
\end{itemize}
The edge set is defined as follows.
\begin{itemize}
    \item For $i \in [|R|]$, $\{x_i,z_2\}\in E$.
    \item For $i\in [|R|/3]$ and $j,k\in [4]$ with $j\neq k$, $\{y_i^j,y_i^k\}\in E$.
    \item For $i\in [|R|/3]$ and $j\in [3]$, $\{y_i^j,z_1\}\in E$.
    \item $\{z_1,z_2\}\in E$.
    \item For $i,j \in [|S|+2]$ with $i\neq j$, $\{w_i,w_j\}\in E$.
    \item For $i\in [|R|/3]$, $\{t_4,v_i\}\in E$.
    \item $\{t_1,t_2\}, \{t_2,t_3\}, \{t_3,t_4\}, \{t_4,t_1\}\in E$.
    \item No further edges are in $E$.
\end{itemize}

Let arbitrary permutations $\pi \colon R \to [|R|]$ and $\tau\colon S\to [|S|]$ be given. 
We specify the initial assignment $\asgnm\colon N \to V$ as follows.
\begin{itemize}
    \item For $r\in R$, $\asgnm(a_r) = x_{\pi(r)}$.
    \item For $i\in [2]$, $\asgnm(\rho_i) = z_i$.
    \item For $s\in S$, $\asgnm(b_s) = w_{\tau(s)}$.
    \item $\asgnm(\sigma) = w_{|S|+1}$.
    \item For $i\in [3]$, $\asgnm(\alpha_i) = t_i$.
    \item For $i\in [|R|/3]$, $\asgnm(\sigma_i) = v_i$.
    \item $\asgnm(\delta) = t_4$.
\end{itemize}

We claim that the jump dynamics starting at assignment~$\asgnm$ can cycle in the reduced \tdg if and only if the \textsc{Exact $3$-Cover} instance $(R,S)$ is a Yes-instance.

First, assume that $(R,S)$ is a Yes-instance, i.e., there exists a subcollection $S'\subseteq S$ that partitions $R$.
Let $\phi: S' \to [|R|/3]$ be an arbitrary bijection.
The following sequence of jumps leads to cycling:
\begin{itemize}
    \item For each $s\in S'$, let the agents in the set $\{a_r\colon r\in s\}$ jump to $\{y_{\phi(s)}^i\colon i\in [3]\}$. 
    These are beneficial jumps because these agents increase their utility from $f(2)$ to $f(1)$.
    \item For each $s\in S'$, let $b_s$ jump to $y_{\phi(s)}^4$. 
    This is beneficial because $b_s$ improves her utility from $f(1)$ to $3\cdot\frac 13 (1+\epsilon)f(1) = (1+\epsilon)f(1)$.
    \item For each $i\in [|R|/3]$, let $\sigma_i$ jump to an empty node left by an agent $b_s$ for $s\in S'$; this is possible since $|S'| = |R|/3$.
    Each agent $\sigma_i$'s utility increases from $0$ to $f(1)$. 
    \item Let $\delta$ jump to $w_{|S|+2}$. This is an improvement for $\delta$ from a utility of $\left(\frac{|R|}3 - \frac 12\right)\cdot f(1)$ to $\frac{|R|}3\cdot f(1)$.
    \item The agents $\{\alpha_i\colon i\in [3]\}$ can perform an infinite cycle of beneficial jumps on the $4$-cycle induced by $\{t_i\colon i\in [4]\}$, by \Cref{thm:cycle-in-cycle}.
\end{itemize}

Hence, the jump dynamics can cycle if $(R,S)$ is a Yes-instance.

Conversely, assume that the jump dynamics can cycle in the reduced \tdg when starting from assignment $\asgnm$.
Recall that agents $\rho_1,\rho_2,\sigma$ will never jump.
As a result, agents $a_r$ for $r\in R$ will at most jump to a direct neighbor of $\rho_1$.

Let $S' = \{s\in S\colon b_s \text{ jumps in the dynamics}\}$.
For $s\in S$, agent~$b_s$ may only jump to the first connected component. 
Each such agent would only jump there if she has the agents $\{a_r\colon r\in s\}$ as direct neighbors, because otherwise she would receive utility at most 
\begin{align*}
    2\cdot \frac 13 (1 + \epsilon) f(1) + \frac 13 (1 + \epsilon) f(2) 
    &= \frac 13 (1+\epsilon)(2f(1) + f(2))\\
    & < \frac 13 \left(1+\frac {f(1)-f(2)}{2f(1) + f(2)}\right)(2f(1) + f(2))
    = f(1)\text,
\end{align*}
which would not be an improvement for her. 
Therefore, the only possibility for $b_s$ to jump is that the three agents in $\{a_r\colon r\in s\}$ jump to the nodes $\{y_i^j\colon j\in [3]\}$ for some $i\in [|R|/3]$, and then $b_s$ jumps to $y_i^4$. 
In particular, this implies that the sets in $S'$ are disjoint.
Note also that once $b_s$ is at $y_i^4$, she will not make any more jumps.

We will conclude the proof by showing that $|S'|\ge |R|/3$, which implies that the sets in $S'$ cover all elements in $R$.
To see this, observe that agents $\sigma_i$ for $i\in [|R|/3]$ can jump at most once, and this jump must be to the second component of $G$.
Moreover, agents $\alpha_i$ for $i\in [3]$ can only jump after agent $\delta$ does. 
Since we have already seen that each agent outside $\{\alpha_i\colon i\in [3]\}\cup \{\delta\}$ can jump at most once, the dynamics can cycle only if agent~$\delta$ jumps.

Now, the only agents relevant for $\delta$'s jump are $\alpha_1$ and $\sigma_i$ for $i \in [|R|/3]$.
Hence, the only possible jump of $\delta$ is to the second component. 
An inspection of the utilities of $\delta$ implies that this is possible only if all agents in $\{\sigma_i \colon i\in [|R|/3]\}$ have already jumped there \emph{and} there is an empty position for $\delta$ to jump to. 
Hence, at least $|R|/3$ nodes in the second component have to be vacated during the dynamics. 
The agents who vacate these nodes must correspond to the agents associated with the sets in $S'$. 
Hence, $|S'| \ge |R|/3$.
Since the sets in $S'$ are disjoint, it must be that $|S'| = |R|/3$, and therefore $(R,S)$ is a Yes-instance.
\end{proof}

\section{Discussion}

In this work, we have introduced the model of topological distance games (TDGs), which aim to capture scenarios in which the utility of an agent depends on both her inherent utilities for other agents and her distance from them.
We presented results on the existence, computational, and dynamical properties of jump stable assignments in our model.
While such assignments may not exist even under weak assumptions, it turns out that existence guarantees can be obtained for symmetric utilities as well as in the presence of structured friendship relations.

Given that TDGs combine important aspects of other models in coalition formation, including hedonic games, social distance games, and Schelling games (see \Cref{sec:related}), our study may inspire further work from several angles.
For instance, while we have shown that a jump stable assignment exists in a number of cases, one could try to obtain a more complete characterization of the topology and friendship graphs that admit such an assignment or even guarantee convergence of the jump dynamics.
Understanding precisely when it is possible to efficiently determine whether a jump stable assignment exists is also an interesting direction.
Yet another potential avenue is to extend our results to \emph{weighted} graphs, wherein the distance between two agents is the length of the shortest path (in terms of the sum of weights) between their assigned nodes. 
Finally, in addition to jump stability, other notions such as \emph{swap stability} or \emph{envy-freeness} are worth exploring in TDGs as well---we provide some initial results on swap stability in \Cref{app:swap}.

\section*{Acknowledgments}

This work was partially supported by the Deutsche Forschungsgemeinschaft under grants BR 2312/11-2 and BR 2312/12-1, by the Singapore Ministry of Education under grant
number MOE-T2EP20221-0001, and by an NUS Start-up Grant.
We thank the anonymous reviewers of the 37th AAAI Conference on Artificial Intelligence (AAAI 2023) and Theoretical Computer Science for their valuable feedback.

\bibliographystyle{plainnat}
\bibliography{aaai23}

\appendix

\section{Swap Stability}
\label{app:swap}

In this section, we provide some results on the notion of swap stability.
Given an assignment $\asgnm$ and agents $i,j\in N$, denote by $\asgnm^{i\leftrightarrow j}$ the assignment that results when $i$ and $j$ swap positions.
Note that with swap stability, we do not need to assume that $|V| > n$; in particular, we can allow $|V| = n$ as well.

\begin{definition}
Given an instance of TDG, an assignment~$\asgnm$ is said to be \emph{swap stable} if for every pair of agents $i,j\in N$, $u_i(\asgnm^{i\leftrightarrow j})\le u_i(\asgnm)$ or $u_j(\asgnm^{i\leftrightarrow j}) \le u_j(\asgnm)$.
\end{definition}

If utilities are symmetric, we can derive analogous results for swaps as we did in \Cref{sec:symmetric} for jumps.

\begin{proposition}
\label{prop:symmetric-swap-existence}
For any distance factor function and symmetric utilities, there exists a swap stable assignment.
\end{proposition}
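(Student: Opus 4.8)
The plan is to mimic the potential-function argument from the proof of \Cref{thm:symmetric-jump-existence}, reusing the same potential $\Phi(\asgnm) := \sum_{i\in N} u_i(\asgnm)$. Since the number of assignments is finite, a maximizer $\asgnm$ of $\Phi$ exists, and I would show that any such maximizer is swap stable. Assuming otherwise, there is a pair of agents $i,j$ that both strictly benefit from swapping, i.e.\ $u_i(\asgnm^{i\leftrightarrow j}) > u_i(\asgnm)$ and $u_j(\asgnm^{i\leftrightarrow j}) > u_j(\asgnm)$; I would derive a contradiction by showing that this swap strictly increases $\Phi$. (Note that, unlike in the jump setting, here we may have $|V| = n$, but this does not affect the argument, as there are still only finitely many assignments.)

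The heart of the argument is the identity
\[
\Phi(\asgnm^{i\leftrightarrow j}) - \Phi(\asgnm) = 2\bigl[(u_i(\asgnm^{i\leftrightarrow j}) - u_i(\asgnm)) + (u_j(\asgnm^{i\leftrightarrow j}) - u_j(\asgnm))\bigr].
\]
To obtain it, I would write $\Phi$ in its symmetric pairwise form $\Phi(\asgnm) = 2\sum_{\{k,l\}} f(d_G(\asgnm(k),\asgnm(l)))\,u_k(l)$, which is valid because $u_k(l)=u_l(k)$, and then track which unordered pairs change their distance when $i$ and $j$ exchange nodes. Only pairs containing exactly one of $i,j$ are affected: pairs disjoint from $\{i,j\}$ are untouched, and the pair $\{i,j\}$ itself keeps the same distance (swapping two nodes does not alter the distance between them), so its contribution cancels. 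Collecting the surviving terms and again invoking symmetry, I would show that the net change splits exactly into $i$'s private utility change plus $j$'s private utility change, each counted twice.

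The main obstacle is bookkeeping rather than conceptual: one must correctly isolate the third-party terms and verify that they recombine into the two agents' private utility changes, and in particular notice the crucial cancellation of the direct $i$–$j$ interaction—a feature with no analogue in the jump proof, where the deviating agent's interactions with a fixed set of others are what drive the computation. Once the identity is established, both bracketed differences are positive by the assumption of a mutually beneficial swap, so $\Phi$ strictly increases, contradicting the maximality of $\asgnm$ and completing the proof.
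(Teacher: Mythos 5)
Your proposal is correct and follows essentially the same route as the paper: the same potential $\Phi(\asgnm) = \sum_{i\in N} u_i(\asgnm)$, the same key identity $\Phi(\asgnm^{i\leftrightarrow j}) - \Phi(\asgnm) = 2\bigl[(u_i(\asgnm^{i\leftrightarrow j}) - u_i(\asgnm)) + (u_j(\asgnm^{i\leftrightarrow j}) - u_j(\asgnm))\bigr]$, and the same contradiction with maximality. The paper establishes the identity by directly expanding $\Phi$ and invoking symmetry term by term (including the vanishing $i$--$j$ cross term), whereas you package the same bookkeeping via the pairwise form of $\Phi$; this is only a cosmetic difference.
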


\begin{proof}
Consider an assignment maximizing the potential function $\Phi(\asgnm) := \sum_{i\in N}u_i(\asgnm)$; such an assignment must exist because the number of possible assignments is finite.
Assume for contradiction that agents $k$ and $k'$ both prefer to swap with each other.
This means that 
\begin{align}
u_{k}(\asgnm) < u_{k}(\asgnm^{k\leftrightarrow k'}) \text{ and } u_{k'}(\asgnm) < u_{k'}(\asgnm^{k\leftrightarrow k'}). 
\label{eq:potential-swap}
\end{align}
Let $v = \asgnm(k)$ and $w = \asgnm(k')$.
When $k$ and $k'$ swap positions, the potential function changes by 
\begin{align*}
 \Phi(\asgnm^{k\leftrightarrow k'}) - \Phi(\asgnm) &= \sum_{i\in N}u_i(\asgnm^{k\leftrightarrow k'}) - \sum_{i\in N}u_i(\asgnm) \\
&= (u_{k}(\asgnm^{k\leftrightarrow k'}) - u_k(\asgnm)) + (u_{k'}(\asgnm^{k\leftrightarrow k'}) - u_{k'}(\asgnm)) \\
&\quad + \sum_{j\in N\setminus\{k,k'\}} [f(d_G(\asgnm(j), w)) - f(d_G(\asgnm(j), v))]\cdot u_j(k) \\
&\quad + \sum_{j\in N\setminus\{k,k'\}} [f(d_G(\asgnm(j), v)) - f(d_G(\asgnm(j), w))]\cdot u_j(k') \\
&= (u_{k}(\asgnm^{k\leftrightarrow k'}) - u_k(\asgnm)) + (u_{k'}(\asgnm^{k\leftrightarrow k'}) - u_{k'}(\asgnm)) \\
&\quad + \sum_{j\in N\setminus\{k,k'\}} [f(d_G(w,\asgnm(j))) - f(d_G(v,\asgnm(j)))]\cdot u_k(j) \\
&\quad + \sum_{j\in N\setminus\{k,k'\}} [f(d_G(v,\asgnm(j))) - f(d_G(w,\asgnm(j)))]\cdot u_{k'}(j) \\
&= (u_{k}(\asgnm^{k\leftrightarrow k'}) - u_k(\asgnm)) + (u_{k'}(\asgnm^{k\leftrightarrow k'}) - u_{k'}(\asgnm)) \\
&\quad + \sum_{j\in N\setminus\{k,k'\}} [f(d_G(w,\asgnm(j))) - f(d_G(v,\asgnm(j)))]\cdot u_k(j) \\
&\quad + [f(d_G(w,v)) - f(d_G(v,w))]\cdot u_k(k') \\
&\quad + \sum_{j\in N\setminus\{k,k'\}} [f(d_G(v,\asgnm(j))) - f(d_G(w,\asgnm(j)))]\cdot u_{k'}(j) \\
&\quad + [f(d_G(v,w)) - f(d_G(w,v))]\cdot u_{k'}(k) \\
&= 2\left(u_{k}(\asgnm^{k\leftrightarrow k'}) - u_k(\asgnm)\right) + 2\left(u_{k'}(\asgnm^{k\leftrightarrow k'}) - u_{k'}(\asgnm) \right) 
> 0,
\end{align*}
where we use the symmetry of the utilities for the second equality and \eqref{eq:potential-swap} for the inequality.
This contradicts the assumption that $\asgnm$ maximizes the potential function $\Phi$.
\end{proof}

\begin{proposition}
\label{prop:symmetric-swap-PLS}
For any distance factor function and symmetric utilities, finding a swap stable assignment is \PLS-complete.
\end{proposition}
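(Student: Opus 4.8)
The plan is to mirror the two-part structure of the proof of \Cref{thm:symmetric-jump-PLS}: establish membership in \PLS\ through the potential function from \Cref{prop:symmetric-swap-existence}, and prove \PLS-hardness by a reduction from \textsc{Max Cut} under the \textsc{Flip}-neighborhood. The central new difficulty is that a swap, unlike a jump, must benefit \emph{both} participating agents and necessarily preserves the number of agents in each connected component; consequently, the bare two-clique construction of \Cref{thm:symmetric-jump-PLS}---where flipping a vertex corresponds to a single agent jumping between the two cliques---cannot be reused directly. I would resolve this with a \emph{mirror} gadget in which each vertex is represented by a pair of agents forced to opposite cliques, so that flipping the vertex is realized by swapping its pair.

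For membership, I take the potential $\Phi(\asgnm) = \sum_{i \in N} u_i(\asgnm)$ together with the neighborhood that assigns to each assignment the set of assignments obtainable by a \emph{beneficial} swap, i.e., one in which both agents strictly gain. By the computation in the proof of \Cref{prop:symmetric-swap-existence}, every beneficial swap of $k$ and $k'$ strictly increases $\Phi$ by $2(u_k(\asgnm^{k\leftrightarrow k'}) - u_k(\asgnm)) + 2(u_{k'}(\asgnm^{k\leftrightarrow k'}) - u_{k'}(\asgnm)) > 0$. Hence the local optima of this neighborhood are exactly the swap stable assignments, and since a beneficial swap---if one exists---can be found by inspecting the $O(n^2)$ pairs of agents, the search problem lies in \PLS.

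For hardness, given a \textsc{Max Cut} instance $H = (V_P, E_P, w)$ with $t = |V_P|$, I construct a \tdg\ whose topology graph consists of two cliques $A$ and $B$, each of size $t$, so that $|V| = |N| = 2t$ and every assignment places exactly $t$ agents in each clique (swap stability permits $|V| = n$). For every vertex $v \in V_P$ I introduce a pair of agents $\alpha_v$ and $\beta_v$, and I encode the \textsc{Max Cut} weights \emph{twice}: set $u_{\alpha_v}(\alpha_u) = u_{\beta_v}(\beta_u) = -w(\{u,v\})$ for $u \neq v$, leave $u_{\alpha_v}(\beta_u) = 0$ for $u \neq v$, and impose a strong mutual dislike $u_{\alpha_v}(\beta_v) = u_{\beta_v}(\alpha_v) = -M$ within each pair, where $M > 2\sum_{e \in E_P}|w(e)|$. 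All utilities are symmetric. The back-translation reads off the cut $P_{\asgnm} = \{v : \alpha_v \in \asgnm^{-1}(A)\}$ of $H$.

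The crux is to show that the swap stable assignments are precisely the \emph{mirrored} ones---those in which $\alpha_v$ and $\beta_v$ lie in different cliques for every $v$---and that these correspond to \textsc{Flip}-local optima of $H$. Since each clique holds exactly $t$ agents, any non-mirrored assignment has some pair together in one clique and, by a counting argument, some pair together in the other; swapping one agent from each such pair lets both escape the $-M f(1)$ penalty, which dominates all cut-related terms, so such an assignment is unstable. In a mirrored assignment, all within-clique distances equal $1$ and cross-clique influences vanish, and one checks that the only potentially beneficial swap is the exchange $\alpha_v \leftrightarrow \beta_v$ of a single pair---any other cross-clique swap would newly unite a pair and incur the $-M$ penalty for both movers. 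A direct computation, identical in form to the cut-change calculation of \Cref{thm:symmetric-jump-PLS} but now carried out in both the $\alpha$- and $\beta$-copies, shows that this pair-swap benefits $\alpha_v$ and $\beta_v$ \emph{simultaneously} exactly when moving $v$ to the other side strictly increases the cut. Hence a mirrored assignment is swap stable iff $P_{\asgnm}$ is \textsc{Flip}-locally optimal, which yields the desired \PLS-reduction. The main obstacle, as anticipated, is engineering the gadget so that a single balance-preserving swap that benefits both agents faithfully emulates a single vertex flip; the mirror construction with the large intra-pair penalty is exactly what overcomes this.
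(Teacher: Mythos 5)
Your proof is correct, and the membership argument is the same as the paper's; the hardness reduction, however, takes a genuinely different route. The paper does not reduce from \textsc{Max Cut} under \textsc{Flip} at all: it reduces from \textsc{Graph Partitioning} under the \textsc{Swap}-neighborhood (also \PLS-complete by Sch\"affer and Yannakakis), whose solutions are balanced bipartitions and whose local moves are exactly ``swap one vertex from each side.'' This source problem matches the swap structure so perfectly that the paper can keep the bare two-clique topology with a \emph{single} agent per vertex and utilities $u_{\alpha_x}(\alpha_y) = +w(\{x,y\})$ (the sign is flipped relative to the jump proof because \textsc{Graph Partitioning} is a minimization problem), and the correspondence between swaps and neighborhood moves is immediate---no gadget, no penalty term, no case analysis over non-balanced configurations. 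You instead keep \textsc{Max Cut}/\textsc{Flip} as the source and compensate for the mismatch between a single-vertex flip and a balance-preserving swap via the mirror gadget: two agents per vertex, a large symmetric intra-pair penalty $-M$, and the duplicated (negated) weights. Your verification obligations are correspondingly heavier---you must show that non-mirrored assignments are never stable (your counting argument that the two cliques then each contain a co-located pair is right), that within a mirrored assignment the only candidate beneficial swaps are matched-pair exchanges (any other cross-clique swap reunites a pair and the $-M$ term dominates since $M > 2\sum_{e}|w(e)|$), and that a matched-pair exchange benefits both copies simultaneously iff the flip increases the cut---but each of these checks goes through. In short, both reductions are valid; the paper's buys simplicity by choosing a source problem whose neighborhood already is a swap, while yours shows that the more familiar \textsc{Max Cut}/\textsc{Flip} problem suffices at the cost of a gadget.
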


\begin{proof}
Membership in \PLS\ is clear, as we can consider the potential $\Phi(\asgnm) := \sum_{i\in N}u_i(\asgnm)$ from the proof of \Cref{prop:symmetric-swap-existence} as the measure to be maximized. 
Checking for a local improvement only requires the inspection of a polynomial number of swaps.

For \PLS-hardness, we provide a reduction from the \PLS-complete problem \textsc{Graph Partitioning} under the \textsc{Swap}-neighborhood \citep{ScYa91a}.
An instance of \textsc{Graph Partitioning} consists of a complete and undirected weighted graph on $2t$ vertices. 
A solution is a partition of the vertex set into two subsets of size $t$, where the \emph{cut} between the two subsets, i.e., the total weight of edges between vertices of the two subsets, is to be (locally) minimized.
Given a solution of an instance of \textsc{Graph Partitioning}, its \textsc{Swap}-neighborhood contains all partitions in which exactly one vertex from each subset is moved to the other subset, i.e., the two vertices are swapped.

Consider an instance $H = (V_P,E_P,w)$ of \textsc{Graph Partitioning} where $V_P$, $E_P$, and $w\colon E_P\to \mathbb R$ are the vertex set, the edge set, and the weight function, respectively.
Let $t := |V_P|/2$.
We define the reduced \tdg as follows. 
Let $N = \{\alpha_v\colon v\in V_P\}$ be the set of agents. 
The utility functions are given by $u_{\alpha_x}(\alpha_y) = w(\{x,y\})$, where $x,y\in V_P$; clearly, this defines a symmetric utility function.
The topology graph $G = (V,E)$ is given by $V = A\cup B$ where $A = \{a_i\colon i\in [t]\}$, $B = \{b_i\colon i\in [t]\}$, and $E = \{\{a_i,a_j\},\{b_i,b_j\}\colon 1\le i < j\le t\}$. 
In other words, the topology graph consists of two cliques of size $t$.

Now, every assignment $\asgnm$ induces the $2$-partition $P_{\asgnm} = (A_P, B_P)$ with $A_P = \{x\in V_P\colon \alpha_x\in \asgnm^{-1}(A)\}$ and $B_P = \{x\in V_P\colon \alpha_x \in \asgnm^{-1}(B)\}$.
Note that each of the two subsets in the partition has size $t$.

Consider a pair of vertices $x\in A_P$ and $y\in B_P$. 
The change in the weight of the cut when $x$ and $y$ swap their partition classes, that is, the change in weight when going from the cut induced by $P_{\asgnm}$ to the cut induced by $P_{\asgnm^{\alpha_x\leftrightarrow \alpha_y}}$, is exactly 
\begin{align*}
     &\Bigg(\sum_{z\in A_P\setminus \{x\}} w(x,z) - \sum_{z\in B_P} w(x,z)\Bigg) 
     + \Bigg(\sum_{z\in B_P\setminus \{y\}} w(y,z) - \sum_{z\in A_P} w(y,z)\Bigg) \\
     &= \Bigg(\sum_{z\in A_P\setminus \{x\}} w(x,z) - \sum_{z\in B_P\setminus\{y\}} w(x,z)\Bigg) 
     + \Bigg(\sum_{z\in B_P\setminus \{y\}} w(y,z) - \sum_{z\in A_P\setminus\{x\}} w(y,z)\Bigg) \\
    &=  (u_{x}(\asgnm) -  u_x(\asgnm^{x\leftrightarrow y})) + (u_{y}(\asgnm) -  u_y(\asgnm^{x\leftrightarrow y}))\text.
\end{align*}

Hence, by the computations in \Cref{prop:symmetric-swap-existence}, the change in the value of a cut after $x\in A_P$ and $y\in B_P$ swap their partition classes in $P_{\asgnm}$ is exactly half of the negative of the change of the potential $\Phi$ when going from $\asgnm$ to $\asgnm^{x\leftrightarrow y}$. 
Therefore, $P_{\asgnm}$ is locally optimal if and only if $\asgnm$ is. Note that the change in the sign of the differences is important because \textsc{Graph Partitioning} is a minimization problem whereas finding a swap stable assignment is a maximization problem (with respect to the potential $\Phi$).
\end{proof}

As is the case with jump stability, a swap stable assignment is no longer guaranteed to exist for asymmetric utilities, and deciding the existence of such an assignment is \NP-complete.
In fact, for swap stability, these results already follow from prior work on the \emph{roommates problem}.

\begin{proposition}
\label{prop:roommate-nonexistence}
For any distance factor function, there exists an instance of TDG with no swap stable assignment.
This holds even if the topology graph~$G$ is a union of disjoint edges and utilities are non-negative.
\end{proposition}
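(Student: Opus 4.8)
The plan is to exploit the fact that a disjoint-edges topology collapses a \tdg\ into a one-sided matching (roommates) market. When $G$ is a union of disjoint edges, any two agents placed on different edges lie in distinct connected components and therefore have distance factor $0$, so they exert no influence on one another. Each agent's utility thus reduces to $f(1)\cdot u_i(r_i)$, where $r_i$ denotes the agent sharing $i$'s edge; since $f(1)>0$, agent $i$'s preference over roommates is exactly that encoded by $u_i$. A swap between two agents on the \emph{same} edge changes no distance and hence no utility, whereas a swap between agents on \emph{different} edges merely exchanges their roommates. A swap-stable assignment is therefore precisely an \emph{exchange-stable} matching, in which no two agents would both strictly prefer to trade roommates, and non-existence follows from the known failure of exchange-stable matchings to exist in the roommates setting.

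To keep the argument self-contained I would exhibit an explicit instance with $n=4$, non-negative utilities, and $G$ equal to two disjoint edges (so that $|V|=n=4$, which is permitted for swap stability). Up to relabeling within edges there are only three assignments, namely the matchings pairing $\{1,2\},\{3,4\}$ (call it $M_1$), $\{1,3\},\{2,4\}$ ($M_2$), and $\{1,4\},\{2,3\}$ ($M_3$), each reachable from the others by one cross-edge swap. I would choose utilities so that the swaps $1\leftrightarrow 4$, $1\leftrightarrow 2$, and $2\leftrightarrow 4$ carry $M_1\to M_2\to M_3\to M_1$ and are beneficial for both participants: setting $u_1(4)=3,\,u_1(3)=2,\,u_1(2)=1$, then $u_2(1)=3,\,u_2(3)=2,\,u_2(4)=1$, then $u_4(2)=3,\,u_4(3)=2,\,u_4(1)=1$, and $u_3$ identically $0$, one checks directly that in each $M_\ell$ the prescribed swap strictly raises the utility of both swappers. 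Hence no $M_\ell$ is swap stable, and since these exhaust the assignments, no swap-stable assignment exists. The construction is independent of $f$, since only the constant $f(1)$ enters.

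The main obstacle is conceptual rather than computational: one must recognize that a swap exchanges \emph{partners} rather than forming a new pair, so the relevant solution concept is exchange stability and not the classical blocking-pair stability of the roommates problem. Once this is clear, the only delicate point is to make the instability cycle $M_1\to M_2\to M_3\to M_1$ logically consistent. Each transition admits two realizing swaps, and a careless selection over-constrains a single agent — for instance forcing the impossible chain $u_3(4)>u_3(2)>u_3(1)>u_3(4)$. The choice above sidesteps this by leaving agent $3$ unconstrained, after which the surviving inequalities are plainly satisfiable over the non-negative reals, and verifying the three beneficial swaps is a routine finite calculation.
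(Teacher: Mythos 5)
Your proof is correct and takes essentially the same route as the paper, which also reduces swap stability on a union of disjoint edges to exchange stability in the roommates problem and exhibits an explicit four-agent, two-edge instance (due to Alcalde) in which the three possible pairings form a cycle of mutually beneficial swaps. Your instance uses different utility values (and leaves agent $3$ indifferent), but the structure of the argument and the verification are the same.
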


\begin{proof}
The proposition follows from a roommates problem instance due to \citet{Alca94a}; for convenience, we reproduce the proof in our terminology.
Let $G$ consist of two disjoint edges and $n = 4$, and assume that the four agents have the utilities shown in the table below.
\renewcommand{\arraystretch}{1.2}
\begin{center}
\begin{tabular}{|c||c|c|c|c|}
\hline
Agent  & $1$ & $2$ & $3$ & $4$ \\
\hline \hline
$u_1$ & $0$ & $3$ & $2$ & $1$ \\
\hline
$u_2$ & $2$ & $0$ & $1$ & $3$ \\
\hline
$u_3$ & $3$ & $1$ & $0$ & $2$ \\
\hline
$u_4$ & $1$ & $2$ & $3$ & $0$ \\
\hline
\end{tabular}
\end{center}
We claim that this instance does not admit a swap stable assignment.
Indeed, if agents~$1$ and $2$ are together, then agents~$2$ and $3$ prefer to swap.
If agents~$1$ and $3$ are together, then agents~$1$ and $4$ prefer to swap.
Finally, if agents~$1$ and $4$ are together, agents~$1$ and~$2$ prefer to swap.
\end{proof}

\begin{proposition}
\label{prop:roommate-NP}
For any distance factor function, it is \NP-complete to decide whether there exists a swap stable assignment in a given TDG.
This holds even if the topology graph $G$ is a union of disjoint edges and utilities are non-negative.
\end{proposition}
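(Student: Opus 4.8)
The plan is to exploit the fact that, when $G$ is a disjoint union of edges, swap stability in the \tdg coincides exactly with \emph{exchange stability} in the classical roommates problem, so that both membership and hardness can be inherited directly. Membership in \NP\ is immediate: given an assignment~$\asgnm$, one checks all $\binom{n}{2}$ pairs $i,j$ and verifies in polynomial time that at least one of $i,j$ fails to strictly benefit from $\asgnm^{i\leftrightarrow j}$. For the structural correspondence, I would take $G$ to be a union of $n/2$ disjoint edges with $|V| = n$ (permissible for swap stability); then every assignment fills all nodes and induces a perfect matching of the agents. Each agent~$i$ is at distance~$1$ from her unique partner and lies in a different component from everyone else, so her utility is exactly $f(1)\cdot u_i(j)$, where $j$ is her partner; the distance factor function thus enters only through the positive constant $f(1)$.

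Next I would pin down that a beneficial swap is precisely an exchange-blocking pair. If $i$ currently shares an edge with $i'$ and $j$ with $j'$, then after $\asgnm^{i\leftrightarrow j}$ agent~$i$ is paired with $j'$ and $j$ with $i'$, i.e., $i$ and $j$ trade partners. This swap is beneficial for both agents if and only if $u_i(j') > u_i(i')$ and $u_j(i') > u_j(j')$, which is exactly the condition for $\{i,j\}$ to exchange-block the matching in the roommates instance whose preferences are induced by the utilities $u_i$. Hence an assignment is swap-stable if and only if the perfect matching it induces is exchange-stable, and this holds for \emph{every} distance factor function. (The non-existence instance of \Cref{prop:roommate-nonexistence}, drawn from \citet{Alca94a}, who introduced exchange stability, is precisely the $n=4$ special case of this correspondence.)

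For hardness I would reduce from the problem of deciding whether an exchange-stable matching exists in a roommates instance with complete strict preferences, which is \NP-hard by prior work on the roommates problem. Given such an instance on $n$ agents (with $n$ even), I construct the \tdg on the same agent set, let $G$ consist of $n/2$ disjoint edges, and set the inherent utilities to pairwise-distinct non-negative values consistent with the preference orders---for example, let $u_i(j)$ be the rank of $j$ from the bottom of $i$'s list, so that a larger utility means more preferred and all real partners receive positive utility. By the equivalence above, the constructed \tdg admits a swap-stable assignment if and only if the roommates instance admits an exchange-stable matching, which establishes \NP-hardness under exactly the claimed restrictions (union of disjoint edges, non-negative utilities).

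The routine verifications here are minimal; the crux is to isolate the exact equivalence between beneficial swaps in the disjoint-edges topology and exchange-blocking pairs, and then to invoke the matching equivalence so that the known \NP-completeness of exchange-stable roommates transfers verbatim. The only points demanding care are that the reduction be oblivious to $f$ (guaranteed, since only the constant $f(1)$ is ever used) and that the utilities encode strict preferences with non-negative values (immediate from the rank encoding), so that the hypotheses of the proposition are respected.
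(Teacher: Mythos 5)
Your proposal is correct and follows essentially the same route as the paper: \NP-membership by checking all pairs, and \NP-hardness by observing that swap stability on a disjoint union of edges coincides with exchange stability in the roommates problem, then invoking the known \NP-completeness of the exchange-stable roommates problem \citep{CeMa05a}. The only difference is that you spell out the equivalence between beneficial swaps and exchange-blocking pairs and the rank-based utility encoding explicitly, whereas the paper treats the hardness as an immediate consequence of the cited result.
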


\begin{proof}
The problem belongs to $\NP$ since we can verify that an assignment is swap stable in polynomial time by checking all pairs of agents.
The \NP-completeness is an immediate consequence of the fact that the exchange-stable roommates problem is \NP-complete \citep[Cor.~3.2]{CeMa05a}.
\end{proof}

From the dynamical point of view, we provide an example where a swap stable assignment exists, but \emph{any} sequence of beneficial swaps starting from a different assignment inevitably runs into a cycle.

\begin{figure}
    \centering	\begin{tikzpicture}[
	element/.style={shape=circle,draw, fill=white, scale = 0.9}
	]
	\pgfmathsetmacro{\distfromcenter}{1.5}
	\pgfmathsetmacro{\shifttostable}{4.3}
	
	\foreach \i in {1,2,3,4,5,6}{
        \pgfmathsetmacro{\ang}{60*\i}
		\node[element] (v\i) at (\ang:\distfromcenter) {\color{white}$0$};
		}

		\node at (v1) {$a_1$};
		\node at (v2) {$a_2$};
		\node at (v3) {$a_3$};
		\node at (v4) {$b_1$};
		\node at (v5) {$b_2$};
		\node at (v6) {$b_3$};

		\foreach \k in {1,2,3,4,5}
		{
		\pgfmathparse{int(\k+1)}
		\draw (v\k) edge (v\pgfmathresult);
		}
		\draw (v1) edge (v6);
		
	\foreach \i in {1,2,3,4,5,6}{
        \pgfmathsetmacro{\ang}{60*\i}
		\node[element] (w\i) at ($(\shifttostable,0)+(\ang:\distfromcenter)$) {\color{white}$0$};
		}

		\node at (w1) {$a_1$};
		\node at (w2) {$a_2$};
		\node at (w3) {$a_3$};
		\node at (w4) {$b_3$};
		\node at (w5) {$b_2$};
		\node at (w6) {$b_1$};

		\foreach \k in {1,2,3,4,5}
		{
		\pgfmathparse{int(\k+1)}
		\draw (w\k) edge (w\pgfmathresult);
		}
		\draw (w1) edge (w6);

	\end{tikzpicture}
	\caption{Illustration for the proof of \Cref{prop:swapcycle}.  \label{fig:swapcycle}}
\end{figure}

\begin{proposition}
\label{prop:swapcycle}
    For any distance factor function, there exists an instance of TDG and assignments $\asgnm,\asgnm'$ such that every sequence of swaps starting from $\asgnm$ necessarily cycles, but $\asgnm'$ is swap stable.
\end{proposition}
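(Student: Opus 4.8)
The plan is to realize \Cref{fig:swapcycle} as a concrete \tdg: take the topology graph to be the $6$-cycle $C_6$ with nodes $1,\dots,6$ in cyclic order, place the six agents $a_1,a_2,a_3,b_1,b_2,b_3$ on it, and let $\asgnm$ be the ``aligned'' assignment $(a_1,a_2,a_3,b_1,b_2,b_3)$ read around the cycle and $\asgnm'$ its reflection $(a_1,a_2,a_3,b_3,b_2,b_1)$. Since $|V| = n = 6$, the only available moves are swaps, and a beneficial swap is one in which \emph{both} participants strictly improve. I would design the (asymmetric) utilities with a rotational ``chasing'' pattern---roughly, a directed cyclic friendship among $a_1,a_2,a_3$ and among $b_1,b_2,b_3$, together with a few cross terms between the two blocks---calibrated so that no arrangement reachable from $\asgnm$ can simultaneously satisfy every agent's most-preferred adjacency, while the reflected arrangement $\asgnm'$ can. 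All distances on $C_6$ take only the values $1,2,3$, so every utility comparison reduces to comparing a few terms of the form $f(1),f(2),f(3)$, and the construction should work uniformly for every strictly decreasing $f$.

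To show that $\asgnm'$ is swap stable, I would check all $\binom{6}{2}=15$ candidate swaps and verify that in each the swap strictly harms at least one of the two agents, so that no pair mutually consents. This is a finite computation that I would organize using the reflection symmetry of $\asgnm'$ to cut the number of genuinely distinct cases roughly in half; in each case the verification amounts to comparing at most three distance-factor terms before and after the swap.

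The heart of the proof is the claim that \emph{every} maximal sequence of beneficial swaps starting from $\asgnm$ is infinite and hence, the state space having only $6!$ assignments, must revisit a state and cycle. I would establish infiniteness by proving that no assignment reachable from $\asgnm$ is swap stable, so the dynamics can never halt. Concretely, I would first exhibit one explicit beneficial-swap cycle through $\asgnm$ to certify that cycling is possible, and then isolate an invariant of the reachable region---for instance a cyclic orientation of the occupied configuration, or a count of ``frustrated'' adjacencies---that is preserved by every mutually beneficial swap yet is violated by every swap stable assignment (in particular by $\asgnm'$). Such an invariant shows that the beneficial-swap digraph has no sink inside the reachable set, which forces cycling; if no single clean invariant presents itself, the tiny state space also permits a symmetry-reduced exhaustive verification.

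The main obstacle is precisely this \emph{necessity} of cycling, as opposed to the mere existence of a beneficial-swap cycle. Because $\asgnm'$ is swap stable and differs from $\asgnm$ by the single transposition of $b_1$ and $b_3$, I must calibrate the utilities so that swapping $b_1$ and $b_3$ out of $\asgnm$ is \emph{not} mutually beneficial (one of the two strictly loses), and, more delicately, so that no indirect sequence of beneficial swaps ever reaches $\asgnm'$ or any other stable arrangement. Balancing these requirements---keeping $\asgnm'$ stable and hand-verifiable while making it, and every stable state, unreachable from $\asgnm$---is where the real design effort lies.
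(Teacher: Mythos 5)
Your construction is essentially the paper's: the same $6$-cycle topology, the same two assignments, and two directed $3$-cycles of friendships---except that the paper needs \emph{no} cross terms between the blocks (it sets $u_{x_i}(x_{i+1})=1$ for $x\in\{a,b\}$ with indices mod $3$ and all other utilities $0$). With that choice, the ``main obstacle'' you flag resolves itself without any invariant or exhaustive search: from $\asgnm$ the \emph{unique} beneficial swap is $a_3\leftrightarrow b_3$, after which the unique one is $a_2\leftrightarrow b_2$, then $a_1\leftrightarrow b_1$, returning to $\asgnm$ up to rotation, so the dynamics is deterministic and necessarily cycles, while swap stability of $\asgnm'$ is the finite check you describe.
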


\begin{proof}
For notational convenience, let $N = \{a_i, b_i\colon i\in [3]\}$, and let the topology graph be a cycle of length $6$.
For each $x\in \{a,b\}$ and $i\in [3]$, agent~$x_i$'s utilities are given by $u_{x_i}(x_{i+1}) = 1$ and $u_{x_i}(y) = 0$ for all $y\ne x_{i+1}$, where indices are read modulo~$3$.

Let $\asgnm$ be the assignment on the left of \Cref{fig:swapcycle}. 
The only beneficial swap is between agents $a_3$ and $b_3$. 
Then, in $\asgnm^{a_3 \leftrightarrow b_3}$, the only beneficial swap is between agents $a_2$ and~$b_2$. 
This leads to an assignment where the only beneficial swap is between agents $a_1$ and~$b_1$. 
After these three swaps, we reach the same assignment as $\asgnm$ (up to rotation of the topology graph). 
On the other hand, one can check that the assignment $\asgnm'$ on the right of \Cref{fig:swapcycle} is swap stable.
\end{proof}

\end{document}